\documentclass[12pt]{article}
 
\usepackage{microtype}

\usepackage[margin=1in]{geometry}
\usepackage{amsmath}
\usepackage{amsfonts}
\usepackage{tikz}
\usepackage[utf8]{inputenc}
\usepackage[english]{babel}
\usepackage{amssymb}
\usepackage{url,hyperref}
\usepackage{wrapfig}
\usepackage{graphicx}
\usepackage{caption}
\usepackage{lineno}
\usepackage{subcaption}
\usepackage{amsthm}

\usepackage{float}

\usepackage{algpseudocode}
\usepackage[linesnumbered,ruled,vlined]{algorithm2e}
\usepackage[]{algorithm2e}

\newcommand\ForAuthors[1]
 {\par\smallskip                     
  \begin{center}
   \fbox
   {\parbox{0.9\linewidth}
    {\raggedright\sc--- #1}
   }
  \end{center}
  \par\smallskip                     %
 }

\newcommand\dPk{d_{P,k}}
\newcommand\dPkk{d_{P,k}}
\newcommand\dPkkk{d_{P,2k}}

\newcommand\dPkkp{d_{P,k'}}
\newcommand\dPkpkp{d_{P',k'}}

\newcommand\R{\mathbb{R}}

\newcommand\X{\mathbb{X}}

\newcommand\dX[2]{d_\X(#1,#2)}

\newcommand\Dgm[1]{\mathrm{Dgm}(#1)}
\newcommand\wfs{\mathrm{wfs}}

\newcommand\ensmp[1]{$#1$-noisy sample}
\newcommand\uensmp[2]{uniform $(#1,#2)$-noisy sample}
\newcommand\luensmp[2]{uniform $(#1,#2)$-adaptive noisy sample}
\newcommand\aensmp[1]{$#1$-adaptive noisy sample}
\newcommand\wuensmp[2]{weak uniform $(#1,#2)$-noisy sample} 

\newtheorem{theorem}{Theorem}[section]
\newtheorem{definition}[theorem]{Definition}
\newtheorem{lemma}[theorem]{Lemma}
\newtheorem{claim}[theorem]{Claim}
\newtheorem{corollary}[theorem]{Corollary}
\newtheorem{proposition}[theorem]{Proposition}




\newcommand{\outputP}		{Q}

\newcommand{\intermedP}      	{Q}
\definecolor{darkred}{rgb}{1, 0.1, 0.3}
\definecolor{darkblue}{rgb}{0.1, 0.1, 1}

\newcommand{\denselist}{\itemsep 0pt\parsep=1pt\partopsep 0pt}
\newcommand{\adH}			{\delta_H^f}

\newcommand{\myparagraph}[1]	{{\vspace*{0.1in}\noindent{\bf #1~}}}

\newcounter{magicrownumbers}
\newcommand\rownumber{\stepcounter{magicrownumbers}\arabic{magicrownumbers}}



\bibliographystyle{plain}

\title{Declutter and Resample: Towards parameter free denoising}
\author{Micka\"{e}l Buchet\thanks{Advanced Institute for Materials Research, Tohoku University. \texttt{mickael.buchet@m4x.org}}, Tamal K. Dey\thanks{Department of Computer Science and Engineering, The Ohio State University. \texttt{tamaldey, yusu@cse.ohio-state.edu, wang.6195@osu.edu}}, Jiayuan Wang\footnotemark[2], Yusu Wang\footnotemark[2]}

\date{}

\begin{document}

\maketitle

\begin{abstract}
In many data analysis applications the following scenario is commonplace:
we are given a point set that is supposed to sample a hidden ground truth
$K$ in a metric space,
but it got corrupted with noise so that some of the 
data points lie far away from $K$ creating 
outliers also termed as {\em ambient noise}. One of the main
goals of denoising algorithms is to eliminate such noise 
so that the curated data lie within a bounded Hausdorff distance
of $K$. Popular denoising approaches such as deconvolution and thresholding 
often require the user to set several parameters and/or to choose an appropriate
noise model while guaranteeing only asymptotic convergence. 
Our goal is to lighten this burden as much as possible while
ensuring theoretical guarantees in all cases.

Specifically, first, we propose a simple denoising algorithm that requires only
a single parameter but provides a theoretical guarantee on the quality of the output on general input points. 
We argue that this single parameter cannot be avoided. 
We next present 
a simple algorithm that avoids even this parameter by paying
for it with a slight strengthening of the sampling condition on the input points which is not unrealistic.
We also provide some preliminary empirical evidence that our algorithms
are effective in practice. 
\end{abstract}

\section{Introduction}
\label{sec:intro}

Real life data are almost always corrupted by noise.
Of course, when we talk about noise, we
implicitly assume that the data sample
a hidden space called the {\em ground truth} with respect
to which we measure the extent and type of noise.
Some data can lie far away from the ground truth
leading to ambient noise. Clearly, the data density needs to be higher 
near the ground truth if signal has to prevail over noise. 
Therefore, a worthwhile goal of a denoising algorithm is to curate the data, 
eliminating the ambient noise while retaining most of the subset 
that lies within a bounded distance from the ground truth.
 
In this paper we are interested in removing ``outlier''-type of noise from input data. 
Numerous algorithms have been developed for this problem in many different application fields; see e.g \cite{hodge2004survey,zhang2013advancements}. 
There are two popular families of denoising/outlier detection approaches: 
Deconvolution and Thresholding.
Deconvolution methods rely on the knowledge of a generative noise model for the data. 
For example, the algorithm may assume that the input data has 
been sampled according to a probability measure obtained by convolving 
a distribution such as a Gaussian~\cite{dpinsM} with a measure 
whose support is the ground truth. 
Alternatively, it may assume that the data is generated according 
to a probability measure with a small Wasserstein distance to a 
measure supported by the ground truth~\cite{dwmgiCCDM}.
The denoising algorithm attempts to cancel the noise by 
deconvolving the data with the assumed model.  


A deconvolution algorithm requires the knowledge of the generative model 
and at least a bound on the parameter(s) involved, such as the standard deviation of 
the Gaussian convolution or the Wasserstein distance.
Therefore, it requires at least one parameter as well as the knowledge of the noise type.
The results obtained in this setting are often asymptotic, that is,
theoretical guarantees hold in the limit when the number of
points tends to infinity. 

The method of thresholding relies on a density estimation 
procedure~\cite{desdaS} by which it estimates
the density of the input locally. 
The data is cleaned, either by
removing points where density is lower than a threshold~\cite{dstD},
or moving them from such areas toward higher densities using
gradient-like methods such as mean-shift~\cite{msCM,Ozertem11}.
It has been recently used for uncovering geometric information such as 
one dimensional features~\cite{pdgfGPVW}.
In~\cite{MEscalar}, the \emph{distance to a measure}~\cite{gipmCCM} that can also be seen as a density estimator~\cite{wknndegiBCCDR} 
has been exploited for thresholding.
Other than selecting a threshold, 
these methods require the choice of a density estimator.
This estimation requires at least one additional parameter, either to define a kernel, or a mass to define the distance to a measure.
In the case of a gradient based movement of the points, the nature of the 
movement also has to be defined by fixing the length of a step 
and by determining the terminating condition of the algorithm.

\myparagraph{New work.} 
In above classical methods, the user is burdened with making
several choices such as fixing an appropriate noise model, selecting
a threshold and/or other parameters.
Our main goal is to lighten this burden as much as possible.
First, we show that denoising with a single
parameter is possible and this parameter is in some sense unavoidable
unless a stronger sampling condition on the input points is assumed. 
This leads to our main 
algorithm that is completely free of any parameter
when the input satisfies a stronger sampling condition which 
is not unrealistic. 

Our first algorithm \emph{Declutter algorithm} uses a single parameter 
(presented in Section \ref{sec:decluttering}) 
and assumes a very general sampling condition which is
not stricter than those for the classical noise models mentioned
previously because it holds with high probability for
those models as well. 
Additionally, our sampling condition also allows ambient noise and 
locally adaptive samplings.
Interestingly, we note that our Declutter algorithm is in fact a variant of the approach proposed in \cite{chan2006spanners} to construct the so-called {\it $\varepsilon$-density net}. Indeed, as we point out in Appendix D 
, the procedure of \cite{chan2006spanners} can also be directly used for denoising purpose and one can obtain an analog of Theorems \ref{th:sparsifyingGuarantees} and \ref{thm:adaptiveHaus} in this paper for the resulting density net. 

\begin{figure}[!ht]
\centering
\includegraphics[height=.2\textwidth]{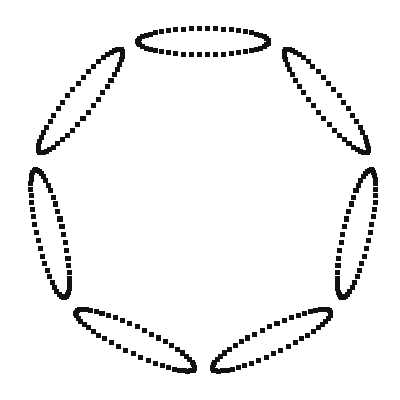}
\includegraphics[height=.2\textwidth]{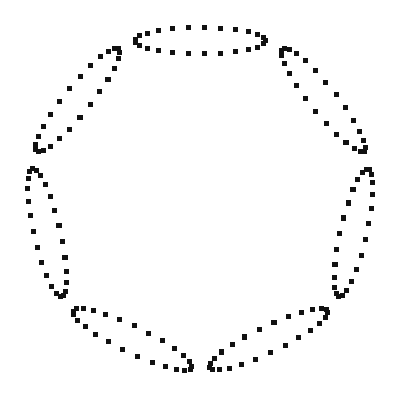}
\includegraphics[height=.2\textwidth]{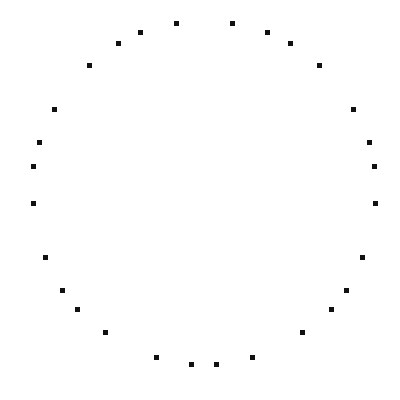}
\includegraphics[height=.2\textwidth]{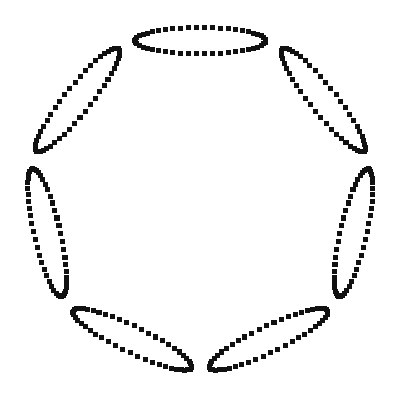}
\vspace*{-0.15in}\caption{{
{\small From left to right: the input sample,the output of Algorithm {\sf Declutter} when $k=2$,  the output of Algorithm {\sf Declutter} when $k=10$,the output of Algorithm {\sf ParfreeDeclutter}. }}\label{fig:scalenecessary}}
\end{figure}

Use of a parameter in the denoising process is unavoidable in some sense, unless there are other assumptions about the hidden space. 
This is illustrated by the example
in Figure~\ref{fig:scalenecessary}.
Does the sample here represent a set of small loops or one big circle?
The answer depends on the scale at which we examine the data.
The choice of a parameter may represent this choice of the scale \cite{boissonnat2009manifold, DGGZ03}.
To remove this parameter, one needs other conditions for either the hidden space itself or for the sample, say by assuming that the data has some uniformity. 
Aiming to keep the sampling restrictions as minimal  
as possible, we show that it is sufficient 
to assume the homogeneity in data \emph{only on or close to} the ground truth
for our second algorithm which requires no input parameter.

Specifically, the parameter-free algorithm presented in Section \ref{sec:parafree} 
relies on an iteration that intertwines our decluttering algorithm 
with a novel resampling procedure.
Assuming that the sample is sufficiently dense and somewhat uniform near the ground truth
at scales beyond a particular scale $s$, our algorithm selects a subset of 
the input point set that is close to the ground truth 
without requiring any input from the user. 
The output maintains the quality at scale $s$ even though the algorithm has no explicit knowledge of this parameter. See Figure \ref{fig:multiscale_square} for an example. 

\begin{figure}[!ht]
\centering
\includegraphics[height=.24\textwidth]{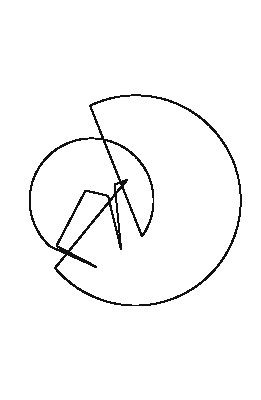}
\includegraphics[height=.24\textwidth]{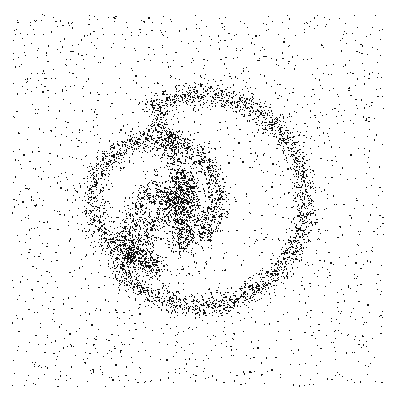}
\includegraphics[height=.24\textwidth]{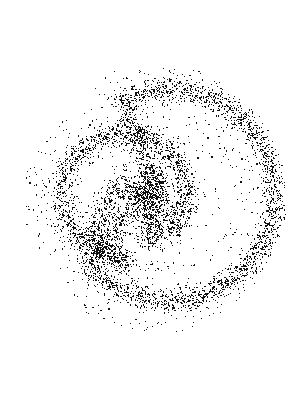}
\includegraphics[height=.24\textwidth]{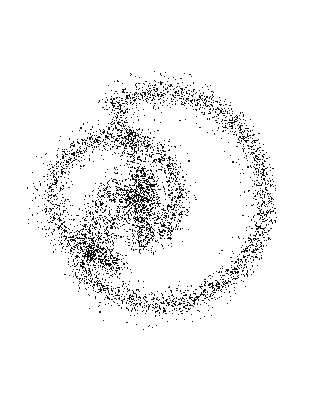}
\includegraphics[height=.24\textwidth]{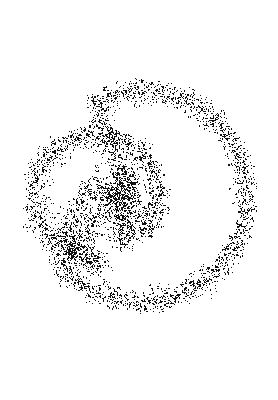}
\vspace*{-0.15in}\caption{{
{\small From left to right: the ground truth, the noisy input samples ($\sim 7000$ points around the ground truth and 2000 ambient noise points), two intermediate steps of our iterative parameter-free denoising algorithm and the final output.}}\label{fig:multiscale_square}}
\end{figure}

All missing details from this extended abstract can be found in the appendix. In addition, in Appendix C, 
we show how the denoised data set can be used for homology inference. In Appendix E, we provide various preliminary experimental results of our denoising algorithms. 

\myparagraph{Remark.} 
Very recently, Jiang and Kpotufe proposed a consistent algorithm for estimating the so-called modal-sets with also only one parameter \cite{jiang2016modal}. The problem setup and goals are very different: In their work, they assume that input points are sampled from a density field that is locally maximal and constant on a compact domain. The goal is to show that as the number of samples $n$ tends to infinity, such domains (referred to as modal-sets in their paper) can be recovered, and the recovered set converges to the true modal-sets under the Hausdorff distance. We also note that our Declutter algorithm allows adaptive sampling as well.  

\section{Preliminaries}\label{sec:preliminaries}
 
We assume that the input is a set of points $P$ sampled around a hidden compact set $K$, the ground truth, in a metric space $(\X, d_\X)$. For simplicity, in what follows the reader can assume $\X = \mathbb{R}^d$ with $P, K \subset \X = \mathbb{R}^d$, and the metric $d_\X$ of $\X$ is simply the Euclidean distance. 
Our goal is to process $P$ into another point set $\outputP$ guaranteed to be Hausdorff close to $K$ and hence to be a better sample of the hidden space $K$ for further applications.
By Hausdorff close, we mean that the (standard) \emph{Hausdorff distance} $\delta_H (\outputP, K)$ between $\outputP$ and $K$, defined as the infimum of $\delta$ such that $\forall p\in \outputP, d_\X (p, K) \le \delta$ and $\forall x \in K, d_\X(x, Q) \le \delta$, is bounded.
Note that ambient noise/outliers can incur a very large Hausdorff distance. 

The quality of the output point set $\outputP$ obviously
depends on the ``quality'' of input points $P$, which we 
formalize via the language of \emph{sampling conditions}.
We wish to produce good quality output for inputs satisfying 
much weaker sampling conditions than a bounded Hausdorff distance. 
Our sampling condition is based on the sampling condition introduced and studied in~\cite{MEthesis,MEscalar}; see Chapter 6 of \cite{MEthesis} for discussions on the relation of their sampling condition with some of the common noise models such as Gaussian. 
Below, we first introduce a basic sampling condition deduced from the
one in \cite{MEthesis,MEscalar}, and then introduce its extensions incorporating adaptivity and uniformity.  

\myparagraph{Basic sampling condition.}
Our sampling condition is built upon the concept of $k$-distance, 
which is a specific instance of a broader concept
called \emph{distance to a measure} introduced in~\cite{gipmCCM}. 
The $k$-distance $\dPkk(x)$ is simply the root mean of square distance from $x$ to 
its $k$-nearest neighbors in $P$. The averaging makes it robust to outliers. 
One can view $\dPkk(x)$ as capturing the inverse of the 
density of points in $P$ around $x$ \cite{wknndegiBCCDR}.  
As we show in Appendix D
, this specific form of $k$-distance is not essential -- Indeed, several of its variants can replace its role in the definition of sampling conditions below, and our Declutter algorithm will achieve similar denoising guarantees. 

\begin{definition}[\cite{gipmCCM}]\label{def:kdist}
Given a point $x \in \X$, let $p_i(x) \in P$ denote the $i$-th nearest neighbor of $x$ in $P$. The \emph{$k$-distance} to a point set $P\subseteq \X$ is 
$\dPkk(x)=\sqrt{\frac{1}{k}\sum_{i=1}^k\dX{x}{p_i(x)}^2}$.
\end{definition}
\begin{claim}[\cite{gipmCCM}]\label{claim:dPkkLip}
$\dPkk(\cdot)$ is 1-Lipschitz, 
i.e. $|\dPkk(x)-\dPkk(y)|\leq \dX{x}{y}$ for $\forall (x,y)\in \X\times\X$.
\end{claim}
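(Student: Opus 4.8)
The plan is to reduce the claim to the Minkowski (triangle) inequality for the $\ell^2$ norm on $\mathbb{R}^k$, deliberately avoiding gradients or barycenters so that the argument survives in an arbitrary metric space $(\X,d_\X)$. First I would record the ordering fact underlying Definition~\ref{def:kdist}: since $p_1(x),\dots,p_k(x)$ realize the $k$ smallest distances from $x$ to points of $P$, the sum $\sum_{i=1}^k \dX{x}{p_i(x)}^2$ equals the \emph{minimum} of $\sum_{p\in S}\dX{x}{p}^2$ over all $k$-element subsets $S\subseteq P$. In particular, for \emph{any} fixed $k$-subset $S$ we obtain the suboptimality bound $\dPkk(x)\le \sqrt{\tfrac1k\sum_{p\in S}\dX{x}{p}^2}$, which is the only consequence of the definition I will need.

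Next, fix $x,y\in\X$ and take $S=\{p_1(y),\dots,p_k(y)\}$ to be the set of $k$ nearest neighbors of $y$, so that $\dPkk(y)=\sqrt{\tfrac1k\sum_{p\in S}\dX{y}{p}^2}$ holds with equality, while the suboptimality bound gives $\dPkk(x)\le\sqrt{\tfrac1k\sum_{p\in S}\dX{x}{p}^2}$. I would then view the two families $\bigl(\dX{x}{p}\bigr)_{p\in S}$ and $\bigl(\dX{y}{p}\bigr)_{p\in S}$, each scaled by $1/\sqrt{k}$, as vectors $u,v\in\mathbb{R}^k$, so that $\|u\|_2$ dominates $\dPkk(x)$ and $\|v\|_2=\dPkk(y)$. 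The reverse triangle inequality in the metric, $|\dX{x}{p}-\dX{y}{p}|\le \dX{x}{y}$, bounds each coordinate of $u-v$, whence $\|u-v\|_2\le\sqrt{\tfrac1k\sum_{p\in S}\dX{x}{y}^2}=\dX{x}{y}$.

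Finally I would assemble these pieces with Minkowski's inequality $\|u\|_2\le\|v\|_2+\|u-v\|_2$:
\[
\dPkk(x)\le\|u\|_2\le\|v\|_2+\|u-v\|_2\le\dPkk(y)+\dX{x}{y},
\]
giving $\dPkk(x)-\dPkk(y)\le \dX{x}{y}$. Swapping the roles of $x$ and $y$ yields the matching bound, and hence $|\dPkk(x)-\dPkk(y)|\le \dX{x}{y}$.

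I do not expect a serious obstacle; the one thing to handle carefully is the bookkeeping around the differing neighbor sets. The naive move of comparing $x$ and $y$ through their \emph{own} neighbor sets $S_x$ and $S_y$ entangles two distinct index sets and stalls; the clean trick is to evaluate \emph{both} points against the single set $S$ of $y$'s neighbors, paying for $x$'s mismatch only through the suboptimality inequality. For the special case $\X=\mathbb{R}^d$ there is an alternative route via the variance decomposition $\tfrac1k\sum_{p\in S}\|x-p\|^2=\|x-c\|^2+\sigma_S^2$ around the barycenter $c$ of $S$, from which $x\mapsto\sqrt{\|x-c\|^2+\sigma_S^2}$ is seen to have gradient of norm at most $1$, making each such function $1$-Lipschitz and hence their pointwise minimum $\dPkk$ as well; but since this exploits the linear structure, I would prefer the purely metric argument above.
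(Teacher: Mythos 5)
Your proof is correct. Note that the paper itself does not prove this claim --- it is stated as a citation to the distance-to-a-measure paper \cite{gipmCCM}, so you are supplying an argument the authors omit. Your route is the standard elementary one and is essentially how the result is obtained in the cited reference, where $d_{P,k}$ is recognized as (proportional to) a distance function in the Wasserstein space $W_2$ and hence $1$-Lipschitz; your ``minimum over $k$-subsets'' characterization is the finite, citation-free version of that observation. The three ingredients all check out: the suboptimality bound $\dPkk(x)\le\sqrt{\tfrac1k\sum_{p\in S}\dX{x}{p}^2}$ for the particular subset $S$ of $y$'s neighbors, the coordinatewise reverse triangle inequality giving $\|u-v\|_2\le\dX{x}{y}$, and Minkowski to assemble them; symmetry in $x$ and $y$ finishes it. You are also right to flag the bookkeeping pitfall --- comparing each point against its own neighbor set is the natural first move and it stalls; evaluating both points against the single set $S$ is exactly the trick that makes the proof go through, and it works in any metric space, which is what the paper needs since $(\X,d_\X)$ is not assumed Euclidean. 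The only cosmetic caveat is that when distances are tied the $k$ nearest neighbors need not be unique, but the minimum value over $k$-subsets is still well defined, so nothing breaks.
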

All our sampling conditions are dependent on 
the choice of $k$ in the $k$-distance, which we reflect
by writing $\epsilon_k$ instead of $\epsilon$ in the sampling conditions below. 
The following definition is related to the sampling condition proposed in \cite{MEscalar}.

\begin{definition}\label{def:sampling}
Given a compact set $K\subseteq\X$ 
and a parameter $k$, a point set $P$ is an \emph{\ensmp{\epsilon_k}} of $K$ if
\begin{enumerate}\denselist
\item $\forall x\in K,\ \dPkk(x)\leq\epsilon_k$
\item $\forall x\in\X,\ \dX{x}{K}\leq\dPkk(x)+\epsilon_k$
\end{enumerate}
\end{definition}
 
Condition $1$ in Definition \ref{def:sampling} means that the density of $P$ on the compact set $K$ is bounded
from below, that is, $K$ is well-sampled by $P$. Note, we only require $P$ to be a dense enough sample of $K$ -- there is no uniformity requirement in the sampling here. 

Condition $2$ implies that a point with low $k$-distance, 
i.e. lying in high density region, has to be close to $K$. 
Intuitively, $P$ can contain outliers which can form small clusters 
but their density can not be significant compared to the density of 
points near the compact set $K$. 

Note that the choice of $\epsilon_k$ always exists for a bounded point set $P$, no matter what value of $k$ is -- For example, one can set $\epsilon_k$ to be the diameter of point set $P$. However, the smallest possible choice of $\epsilon_k$ to make $P$ an $\epsilon_k$-noisy sample of $K$ depends on the value of $k$. We thus use $\epsilon_k$ in the sampling condition to reflect this dependency. 

%

In Section~\ref{sec:parafree}, we develop  
a parameter-free denoising algorithm. 
As Figure~\ref{fig:scalenecessary}
illustrates, it is necessary to have a 
mechanism to remove potential ambiguity about the ground truth. 
We do so by using a stronger sampling condition to enforce some degree of uniformity: 

\begin{definition}\label{def:unisample}
Given a compact set $K\subseteq \X$ and a parameter $k$, a point set $P$ is a \uensmp{\epsilon_k}{c} of $K$ if $P$ is an \ensmp{\epsilon_k} of $K$ (i.e, conditions of Def. \ref{def:sampling} hold) and
\begin{enumerate}
\setcounter{enumi}{2}
\item $\forall p\in P,\ \dPkk(p)\geq\frac{\epsilon_k}{c}.$
\end{enumerate}
\end{definition}


It is important to note that the lower bound in Condition $3$ enforces that the sampling needs to be homogeneous -- i.e, $\dPkk(x)$ is bounded both from above and from below by some constant factor of $\epsilon_k$ -- \emph{only for points on and around} the ground truth $K$. This is because condition $1$ in Def. \ref{def:sampling} is only for points from $K$, and condition 1 together with the $1$-Lipschitz property
of $\dPk$ (Claim \ref{claim:dPkkLip}) leads to an upper bound of $O(\epsilon_k)$ for $\dPkk(y)$ only
for points $y$ within $O(\epsilon_k)$ distance to $K$. 
There is no such upper bound on $\dPk$ for noisy points far away from $K$ and \emph{thus no 
homogeneity/uniformity requirements for them}. 

\myparagraph{Adaptive sampling conditions.}
The sampling conditions given above are global, meaning that
they do not respect the ``features" of the ground truth.
We now introduce an adaptive version of the sampling conditions
with respect to a 
feature size function. 

\begin{definition}\label{def:lfs}
Given a compact set $K\subseteq\X$, a feature size function $f:K\rightarrow \mathbb{R}^+\cup\{0\}$ is a $1$-Lipschitz non-negative real function on $K$.
\end{definition}

\noindent Several feature sizes exist in the
literature of manifold reconstruction and topology
inference, including the \emph{local feature size}~\cite{AB99}, 
\emph{local weak feature size}, \emph{$\mu$-local weak feature size}~\cite{stcsesCCL} or \emph{lean set feature size}~\cite{sdlutimDDW}.
All of these functions describe how complicated a compact set 
is locally, and therefore indicate how dense a sample 
should be locally so that information can be inferred faithfully.
Any of these functions can be used as a feature size function 
to define the adaptive sampling below. 
Let $\bar p$ denote any one of the nearest points of $p$ in $K$.
Observe that, in general, a point $p$ can have multiple such nearest
points.

\begin{definition}\label{def:adasample}
Given a compact set $K\subseteq\X$, a feature size function $f$ of $K$, and a parameter $k$, a point set $P$ is a \emph{\luensmp{\epsilon_k}{c} of $K$} if
\begin{enumerate}\denselist
\item$\forall x\in K,\ \dPk(x)\leq \epsilon_k f(x)$.
\item$\forall y\in\X,\ \dX{y}{K}\leq\dPk(y)+\epsilon_k f(\bar y)$.
\item $\forall p\in P,\ \dPk(p)\geq \frac{\epsilon_k}{c} f(\bar p)$.
\end{enumerate}
We say that $P$ is \emph{an \aensmp{\epsilon_k} of $K$} if only conditions 1 and 2 above hold. 
\end{definition}
We require that the feature size is {\em positive everywhere} as otherwise, the sampling condition may require infinite samples in some cases. 
We also note that the requirement of the feature size function being 1-Lipschitz is only needed to provide the theoretical guarantee for our second parameter-free algorithm. 


\section{Decluttering}
\label{sec:decluttering}

We now present a simple yet effective denoising algorithm which takes as input a set of points $P$ and a parameter $k$, and outputs a set of points $\outputP \subseteq P$ with the following guarantees: If $P$ is an \ensmp{\epsilon_k} of a 
hidden compact set $K\subseteq \X$, then the output $\outputP$ lies close to $K$ in the \emph{Hausdorff distance} (i.e, within a small tubular neighborhood of $K$ and outliers are all eliminated). 
The theoretical guarantee holds for both the non-adaptive and the adaptive cases, as stated in Theorems \ref{th:sparsifyingGuarantees} and \ref{thm:adaptiveHaus}. 
  
\RestyleAlgo{boxruled}
\LinesNumbered
\begin{algorithm}[hptb]
\caption{{\sf Declutter}($P$,$k$) \label{alg:sparsification}}
\DontPrintSemicolon
\KwData{Point set $P$, parameter $k$}
\KwResult{Denoised point set $\outputP$}
\Begin{
sort P such that $\dPkk(p_1)\leq\dots\leq\dPkk(p_{|P|})$.\;

$Q_0\longleftarrow\emptyset$ \;

\For{$i\longleftarrow 1$ to $|P|$}{
\If{$Q_{i-1}\cap B(p_i,2\dPkk(p_i))=\emptyset$}{$Q_i=Q_{i-1}\cup \{p_i\}$}
{\bf else}~~ $Q_i=Q_{i-1}$
}
$Q\longleftarrow Q_n$\;
}
\end{algorithm}

As the $k$-distance behaves like the inverse of density,
points with a low $k$-distance are expected
to lie close to the ground truth $K$.
A possible approach is to fix a threshold $\alpha$ 
and only keep the points with a $k$-distance less than $\alpha$.
This thresholding 
solution requires an additional parameter $\alpha$. 
Furthermore, very importantly, such a thresholding approach does not easily work for adaptive samples, where the density in an area with large feature size can be lower than the density 
of noise close to an area with small feature size.

Our algorithm {\sf Declutter}($P$,$k$), presented in {\bf Algorithm 1}, works around these problems by considering 
the points in the order of increasing values of their $k$-distances and using a pruning step: 
Given a point $p_i$, if there exists a point $q$ deemed better in its vicinity, i.e., $q$ has smaller $k$-distance and has been previously selected $(q\in Q_{i-1})$, then $p_i$ is not necessary to describe the ground truth 
and we throw it away.
Conversely, if no point close to $p_i$ has already been selected, 
then $p_i$ is meaningful and we keep it.
The notion of ``closeness'' or ``vicinity'' is defined using the $k$-distance, 
so $k$ is the only parameter.
In particular, the ``vicinity" of a point $p_i$ is defined as the metric ball $B(p_i, 2\dPkk(p_i))$; observe that
this radius is different for different points, and the radius of the ball is larger for outliers. 
Intuitively, the radius $2\dPkk(p_i)$ of the ``vicinity'' around $p_i$  
can be viewed as the length we have to go over to reach the hidden domain with certainty. So, bad points have a larger ``vicinity''.
We remark that this process is related to the construction of the ``density net'' introduced in \cite{chan2006spanners}, which we discuss more in Appendix D 
. 

\begin{wrapfigure}{r}{0.27\textwidth}
\vspace*{-0.15in}
\begin{center}
\includegraphics[width=.30\textwidth]{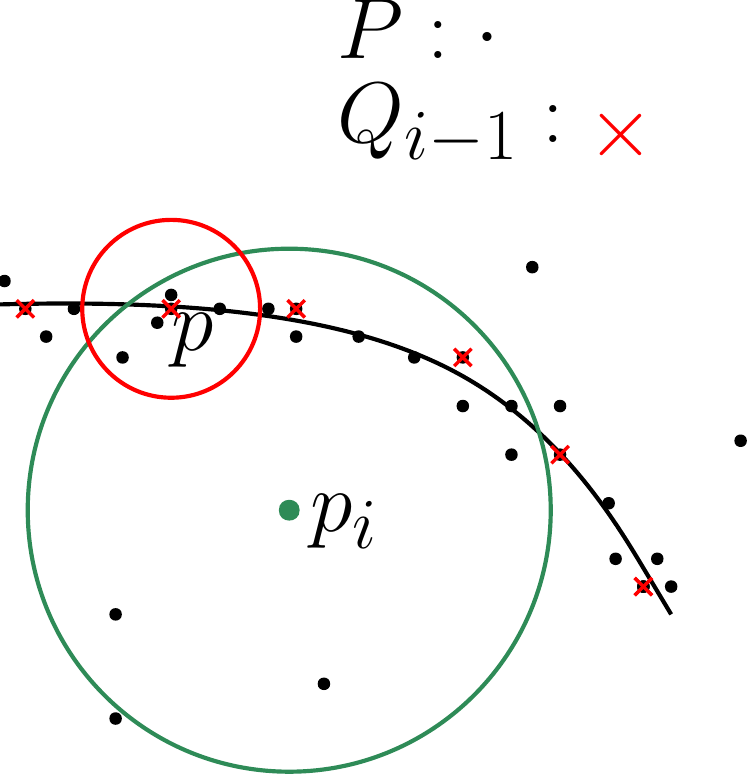}
\end{center}
\vspace*{-0.15in}
\caption{Declutter.
\label{fig:declutter_algorithm}}
\vspace*{-0.15in}
\end{wrapfigure}
See Figure \ref{fig:declutter_algorithm} on the right for an artificial example, where the black points are input points, and red crosses are in the current output $Q_{i-1}$. Now, at the $i$th iteration, suppose we are processing the point $p_i$ (the green point). Since within the vicinity of $p_i$ there is already a good point $p$, we consider $p_i$ to be not useful, and remove it. 
Intuitively, for an outlier $p_i$, it has a large $k$-distance and hence a large vicinity. As we show later, our $\epsilon_k$-noisy sampling condition ensures that this vicinity of $p_i$ reaches the hidden compact set which the input points presumably sample. Since points around the hidden compact set should have higher density, there should be a good point already chosen in $Q_{i-1}$. 
Finally, it is also important to note that, contrary to many common sparsification procedures, our {\sf Declutter} algorithm removes a noisy point because it has a good point within its vicinity, and {\it not because} it is within the vicinity of a good point. For example, in Figure \ref{fig:declutter_algorithm}, the red points such as $p$ have small vicinity, and $p_i$ is not in the vicinity of any of the red point. 

In what follows, we will make this intuition more concrete. 
We first consider the simpler non-adaptive case where $P$ is an \ensmp{\epsilon_k} of $K$. 
We establish that $\outputP$ and the ground truth $K$ are Hausdorff close in the following two lemmas. 
%
The first lemma says that the ground truth $K$ is well-sampled (w.r.t. $\epsilon_k$) by the output $Q$ of {\sf Declutter}. 
\begin{lemma}\label{lem:globalDensity}
Let $Q\subseteq P$ be the output of {\sf Declutter}($P$,$k$) where $P$
is an \ensmp{\epsilon_k} of a compact set $K\subseteq \X$. Then,
for any $x\in K$, there exists $q\in \outputP$ such that $\dX{x}{q}\leq  5\epsilon_k$.
\end{lemma}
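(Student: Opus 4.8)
The plan is to show that every point $x \in K$ has a nearby selected point in $Q$, by tracking what happens to the nearest neighbor of $x$ during the course of the algorithm. The key tool is Condition~1 of the sampling: since $x \in K$, we have $\dPkk(x) \le \epsilon_k$. By definition of the $k$-distance, the nearest neighbor $p_1(x) \in P$ of $x$ satisfies $\dX{x}{p_1(x)} \le \dPkk(x) \le \epsilon_k$, so $x$ has a genuine input point within $\epsilon_k$. The only way $x$ could fail to have a selected point nearby is if every input point close to $x$ was \emph{discarded} by the pruning step; so the heart of the argument is to control where the ``blocker'' that caused such a discard must lie.

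First I would fix $x \in K$ and let $p \in P$ be its nearest input point, so $\dX{x}{p} \le \epsilon_k$. Consider the iteration $i$ at which $p = p_i$ is processed. Either $p$ is added to $Q$, in which case we are done with $\dX{x}{p} \le \epsilon_k \le 5\epsilon_k$; or $p$ is discarded because there exists $q \in Q_{i-1}$ with $q \in B(p, 2\dPkk(p))$, i.e. $\dX{p}{q} \le 2\dPkk(p)$. The point to exploit here is that $q$ was processed \emph{before} $p$, so by the sorting order $\dPkk(q) \le \dPkk(p)$. I would then bound $\dPkk(p)$ from above: since $p$ is the nearest neighbor of $x$ and $\dPkk$ is $1$-Lipschitz (Claim~\ref{claim:dPkkLip}), $\dPkk(p) \le \dPkk(x) + \dX{x}{p} \le \epsilon_k + \epsilon_k = 2\epsilon_k$. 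Actually, since $\dX{x}{p} \le \dPkk(x)$ we can be slightly more careful, but $\dPkk(p) \le 2\epsilon_k$ suffices. Combining, $\dX{x}{q} \le \dX{x}{p} + \dX{p}{q} \le \epsilon_k + 2\dPkk(p) \le \epsilon_k + 4\epsilon_k = 5\epsilon_k$, and $q \in Q_{i-1} \subseteq Q$ is the desired selected point.

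I expect the main obstacle to be getting the constant exactly right and making sure the blocker $q$ is genuinely retained in the final output rather than itself being pruned later. On the second concern, once a point enters $Q_{i-1}$ it stays in $Q$ by construction (the algorithm only ever adds points), so $q \in Q$ is automatic; this is where the design choice of removing a point because a good point lies in \emph{its} vicinity — rather than the reverse — pays off. On the constant, the crude bound $\dPkk(p) \le 2\epsilon_k$ already yields $5\epsilon_k$; I would double-check whether using $\dX{x}{p} \le \dPkk(x) \le \epsilon_k$ together with $\dPkk(p) \le \dPkk(x) + \dX{x}{p}$ can be tightened, but since the stated bound is $5\epsilon_k$ the straightforward chaining suffices and no delicate optimization is needed. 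The only subtlety worth stating explicitly is the use of the sorted processing order to guarantee $\dPkk(q) \le \dPkk(p)$, though in fact we do not even need that inequality for this particular lemma — the bound on $\dPkk(p)$ via Lipschitzness is what drives the estimate.
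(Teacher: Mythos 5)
Your proof is correct and follows essentially the same route as the paper's: take the nearest neighbor $p$ of $x$ (within $\epsilon_k$ by Condition 1), and if $p$ is pruned, use the blocker $q\in Q_{i-1}\cap B(p,2\dPkk(p))$ together with the $1$-Lipschitz bound $\dPkk(p)\le \dPkk(x)+\dX{x}{p}\le 2\epsilon_k$ and the triangle inequality to get $\dX{x}{q}\le 5\epsilon_k$. The paper groups the terms as $2\dPkk(x)+3\dX{x}{p}$ rather than $\epsilon_k+4\epsilon_k$, but the estimate is identical.
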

\begin{proof}
Let $x\in K$. By Condition 1 of Def. \ref{def:sampling}, we have $\dPkk(x) \le \epsilon_k$. This means that the nearest neighbor $p_i$ of $x$ in $P$ satisfies $\dX{p_i}{x}\leq\dPkk(x)\leq\epsilon_k$. 
If $p_i\in \outputP$, then the claim holds by setting $q = p_i$.
If $p_i \notin \outputP$, there must exist $j < i$ with $p_j \in \outputP_{i-1}$ such that $\dX{p_i}{p_j}\leq 2\dPkk(p_i)$. In other words, $p_i$ was removed by our algorithm because $p_j \in Q_{i-1} \cap B(p_i, 2\dPkk(p_i))$. 
Combining triangle inequality with the 1-Lipschitz property of $\dPkk$ (Claim \ref{claim:dPkkLip}), we then have 
$$\dX{x}{p_j} \le \dX{x}{p_i} + \dX{p_i}{p_j}\leq \dX{x}{p_i}+2\dPkk(p_i)\leq 2\dPkk(x)+3\dX{p_i}{x} \le 5\epsilon_k,$$ 
which proves the claim. 
\end{proof}

The next lemma implies that all outliers are removed by our denoising algorithm. 
\begin{lemma}\label{lemma:QtoK}
Let $Q\subseteq P$ be the output of {\sf Declutter}($P$,$k$) where $P$
is an \ensmp{\epsilon_k} of a compact set $K\subseteq \X$. Then,
for any $q\in \outputP$, there exists $x\in K$ such that $\dX{q}{x}\leq 7\epsilon_k.$ 
\end{lemma}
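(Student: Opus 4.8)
The plan is to show that any surviving point $q \in Q$ cannot be too far from $K$, by using the fact that $q$ was kept precisely because no better point occupied its vicinity, combined with Condition 2 of the sampling definition. The natural strategy is to bound $\dPkk(q)$ from above first, and then invoke Condition 2 of Definition~\ref{def:sampling}, which states $\dX{q}{K} \le \dPkk(q) + \epsilon_k$; so if I can show $\dPkk(q) \le 6\epsilon_k$, the claim follows immediately.

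To bound $\dPkk(q)$, I would argue by contradiction or directly exploit the previous lemma. Here is the key idea: Lemma~\ref{lem:globalDensity} guarantees that $K$ is well-covered by $Q$, meaning every point of $K$ has some output point within $5\epsilon_k$. I want to leverage this coverage to show that if $q$ had a large $k$-distance (hence a large vicinity $B(q, 2\dPkk(q))$), then that vicinity would reach $K$, and near $K$ there is already a good point in $Q$ with smaller $k$-distance, which should have triggered the removal of $q$. First I would take $\bar q$, a nearest point of $q$ in $K$, so $\dX{q}{\bar q} = \dX{q}{K}$. By Condition 2, $\dX{q}{\bar q} \le \dPkk(q) + \epsilon_k$. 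The subtle point is to control the interaction between the radius of $q$'s vicinity and the location of the good point guaranteed near $\bar q$.

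The main obstacle I anticipate is the ordering issue: when $q$ was processed, the good point near $K$ must have already been in $Q_{i-1}$ (i.e., selected earlier, with smaller $k$-distance) for it to have removed $q$ — but the coverage guarantee from Lemma~\ref{lem:globalDensity} only speaks about the final set $Q$, not about what was available at the moment $q$ was processed. So I would instead argue directly: since $q$ itself is in $Q$, it was not removed, meaning $Q_{i-1} \cap B(q, 2\dPkk(q)) = \emptyset$ at the time $q$ was added. I would then find a point $p_j$ near $\bar q$ with small $k$-distance; by Condition 1, $\dPkk(\bar q) \le \epsilon_k$, so the nearest neighbor of $\bar q$ in $P$ is within $\epsilon_k$ of $\bar q$ and has $k$-distance at most $2\epsilon_k$ by the $1$-Lipschitz property (Claim~\ref{claim:dPkkLip}). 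Such a point, having small $k$-distance, is processed before $q$ if $\dPkk(q)$ is large, and would lie in $q$'s vicinity if that vicinity is large enough to reach near $\bar q$, contradicting the survival of $q$.

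Concretely, I would assume for contradiction that $\dPkk(q) > 6\epsilon_k$ and derive that the point $p_j$ near $\bar q$ satisfies both $\dPkk(p_j) \le 2\epsilon_k < \dPkk(q)$ (so $p_j$ is processed earlier and, if selected, sits in $Q_{i-1}$; if not selected, some even-earlier point of $Q$ lies within $2\dPkk(p_j) \le 4\epsilon_k$ of it) and $\dX{q}{p_j} \le \dX{q}{\bar q} + \dX{\bar q}{p_j} \le (\dPkk(q)+\epsilon_k) + \epsilon_k$. The goal is to show a point of $Q_{i-1}$ falls inside $B(q, 2\dPkk(q))$, contradicting the fact that $q$ was kept. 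Balancing these distance estimates against the vicinity radius $2\dPkk(q)$ should force $\dPkk(q) \le 6\epsilon_k$, after which Condition 2 yields $\dX{q}{K} \le 7\epsilon_k$, completing the proof.
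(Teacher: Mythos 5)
Your proposal is correct and follows essentially the same route as the paper: both arguments locate a previously selected point of $Q_{i-1}$ inside $B(q,2\dPkk(q))$ to contradict the survival of $q$, using Conditions 1 and 2 together with the $1$-Lipschitz property of $\dPkk$, and both reduce to the same inequality $\dPkk(q)+6\epsilon_k<2\dPkk(q)$. The only real difference is that the paper invokes Lemma~\ref{lem:globalDensity} on the final output directly --- the ordering issue you flag is not actually an obstacle, since the sets $Q_j$ are nested increasing and any point of the final $Q$ with strictly smaller $k$-distance than $q$ already lies in $Q_{i-1}$ --- whereas you inline that coverage argument via the nearest neighbor of $\bar q$ in $P$, which works equally well.
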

\begin{proof}
Consider any $p_i\in P$ and let $\bar p_i$ be one of its nearest
points in $K$.
It is sufficient to show that if
$\dX{p_i}{\bar p_i}>7\epsilon_k$, then $p_i\notin \outputP$ . 

Indeed, by Condition 2 of Def. \ref{def:sampling}, $\dPkk(p_i)\geq \dX{p_i}{\bar p_i}-\epsilon_k>6\epsilon_k$.
By Lemma~\ref{lem:globalDensity}, there exists $q\in \outputP$ such that $\dX{\bar p_i}{q}\leq 5\epsilon_k$. Thus,
$$
\dPkk(q)\leq \dPkk(\bar{p}_i)+\dX{\bar p_i}{q} \leq 6\epsilon_k.
$$
Therefore, $\dPkk(p_i)> 6\epsilon_k\geq \dPkk(q)$ implying that $q\in Q_{i-1}$.
Combining triangle inequality and Condition 2 of Def. \ref{def:sampling}, we have 
$$\dX{p_i}{q}\leq \dX{p_i}{\bar p_i}+\dX{\bar p_i}{q}
\leq \dPkk(p_i)+\epsilon_k+5\epsilon_k
< 2\dPkk(p_i). $$
Therefore, $q\in Q_{i-1}\cap B(p_i,2\dPkk(p_i))$, meaning that $p_i\notin \outputP$. 
\end{proof}

\begin{theorem}\label{th:sparsifyingGuarantees}
Given a point set $P$ which is an \ensmp{\epsilon_k} of a compact 
set $K\subseteq\X$, Algorithm 
{\sf Declutter} returns a set $\outputP\subseteq P$ such that
$\delta_H(K,\outputP)\leq 7\epsilon_k.$
\end{theorem}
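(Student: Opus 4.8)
The plan is to combine the two lemmas that have just been established, since together they bound the Hausdorff distance $\delta_H(K,\outputP)$ directly. Recall that $\delta_H(K,\outputP)$ is the infimum of $\delta$ such that every point of $\outputP$ is within $\delta$ of $K$ \emph{and} every point of $K$ is within $\delta$ of $\outputP$. These are exactly the two symmetric directions captured by Lemmas \ref{lem:globalDensity} and \ref{lemma:QtoK}, so the theorem should follow with essentially no additional work.

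First I would handle the ``$K$ is covered by $\outputP$'' direction. By Lemma \ref{lem:globalDensity}, for every $x \in K$ there exists $q \in \outputP$ with $\dX{x}{q} \le 5\epsilon_k$; in particular $\dX{x}{\outputP} \le 5\epsilon_k \le 7\epsilon_k$ for all $x \in K$. Next I would handle the reverse ``$\outputP$ lies near $K$'' direction using Lemma \ref{lemma:QtoK}: for every $q \in \outputP$ there exists $x \in K$ with $\dX{q}{x} \le 7\epsilon_k$, so $\dX{q}{K} \le 7\epsilon_k$ for all $q \in \outputP$. Taking the larger of the two bounds, both defining conditions of the Hausdorff distance hold with $\delta = 7\epsilon_k$, hence $\delta_H(K,\outputP) \le 7\epsilon_k$.

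There is no real obstacle here: the theorem is a direct packaging of the two lemmas into the definition of Hausdorff distance, and the only ``subtlety'' is the bookkeeping observation that the $5\epsilon_k$ bound from Lemma \ref{lem:globalDensity} is dominated by the $7\epsilon_k$ bound from Lemma \ref{lemma:QtoK}, so the symmetric maximum is $7\epsilon_k$. I would also note in passing that Lemma \ref{lem:globalDensity} guarantees $\outputP$ is nonempty whenever $K$ is, so the Hausdorff distance is well-defined. The entire proof is therefore a two-sentence assembly step, and I would keep it to that length rather than reproving either inequality.
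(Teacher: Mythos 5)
Your proposal is correct and matches the paper's intended argument exactly: the theorem is stated immediately after Lemmas \ref{lem:globalDensity} and \ref{lemma:QtoK} precisely because it is their direct combination via the definition of Hausdorff distance, with the $7\epsilon_k$ bound dominating. Nothing further is needed.
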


Interestingly, if the input point set is uniform then the denoised set is 
also uniform, a fact that turns out to be useful for our parameter-free
algorithm later. 
\begin{proposition}
If $P$ is a \uensmp{\epsilon_k}{c} of a compact set $K\subseteq\X$, then the distance between any pair of points of $\outputP$ is at least $2\frac{\epsilon_k}{c}$.
\end{proposition}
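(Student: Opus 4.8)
The plan is to show that any two distinct points $p, q \in \outputP$ satisfy $\dX{p}{q} \geq 2\frac{\epsilon_k}{c}$ by exploiting both the uniformity Condition 3 of Def. \ref{def:unisample} and the pruning rule of the algorithm. The key observation is that the algorithm only adds a point $p_i$ to the output when its vicinity ball $B(p_i, 2\dPkk(p_i))$ contains no previously selected point. So for any two points in $\outputP$, the later-added one was \emph{not} removed, meaning the earlier one lies outside its vicinity ball.

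First I would take two distinct points $p, q \in \outputP$ and, without loss of generality, assume $q$ was processed after $p$ in the sorted order (so $\dPkk(p) \leq \dPkk(q)$, and $p \in Q_{i-1}$ at the iteration $i$ where $q = p_i$ is considered). Since $q$ was added to the output rather than discarded, the {\bf if} condition of the algorithm must have held, i.e., $Q_{i-1} \cap B(q, 2\dPkk(q)) = \emptyset$. Because $p \in Q_{i-1}$, this forces $p \notin B(q, 2\dPkk(q))$, which gives $\dX{p}{q} > 2\dPkk(q)$.

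Next I would invoke Condition 3 of Def. \ref{def:unisample}, which states that $\dPkk(q) \geq \frac{\epsilon_k}{c}$ for every $q \in P$, hence in particular for $q \in \outputP \subseteq P$. Chaining the two bounds yields $\dX{p}{q} > 2\dPkk(q) \geq 2\frac{\epsilon_k}{c}$, which establishes the claim. The argument is genuinely short because both ingredients — the geometric separation enforced by the pruning step and the density lower bound from uniformity — are directly available.

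I do not anticipate a serious obstacle here; the only point requiring mild care is correctly identifying the processing order. One must be sure to apply the vicinity condition at the iteration when the \emph{later} of the two points is processed, and to use the $k$-distance of that later point (whose vicinity ball is the relevant one), rather than the earlier point's. A careless reader might instead try to use $\dPkk(p)$, but since $\dPkk(p) \leq \dPkk(q)$ this only weakens the bound, so even that route still delivers at least $2\frac{\epsilon_k}{c}$. The strict inequality from the ball being open could be relaxed to the non-strict bound stated in the proposition without issue.
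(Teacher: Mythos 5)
Your proof is correct and follows essentially the same route as the paper's: order the two points by $k$-distance, use the emptiness of the vicinity ball $B(q,2\dPkk(q))$ at the moment the later point $q$ is kept to get $\dX{p}{q}\geq 2\dPkk(q)$, and then apply Condition 3 of Def.~\ref{def:unisample}. The paper's version is just a more terse rendering of the identical argument.
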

\begin{proof}
Let $p$ and $q$ be in $\outputP$ with $p\neq q$ and, 
assume without loss of generality  that $\dPkk(p)\leq\dPkk(q)$.
Then, $p\notin B(q,2\dPkk(q))$ and $\dPkk(q)\geq\frac{\epsilon_k}{c}$.
Therefore, $\dX{p}{q}\geq 2\frac{\epsilon_k}{c}$.
\end{proof}

\paragraph*{Adaptive case.}
Assume the input is an adaptive sample $P\subseteq\X$ with respect to a feature size function $f$.
The denoised point set $\outputP$ may also be adaptive. 
We hence need an adaptive version of the Hausdorff distance denoted $\adH(\outputP, K)$ and defined as the infimum of $\delta$ such that (i) $\forall p \in \outputP$, $d_\X (p, K) \le \delta f(\bar p)$, and (ii) $\forall x \in K, d_\X(x, \outputP) \le \delta f(x)$, where $\bar p$ is a nearest point of $p$ in $K$. 
Similar to the non-adaptive case, we establish that $P$ and output 
$Q$ are Hausdorff close via Lemmas \ref{lem:adaptiveDensity} and \ref{lem:closeness} whose 
 proofs are same as those for Lemmas \ref{lem:globalDensity} and \ref{lemma:QtoK} respectively, but using an adaptive distance w.r.t. the feature size function.  
Note that the algorithm does not need to know what the feature size function $f$ is, hence only one parameter ($k$) remains. 

\begin{lemma}\label{lem:adaptiveDensity}
Let $Q\subseteq P$ be the output of {\sf Declutter}$(P,k)$ where $P$
is an \aensmp{\epsilon_k} of a compact set $K\subseteq \X$. Then,
$\forall x\in K,\exists q\in \outputP,\ \dX{x}{q}\leq  5\epsilon_k f(x) .$
\end{lemma}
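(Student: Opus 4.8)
The plan is to adapt the proof of Lemma~\ref{lem:globalDensity} essentially verbatim, replacing every occurrence of the uniform bound $\epsilon_k$ by its adaptive analogue $\epsilon_k f(\cdot)$ evaluated at the appropriate point of $K$. First I would fix $x \in K$ and invoke Condition~1 of Definition~\ref{def:adasample} to get $\dPk(x) \le \epsilon_k f(x)$, which forces the nearest neighbor $p_i$ of $x$ in $P$ to satisfy $\dX{p_i}{x} \le \dPk(x) \le \epsilon_k f(x)$. If $p_i \in \outputP$ we are done with $q = p_i$; otherwise, as in the non-adaptive case, the algorithm must have discarded $p_i$ because some $p_j \in Q_{i-1}$ lies in $B(p_i, 2\dPk(p_i))$, so $\dX{p_i}{p_j} \le 2\dPk(p_i)$.

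Next I would run the same triangle-inequality-plus-Lipschitz chain as before. Using the $1$-Lipschitz property of $\dPk$ (Claim~\ref{claim:dPkkLip}) we have $\dPk(p_i) \le \dPk(x) + \dX{x}{p_i} \le 2\dPk(x)$, since $\dX{x}{p_i} \le \dPk(x)$. Then
$$\dX{x}{p_j} \le \dX{x}{p_i} + \dX{p_i}{p_j} \le \dX{x}{p_i} + 2\dPk(p_i) \le 2\dPk(x) + 3\dX{x}{p_i} \le 5\dPk(x) \le 5\epsilon_k f(x),$$
which is exactly the claimed bound with $q = p_j$. The algebra is literally the computation in Lemma~\ref{lem:globalDensity} with $\epsilon_k$ promoted to $\epsilon_k f(x)$ in the final substitution.

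The one point that deserves care, and which I expect to be the only genuine obstacle, is that the bound produced is in terms of $f(x)$ rather than $f$ at the selected point $q$ or at $\bar q$. Because Condition~1 is anchored at $x \in K$ itself, and every intermediate inequality feeds through distances to $x$ and applications of Lipschitzness that do not require comparing feature sizes at different base points, the feature size that survives is cleanly $f(x)$. This matches the statement of Lemma~\ref{lem:adaptiveDensity}, so no conversion between feature sizes at different points is needed here; the $1$-Lipschitz hypothesis on $f$ is not even invoked for this particular lemma (it becomes relevant only when one must relate $f(\bar q)$ to $f(x)$, as in the companion closeness lemma). I would therefore conclude by observing that the entire derivation goes through unchanged modulo this single substitution, giving $\dX{x}{q} \le 5\epsilon_k f(x)$ as required.
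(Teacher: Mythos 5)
Your proposal is correct and follows essentially the same route as the paper's own proof: anchor everything at $x\in K$ via Condition~1 of Definition~\ref{def:adasample}, use the $1$-Lipschitz property of $\dPk$ to bound $\dPk(p_i)$ by $2\dPk(x)$, and chain the triangle inequality to get $5\epsilon_k f(x)$. Your closing remark that the $1$-Lipschitz hypothesis on $f$ is not needed here (only for relating feature sizes at different base points, as in Lemma~\ref{lem:closeness}) is also consistent with how the paper argues.
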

%

\begin{lemma}\label{lem:closeness}
Let $Q\subseteq P$ be the output of {\sf Declutter}$(P,k)$ where $P$
is an \aensmp{\epsilon_k} of a compact set $K\subseteq \X$. Then,
for $\forall q\in \outputP,\ \dX{q}{\bar q}\leq 7\epsilon_k f(\bar q). $
\end{lemma}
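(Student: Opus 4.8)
The plan is to replay the proof of Lemma \ref{lemma:QtoK} essentially verbatim, substituting the adaptive sampling conditions of Def. \ref{def:adasample} for those of Def. \ref{def:sampling}, and invoking the adaptive density result Lemma \ref{lem:adaptiveDensity} in place of Lemma \ref{lem:globalDensity}. So I would fix an arbitrary $p_i \in P$ with nearest point $\bar p_i \in K$, and show the sufficient claim that if $\dX{p_i}{\bar p_i} > 7\epsilon_k f(\bar p_i)$ then $p_i \notin \outputP$; this immediately yields the stated bound for every surviving point $q \in \outputP$.

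First I would lower-bound the $k$-distance of $p_i$. Condition 2 of Def. \ref{def:adasample} gives $\dX{p_i}{K} \le \dPk(p_i) + \epsilon_k f(\bar p_i)$, and since $\dX{p_i}{K} = \dX{p_i}{\bar p_i} > 7\epsilon_k f(\bar p_i)$ by assumption, this forces $\dPk(p_i) > 6\epsilon_k f(\bar p_i)$. Next I would produce a competing, already-selected point: applying Lemma \ref{lem:adaptiveDensity} at $x = \bar p_i \in K$ yields some $q \in \outputP$ with $\dX{\bar p_i}{q} \le 5\epsilon_k f(\bar p_i)$, and then the $1$-Lipschitz property of $\dPk$ (Claim \ref{claim:dPkkLip}) together with Condition 1 of Def. \ref{def:adasample} at $\bar p_i$ gives $\dPk(q) \le \dPk(\bar p_i) + \dX{\bar p_i}{q} \le 6\epsilon_k f(\bar p_i) < \dPk(p_i)$. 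Because $q$ has strictly smaller $k$-distance, it is processed before $p_i$, so $q \in Q_{i-1}$.

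Finally I would verify that $q$ lies inside the vicinity ball of $p_i$. The triangle inequality and Condition 2 give $\dX{p_i}{q} \le \dX{p_i}{\bar p_i} + \dX{\bar p_i}{q} \le \dPk(p_i) + 6\epsilon_k f(\bar p_i) < 2\dPk(p_i)$, where the last step reuses $6\epsilon_k f(\bar p_i) < \dPk(p_i)$ from the first step. Hence $q \in Q_{i-1} \cap B(p_i, 2\dPk(p_i))$, so the algorithm discards $p_i$, giving $p_i \notin \outputP$ as required.

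The point to watch, rather than a genuine obstacle, is the feature-size bookkeeping: every occurrence of $\epsilon_k$ in the original proof must be replaced by $\epsilon_k f(\bar p_i)$ evaluated at the \emph{same} reference point $\bar p_i$. This alignment holds precisely because (i) Condition 2 for $y = p_i$ already uses $f(\bar p_i)$ by definition of $\bar y$, (ii) Condition 1 is invoked at $\bar p_i$, and (iii) Lemma \ref{lem:adaptiveDensity} is applied exactly at $x = \bar p_i$, so its bound is also scaled by $f(\bar p_i)$. Consequently no comparison of feature sizes at distinct points arises, and in particular the $1$-Lipschitz property of $f$ is not needed here. Together with Lemma \ref{lem:adaptiveDensity}, this lemma delivers the adaptive Hausdorff bound $\adH(\outputP, K) \le 7\epsilon_k$, the adaptive counterpart of Theorem \ref{th:sparsifyingGuarantees}.
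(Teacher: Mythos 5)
Your proof is correct and follows the paper's own argument for Lemma \ref{lem:closeness} essentially verbatim: the same contrapositive claim, the same lower bound on $\dPk(p_i)$ from Condition 2, the same competing point $q$ from Lemma \ref{lem:adaptiveDensity}, and the same vicinity-ball check. Your side remark that all feature-size factors are evaluated at the single reference point $\bar p_i$, so the $1$-Lipschitz property of $f$ is not needed here, is also consistent with the paper's own comment that this property is only required for the parameter-free algorithm.
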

%
%
%

\begin{theorem}\label{thm:adaptiveHaus}
Given an \aensmp{\epsilon_k} $P$ of a 
compact set $K\subseteq \X$ with
feature size $f$, 
Algorithm {\sf Declutter}
returns a sample $\outputP\subseteq P$ of $K$ where $\adH(\outputP, K) \le 7\epsilon_k$. 
\end{theorem}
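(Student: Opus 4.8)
The plan is to reduce the theorem directly to the two preceding adaptive lemmas by unwinding the definition of the adaptive Hausdorff distance. Recall that $\adH(\outputP, K)$ is the infimum of $\delta$ for which two one-sided conditions hold simultaneously: (i) every $p \in \outputP$ satisfies $d_\X(p, K) \le \delta f(\bar p)$, and (ii) every $x \in K$ satisfies $d_\X(x, \outputP) \le \delta f(x)$. So I would simply exhibit a single value of $\delta$ — namely $\delta = 7\epsilon_k$ — for which both conditions are verified, which immediately bounds the infimum by $7\epsilon_k$.

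For condition (ii), I would invoke Lemma \ref{lem:adaptiveDensity}: for every $x \in K$ there exists $q \in \outputP$ with $\dX{x}{q} \le 5\epsilon_k f(x)$. Since $d_\X(x, \outputP)$ is by definition no larger than the distance to this particular $q$, we get $d_\X(x, \outputP) \le 5\epsilon_k f(x) \le 7\epsilon_k f(x)$, so condition (ii) holds comfortably at $\delta = 7\epsilon_k$ (indeed already at $5\epsilon_k$). For condition (i), I would invoke Lemma \ref{lem:closeness}: for every $q \in \outputP$, $\dX{q}{\bar q} \le 7\epsilon_k f(\bar q)$. Because $\bar q$ is by definition a nearest point of $q$ in $K$, we have $d_\X(q, K) = \dX{q}{\bar q}$, and hence $d_\X(q, K) \le 7\epsilon_k f(\bar q)$, which is exactly condition (i) at $\delta = 7\epsilon_k$.

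Taking the larger of the two constants, both defining conditions of $\adH$ hold with $\delta = 7\epsilon_k$, and therefore $\adH(\outputP, K) \le 7\epsilon_k$, as claimed. This mirrors precisely how Theorem \ref{th:sparsifyingGuarantees} is assembled from Lemmas \ref{lem:globalDensity} and \ref{lemma:QtoK} in the non-adaptive setting, the only bookkeeping difference being that each distance bound now carries a feature-size factor $f(\cdot)$ evaluated at the appropriate point ($f(x)$ for the covering direction, $f(\bar q)$ for the closeness direction).

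I expect the theorem statement itself to be essentially immediate once the two lemmas are in hand; the genuine obstacle lies upstream, in establishing Lemmas \ref{lem:adaptiveDensity} and \ref{lem:closeness}. There the triangle-inequality and $1$-Lipschitz arguments used in the non-adaptive proofs must be re-run so that the error terms accumulate as multiples of the \emph{local} feature size rather than a global $\epsilon_k$, which is exactly where the $1$-Lipschitz assumption on $f$ and the adaptive sampling conditions of Definition \ref{def:adasample} are consumed. For the final combination, the only subtlety worth a second look is making sure the two conditions use feature size evaluated at the correct argument (the point of $K$ versus the nearest-point surrogate $\bar q$), so that no hidden distortion factor sneaks in when passing from the lemmas to the definition of $\adH$.
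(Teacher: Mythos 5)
Your proposal is correct and matches the paper exactly: Theorem \ref{thm:adaptiveHaus} is obtained by combining Lemma \ref{lem:adaptiveDensity} (covering direction, with constant $5\epsilon_k$) and Lemma \ref{lem:closeness} (closeness direction, with constant $7\epsilon_k$) against the two defining conditions of $\adH$, exactly as you describe. You are also right that the substance lies in the two lemmas, whose proofs (given in Appendix A) rerun the non-adaptive arguments with the local feature-size factors.
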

%
Again, if the input set is uniform, 
the output remains uniform as stated below. 

\begin{proposition}
Given an input point set $P$ which is an \luensmp{\epsilon_k}{c} of a 
compact set $K\subseteq \X$, the output $Q\subseteq P$ of {\sf Declutter} satisfies
$$\forall (q_i,q_j)\in Q,\ i\neq j\implies \dX{q_i}{q_j}\geq 2\frac{\epsilon_k}{c} f(\bar q_i)$$
\end{proposition}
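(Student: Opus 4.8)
The plan is to mimic the non-adaptive separation proposition verbatim, simply replacing the uniform lower bound $\frac{\epsilon_k}{c}$ by the adaptive lower bound of Condition~3 of Definition~\ref{def:adasample}, namely $\dPkk(p) \ge \frac{\epsilon_k}{c} f(\bar p)$. Fix two distinct output points $q_i, q_j \in \outputP$, and let $q_\ell \in \{q_i, q_j\}$ denote whichever of the two was \emph{added later} by {\sf Declutter}; since the algorithm processes points in increasing order of $k$-distance, $q_\ell$ has the larger $k$-distance, i.e. $\dPkk(q_\ell) \ge \dPkk(q_i)$ and $\dPkk(q_\ell)\ge \dPkk(q_j)$. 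The first step is to exploit the exclusion test: when $q_\ell$ was examined, the other of the two points was already in the current output set, yet $q_\ell$ was still retained. Hence that other point lies outside $B(q_\ell, 2\dPkk(q_\ell))$, which gives the basic separation $\dX{q_i}{q_j} \ge 2\dPkk(q_\ell)$.

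The second step converts this into the feature-size-dependent bound anchored at $q_i$. The key observation is that $q_\ell$ has the maximum $k$-distance of the two points, so $\dPkk(q_\ell) \ge \dPkk(q_i)$ holds regardless of whether $q_i$ is itself the earlier- or later-added point. Chaining this monotonicity with the separation above and then applying Condition~3 to $q_i \in \outputP \subseteq P$ yields
$$\dX{q_i}{q_j} \;\ge\; 2\dPkk(q_\ell) \;\ge\; 2\dPkk(q_i) \;\ge\; 2\frac{\epsilon_k}{c}\, f(\bar q_i),$$
which is exactly the assertion. Because this chain is anchored at the labeled point $q_i$ and uses only $\dPkk(q_\ell) = \max\{\dPkk(q_i),\dPkk(q_j)\}$, it holds verbatim for either labeling of the pair, covering all ordered pairs at once.

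The one delicate point, and the step I would be most careful about, is resisting the temptation to apply Condition~3 directly to the later-added point $q_\ell$: that would produce the wrong anchor $f(\bar q_\ell)$ rather than the required $f(\bar q_i)$. The intermediate monotonicity inequality $\dPkk(q_\ell) \ge \dPkk(q_i)$ is precisely what routes the bound back from $q_\ell$ to $q_i$ and makes the statement hold uniformly over ordered pairs. Everything else — the ball-exclusion test and the single invocation of Condition~3 — is identical in spirit to the non-adaptive proposition; in particular no use of the $1$-Lipschitz property of $f$ nor of Lemmas~\ref{lem:adaptiveDensity}--\ref{lem:closeness} is required, so I expect no real obstacle beyond this labeling subtlety.
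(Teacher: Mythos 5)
Your proof is correct and is essentially identical to the paper's: both use the ball-exclusion test at the later-processed point to get $\dX{q_i}{q_j}\geq 2\dPkk(q_\ell)$, then the monotonicity of $k$-distances along the processing order to re-anchor at $q_i$ before invoking Condition~3. The labeling subtlety you flag is handled the same way in the paper (it derives both the $f(\bar q_j)$ and $f(\bar q_i)$ bounds from the single inequality $\dX{q_i}{q_j}\geq 2\dPkk(q_j)$ for $i<j$).
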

%

\begin{proof}
Let $q_i$ and $q_j$ be two points of $Q$ with $i<j$.
Then $q_i$ is not in the ball of center $q_j$ and radius $2\dPkk(q_j)$.
Hence $\dX{q_i}{q_j}\geq2\dPkk(q_j)\geq2\frac{\epsilon_k}{c} f(\bar q_j)$.
Since $i < j$, it also follows that $\dX{q_i}{q_j}\geq2\dPkk(q_j)\geq 2\dPkk(q_i)\geq2\frac{\epsilon_k}{c} f(\bar q_i)$. 
\end{proof}

The algorithm {\sf Declutter} removes outliers from the input point set $P$.
As a result, we obtain a point set which lies
close to the ground truth with respect to the Hausdorff distance.
Such point sets can be used for inference about the ground truth
with further processing. 
For example, in topological data analysis, our result can be used 
to perform topology inference from noisy input points 
in the non-adaptive setting; see Appendix C 
  for more details. 


An example of the output of algorithm {\sf Declutter} is given in Figure \ref{fig:decluttering} (a) -- (d). More examples (including for adaptive inputs) can be found in 
Appendix E 
. 

\paragraph*{Extensions.} 
It turns out that there are many choices that can be used for the $k$-distance $\dPkk(x)$ instead of the one introduced in Definition \ref{def:kdist}. 
Indeed, the goal of $k$-distance intuitively is to provide a more robust distance estimate -- Specifically, assume $P$ is a noisy sample of a hidden domain $K \subset \X$. With the presence of noisy points far away from $K$, the distance $d_\X(x, P)$ no longer serves as a good approximation of $d_\X(x, K)$, the distance from $x$ to the hidden domain $K$. 
We thus need a more robust distance estimate. The $k$-distance $\dPkk(x)$ introduced in Definition \ref{def:kdist} is one such choice, and there are many other valid choices. 
As we show in Appendix D
, we only need the choice of $\dPkk(x)$ to be $1$-Lipschitz, and is less sensitive than $d_\X(x, P)$ (that is, $d_\X(x, P) \le \dPkk(x)$). 
We can then define the sampling condition (as in Definitions \ref{def:sampling} and \ref{def:unisample}) using a different choice of $\dPkk(x)$, and Theorems \ref{th:sparsifyingGuarantees} and \ref{thm:adaptiveHaus} still hold. 
For example, we could replace $k$-distance by $\dPkk(x) = \frac{1}{k}\sum_{i=1}^k d(x, p_i(x))$ where $p_i(x)$ is the $i$th nearest neighbor of $x$ in $P$; that is, $\dPkk(x)$ is the average distance to the $k$ nearest neighbors of $x$ in $P$. 
Alternatively, we can replace $k$-distance by $\dPkk(x) = d(x, p_k(x))$, the distance from $x$ to its $k$-th nearest neighbor in $P$ (which was used in \cite{chan2006spanners} to construct the $\varepsilon$-density net). 
{\sf Declutter algorithm} works for all these choices with the same denoising guarantees. 

One can in fact further relax the conditions on $\dPkk(x)$ or even on the input metric space $(\X, d_\X)$ such that the triangle inequality for $d_\X$ only approximately holds. The corresponding guarantees of our Declutter algorithm are provided in Appendix D 
. 

\section{Parameter-free decluttering}
\label{sec:parafree}

The algorithm {\sf Declutter} is not entirely satisfactory.
First, we need to fix the parameter $k$ a priori. Second, while providing a Hausdorff distance guarantee, this procedure also ``sparsifies" input points. Specifically, the empty-ball test also induces some degree of sparsification, as for any point $q$ kept in $\outputP$, the ball $B(q, 2\dPkk(q))$ does not contain any other output points in $\outputP$. 
While this sparsification property is desirable for some applications, it removes too many points in some cases -- See Figure~\ref{fig:decluttering} for an example, where the output density is dominated by $\epsilon_k$ and does not preserve the dense sampling provided by the input around the hidden compact set $K$. In particular, for $k=9$, it does not completely remove ambient noise, while, for $k = 30$, the output is too sparse.

\begin{figure}[!t]
\centering
\begin{tabular}{ccccc}
\includegraphics[height=.23\textwidth]{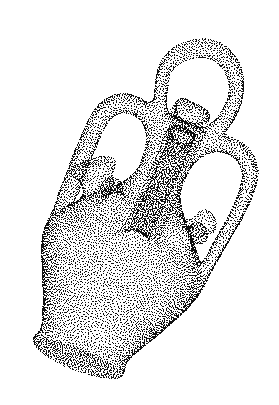} &
\includegraphics[height=.23\textwidth]{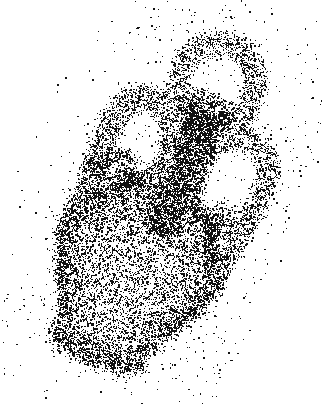} &
\includegraphics[height=.23\textwidth]{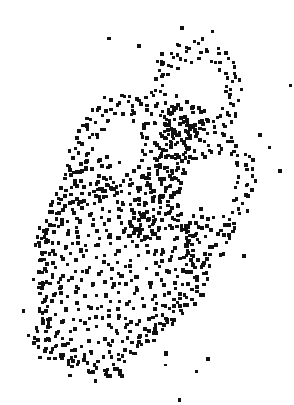} &
\includegraphics[height=.23\textwidth]{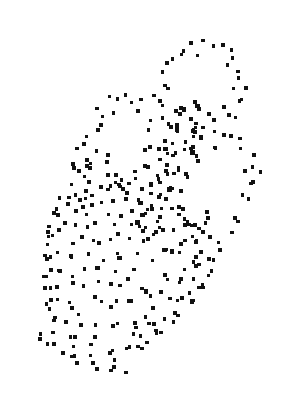} &
\includegraphics[height=.23\textwidth]{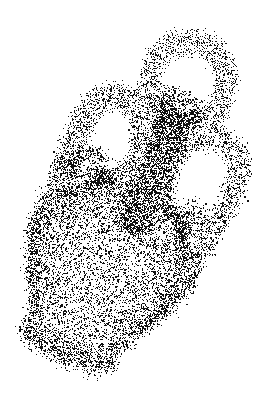}\\
(a) & (b) & (c) & (d) & (e) 
\end{tabular}
\vspace*{-0.2in}\caption{{\small (a) -- (d) show results of the Algorithm {\sf Declutter}: 
(a) the ground truth, (b) the noisy input with 15K points with 1000 ambient noisy points, (c) the output of Algorithm {\sf Declutter} when $k=9$,  (d) the output of Algorithm {\sf Declutter} when $k=30$. In (e), we show the output of Algorithm {\sf ParfreeDeclutter}. As shown in Appendix E
, algorithm {\sf ParfreeDeclutter} can remove ambient noise for much sparser input samples with more noisy points.}
\label{fig:decluttering}}
\end{figure}

\RestyleAlgo{boxruled}
\LinesNumbered
\begin{algorithm}[H]
\label{alg:resampling}
\caption{{\sf ParfreeDeclutter}($P$)}
\DontPrintSemicolon
\KwData{Point set $P$}
\KwResult{Denoised point set $P_0$}
\Begin{
Set $i_* =\lfloor\log_2(|P|)\rfloor$, and $P_{i_*}\longleftarrow P$\;

\For{$i\longleftarrow i_*$ to $1$}{
$\intermedP\longleftarrow$ \sf {Declutter}($P_i$,$2^i$)\;

$P_{i-1}\longleftarrow\cup_{q\in \intermedP}B(q,(10+2\sqrt 2)d_{P_i,2^i}(q))\cap P_i$\;
}}
\end{algorithm}

In this section, we address both of the above concerns by a novel 
iterative re-sampling procedure as described in 
Algorithm {\sf ParfreeDeclutter}($P$).
Roughly speaking, we start with $k = |P|$ and gradually decrease it by halving each time. 
At iteration $i$, let $P_i$ denote the set of points so far kept by the algorithm; $i$ is initialized to be $\lfloor\log_2(|P|)\rfloor$ and 
is gradually decreased. 
We perform the denoising algorithm {\sf Declutter}($P_i, k =2^i$) given in the previous section to first denoise $P_i$  
and obtain a denoised output set $\intermedP$. 
This set can be too sparse. We enrich it by 
re-introducing some points from $P_{i}$, obtaining a 
denser sampling $P_{i-1} \subseteq P_i$ of the ground truth. 
We call this a \emph{re-sampling} process. 
This re-sampling step may bring some outliers back into the current set. 
However, it turns out that a repeated cycle of decluttering and resampling
with decreasing values of $k$ removes these outliers progressively. See Figure \ref{fig:multiscale_square} and also more examples in
Appendix E.
The entire process remains free of any user
supplied parameter. 
In the end, we show that for an input that satisfies a uniform sampling condition, we can obtain an output set which is both dense and Hausdorff close to the hidden compact set, without the need to know the parameters of the input sampling conditions. 

In order to formulate the exact statement of Theorem \ref{thm:resampling}, 
we need to introduce a more relaxed sampling condition. 
We relax the notion of \uensmp{\epsilon_k}{c} by removing condition $2$.
We call it a \emph{\wuensmp{\epsilon_k}{c}}.
Recall that condition $2$ was the one forbidding the noise to be too dense. So essentially, a \wuensmp{\epsilon_k}{c} only concerns points on and around the ground truth, with no conditions on outliers. 

\begin{theorem}\label{thm:resampling}
Given a point set $P$ and $i_0$ such that for all $i>i_0$, $P$ is a \wuensmp{\epsilon_{2^i}}{2} of $K$ and is also a \uensmp{\epsilon_{2^{i_0}}}{2} of $K$, 
Algorithm {\sf ParfreeDeclutter} returns a point set $P_0\subseteq P$ such that $d_H(P_0,K)\leq (87+16\sqrt{2})\epsilon_{2^{i_0}}$.
\end{theorem}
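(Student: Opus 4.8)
The plan is to analyze the iteration by downward induction on the scale index $i$, carrying an invariant that controls $P_i$ both near and far from $K$. Because $P_{i_*}=P\supseteq P_{i_*-1}\supseteq\cdots\supseteq P_0$ are nested, I only ever need to argue how a single Declutter+resample step turns a good $P_i$ into a good $P_{i-1}$. For $i\ge i_0$ the invariant I would maintain has two parts: (a) \emph{coverage}, that $d_{P_i,2^i}(x)\le\epsilon_{2^i}$ for every $x\in K$, so $P_i$ still samples $K$ densely at scale $i$; and (b) \emph{retention}, that $P_i$ contains every point of $P$ lying in a tube about $K$ of radius $\Theta(\epsilon_{2^i})$. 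The two parts reinforce each other.

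First I would record which sampling conditions transfer to a subset $P_i\subseteq P$. Since deleting points only increases the $k$-distance ($d_{P_i,2^i}\ge d_{P,2^i}$), the lower density bound (Condition 3) and the upper bound of Condition 2 pass to $P_i$ \emph{automatically}; by contrast Condition 1 is an upper bound on $d_{P_i,2^i}$ over $K$, and this is exactly the nontrivial content of the retention part of the invariant — keeping the near-$K$ points of $P$ preserves enough neighbors of each $x\in K$ to force $d_{P_i,2^i}(x)$ back down to $O(\epsilon_{2^i})$. With Condition 1 in hand for $P_i$, the coverage Lemma \ref{lem:globalDensity} (which uses only Condition 1) applies to $Q=\textsf{Declutter}(P_i,2^i)$, so $Q$ still $5\epsilon_{2^i}$-covers $K$. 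The resampling radius is then calibrated by Condition 3: since $d_{P_i,2^i}(q)\ge\epsilon_{2^i}/2$, each ball $B(q,(10+2\sqrt2)d_{P_i,2^i}(q))$ has radius at least $(5+\sqrt2)\epsilon_{2^i}$, which together with the $5\epsilon_{2^i}$ coverage of $Q$ and the triangle inequality is exactly enough to force every point of $P$ within the next tube radius into some resampling ball, hence into $P_{i-1}$. Reconciling this with the scale change $\epsilon_{2^i}\to\epsilon_{2^{i-1}}$ (halving $k$ only shrinks the $k$-distance, so one works with $\epsilon_{2^{i-1}}\le\epsilon_{2^i}$, and a uniform near-$K$ sample keeps more than half of the $2^{i-1}$ nearest neighbors of each $x\in K$ inside the tube) closes the induction and propagates both parts of the invariant.

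The decisive step is $i=i_0$, where $P$ is a full \uensmp{\epsilon_{2^{i_0}}}{2} and Condition 2 becomes available. Here I would invoke the closeness Lemma \ref{lemma:QtoK} (whose proof uses Condition 2, which as noted transfers to $P_{i_0}$) for $Q=\textsf{Declutter}(P_{i_0},2^{i_0})$ to conclude that \emph{all} of $Q$ lies within $O(\epsilon_{2^{i_0}})$ of $K$: this is where the outliers that the weak condition allowed to survive the higher scales are finally eliminated. Since $P_{i_0-1}$ is a union of balls of radius $(10+2\sqrt2)d_{P_{i_0},2^{i_0}}(q)=O(\epsilon_{2^{i_0}})$ centered at these near-$K$ points, $P_{i_0-1}$ sits in a tube of radius $(87+16\sqrt2)\epsilon_{2^{i_0}}$ about $K$, while the coverage part of the invariant (propagated through the resampling as above) keeps $K$ covered. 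Combining the two directions gives $d_H(P_{i_0-1},K)\le(87+16\sqrt2)\epsilon_{2^{i_0}}$; the explicit constant is the accumulation of the coverage constant from Lemma \ref{lem:globalDensity}, the closeness constant from Lemma \ref{lemma:QtoK}, and the resampling radius $10+2\sqrt2$.

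Finally, for the iterations $i<i_0$ I would exploit the nesting $P_0\subseteq\cdots\subseteq P_{i_0-1}$: since $P_{i_0-1}$ already lies in the tube of radius $(87+16\sqrt2)\epsilon_{2^{i_0}}$, every later $P_j$ does too, so the ``every point of $P_0$ is close to $K$'' direction is immediate and needs no assumption below $i_0$. For the ``$K$ is covered by $P_0$'' direction I would argue that, once all outliers are gone and $P_{i_0-1}$ is a dense tube-confined sample of $K$, each further Declutter preserves coverage via Lemma \ref{lem:globalDensity} and each further resampling only reintroduces tube points, so coverage survives to $P_0$ without degrading the bound. I expect the main obstacle to be the quantitative bookkeeping in the inductive step — relating $d_{P_i,2^i}$ to $d_{P,2^i}$ under both the subset relation and the halving of $k$, and checking that the single constant $10+2\sqrt2$ simultaneously keeps the tube wide enough for retention yet tied tightly enough to $\epsilon_{2^i}$ that the tube radius does not blow up over the $\Theta(\log|P|)$ iterations.
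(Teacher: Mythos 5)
Your handling of the descent from $i_*$ down to $i_0$ and of the decisive step at $i=i_0$ matches the paper's Lemmas~\ref{lem:resamplingGuarantees} and~\ref{lem:resamplingOutliers}: you correctly observe that Conditions~2 and~3 transfer automatically to subsets while Condition~1 is the part that must be preserved through resampling, and your mechanism for preserving it (coverage of $K$ by $Q$ at scale $5\epsilon_{2^i}$ plus the lower bound $d_{P_i,2^i}(q)\ge\epsilon_{2^i}/2$ forcing the resampling balls to have radius at least $(5+\sqrt2)\epsilon_{2^i}$, so that the relevant nearest neighbors of each $x\in K$ are re-captured) is exactly the paper's argument, down to the constant $87+16\sqrt2 = 8(10+2\sqrt2)+7$.

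The genuine gap is in the iterations $i<i_0$. You propose to keep $K$ covered by iterating Lemma~\ref{lem:globalDensity} and arguing that resampling ``only reintroduces tube points, so coverage survives to $P_0$ without degrading the bound,'' but below $i_0$ the hypotheses give you \emph{no} sampling condition at all: there is no $\epsilon_{2^i}$ for $i<i_0$, and more importantly no uniformity lower bound $d_{P_i,2^i}(q)\ge\epsilon/2$, which is precisely what calibrated the resampling radius in your own inductive step. Without it, each Declutter at level $i<i_0$ sparsifies to a net whose coverage scale you can only bound by a constant factor times the previous level's scale, and over $\Theta(\log|P|)$ iterations this compounds geometrically --- the very blow-up you flag at the end but do not resolve. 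The paper closes this with Lemma~\ref{lem:conservationGuaranty}, a separate downward induction that tracks individual points rather than a coverage radius: every $y\in P_i$ has some $p\in P_0$ with $\dX{y}{p}\le\kappa\, d_{P_i,2^i}(y)$, where $\kappa=\frac{18+17\sqrt2}{4}$ is chosen so that the recursion $\kappa\mapsto\bigl(\frac{\kappa}{5+\sqrt2}+2\bigr)(1+\sqrt2)+\sqrt2$ stays below $\kappa$ (using that a point removed at level $i$ was within $2d_{P_i,2^i}(y)$ of a kept point $q$ with $d_{P_i,2^i}(q)\le\frac{2}{C}d_{P_i,2^i}(y)$, plus Lemma~\ref{lem:iNNupperBound}). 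Applied to the point $y\in P_{i_0}$ near a given $x\in K$ with $d_{P_{i_0},2^{i_0}}(y)\le2\epsilon_{2^{i_0}}$, this yields the coverage of $K$ by $P_0$ with a fixed constant, independent of the number of remaining iterations. This self-improving, per-point conservation lemma is the missing idea in your proposal.
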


\noindent We elaborate a little on the sampling conditions. 
On one hand, as illustrated by Figure~\ref{fig:scalenecessary}, the uniformity on input points is somewhat necessary in order to obtain a parameter-free algorithm. So requiring  a \uensmp{\epsilon_{2^{i_0}}}{2} of $K$ is reasonable. 
Now it would have been ideal
if the theorem only required that $P$ is a \uensmp{\epsilon_{2^{i_0}}}{2} of $K$ for some $k_0 = 2^{i_0}$. 
However, to make sure that this uniformity is not destroyed during our iterative declutter-resample process before we reach $i = i_0$, we also need to assume that, \emph{around the compact set}, the sampling is uniform for any $k = 2^i$ with $i > i_0$ (i.e, before we reach $i = i_0$). 
The specific statement for this guarantee is given in Lemma \ref{lem:resamplingGuarantees}. 
However, while the uniformity for points \emph{around the compact set} is required for any $i > i_0$, the condition that noisy points 
cannot be arbitrarily dense is \emph{only} required for one parameter, $k = 2^{i_0}$. 

The constant for
the ball radius in the resampling step
is taken as $10+2\sqrt 2$ which we call the resampling
constant $C$. Our
theoretical guarantees hold with this 
resampling constant though a value of $4$ works well in practice. 
The algorithm reduces more noise with decreasing $C$. 
On the flip side, the risk of removing points causing 
loss of true signal also increases with decreasing $C$.  
Section \ref{sec:exp} and Appendix E 
  provide several results for Algorithm {\sf ParfreeDeclutter}. We also point out that while our theoretical guarantee is for non-adaptive case, in practice, the algorithm works well on adaptive sampling as well. 

\paragraph*{Proof for Theorem \ref{thm:resampling}.} 
Aside from the technical Lemma~\ref{lem:iNNupperBound} on the $k$-distance, the proof is divided into three steps.
First, Lemma~\ref{lem:resamplingGuarantees} shows that applying the loop of the algorithm once with parameter $2k$ does not alter the existing sampling conditions for $k'\leq k$.
This implies that the \ensmp{\epsilon_{2^{i_0}}} condition on $P$ will also hold for $P_{i_0}$.
Then Lemma~\ref{lem:resamplingOutliers} guarantees that the step going from $P_{i_0}$ to $P_{i_0-1}$ will remove all outliers.
Combined with Theorem~\ref{th:sparsifyingGuarantees}, which guarantees that $P_{i_0-1}$ samples well $K$, it guarantees that the Hausdorff distance between $P_{i_0-1}$ and $K$ is bounded.
However, we do not know $i_0$ and we have no means 
to stop the algorithm at this point.
Fortunately, we can prove Lemma~\ref{lem:conservationGuaranty} which guarantees that the remaining iterations will not remove too many points and break the theoretical guarantees -- that is, no harm is done in the subsequent iterations even after $i < i_0$. 
Putting all three together leads to our main result Theorem \ref{thm:resampling}.

\begin{lemma}\label{lem:iNNupperBound}
Given a point set $P$, $x\in\X$ and $0\le i \le k$, the distance to the $i$-th nearest neighbor of $x$ in $P$ satisfies, $\dX{x}{p_i}\leq\sqrt{\frac{k}{k-i+1}}\dPkk(x)$.
\end{lemma}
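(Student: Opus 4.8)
The plan is to reduce the statement to an elementary inequality on the sorted sequence of neighbor distances, exploiting that the $k$-distance is a root-mean-square that averages over \emph{all} $k$ nearest neighbors, while we only wish to bound \emph{one} of them. First I would set up notation: write $p_j = p_j(x)$ for the $j$-th nearest neighbor of $x$ in $P$ and let $r_j = \dX{x}{p_j}$, so that by the ordering of nearest neighbors we have $r_1 \le r_2 \le \dots \le r_k$, and by Definition~\ref{def:kdist} we have $\dPkk(x)^2 = \frac{1}{k}\sum_{j=1}^k r_j^2$.

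The key step is then to lower-bound the sum of squares appearing in $\dPkk(x)^2$ by keeping only the tail of the sum from index $i$ onward, and estimating each surviving term from below by $r_i^2$. Since $r_j \ge r_i$ for every $j \ge i$ by monotonicity, this yields
\[
k\,\dPkk(x)^2 = \sum_{j=1}^k r_j^2 \ge \sum_{j=i}^k r_j^2 \ge (k-i+1)\,r_i^2,
\]
where the factor $k-i+1$ is simply the number of integers in the inclusive range $\{i, i+1, \dots, k\}$. Rearranging gives $r_i^2 \le \frac{k}{k-i+1}\dPkk(x)^2$, and taking square roots (both sides being nonnegative) produces exactly the claimed bound $\dX{x}{p_i} \le \sqrt{\frac{k}{k-i+1}}\,\dPkk(x)$.

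The main thing to get right is the bookkeeping on the number of retained terms: one must count $k-i+1$ rather than $k-i$, since the summation range $j = i, \dots, k$ is inclusive at both ends, and it is precisely this count that appears in the denominator of the final constant. Beyond that there is no genuine obstacle here; the lemma is a purely arithmetic consequence of the root-mean-square definition combined with the sortedness of the distances, and requires neither the metric structure of $\X$ (beyond nonnegativity of distances) nor the Lipschitz property of $\dPkk$. I would also double-check the boundary cases $i = 1$ (where the constant is $\sqrt{k/k} = 1$, consistent with $r_1 \le \dPkk(x)$) and $i = k$ (where the constant is $\sqrt{k}$), both of which the inequality handles correctly.
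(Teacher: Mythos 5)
Your proof is correct and follows exactly the paper's argument: bound the tail sum $\sum_{j=i}^{k}\dX{x}{p_j}^2$ from below by $(k-i+1)\dX{x}{p_i}^2$ using the sortedness of neighbor distances, then compare with the full sum defining $\dPkk(x)^2$. The counting of $k-i+1$ terms and the square-root step match the paper's derivation verbatim.
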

\begin{proof}
The claim is proved by the following derivation.  
$$\frac{k-i+1}{k}\dX{x}{p_i}^2\leq\frac{1}{k}\sum_{j=i}^{k} \dX{x}{p_j}^2\leq \frac{1}{k}\sum_{j=1}^k\dX{x}{p_j}^2=\dPkk(x)^2. $$
\end{proof}

\begin{lemma}\label{lem:resamplingGuarantees}
Let $P$ be a \wuensmp{\epsilon_{2k}}{2} of $K$.
For any $k'\leq k$ such that $P$ is a (weak) \uensmp{\epsilon_{k'}}{c} of $K$ for some $c$, applying one step of the algorithm, with parameter $2k$ 
and resampling constant $C=10+2\sqrt{2}$ 
gives a point set $P'\subseteq P$ which is a (weak) \uensmp{\epsilon_{k'}}{c} of $K$.
\end{lemma}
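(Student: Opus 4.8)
The plan is to verify the three defining conditions of a (weak) \uensmp{\epsilon_{k'}}{c} directly for $P'$, exploiting that $P'\subseteq P$ is obtained only by deleting points. Deletion can only push nearest neighbors farther away, so $\dPkpkp(x)\ge\dPkkp(x)$ for every $x\in\X$ (whenever $P'$ retains at least $k'$ points, which the Condition~$1$ argument below guarantees). This monotonicity disposes of two of the three conditions at once: Condition~$3$ for $P'$ follows from $\dPkpkp(p)\ge\dPkkp(p)\ge\epsilon_{k'}/c$ for every $p\in P'\subseteq P$, and (in the non-weak case) Condition~$2$ follows from $\dX{x}{K}\le\dPkkp(x)+\epsilon_{k'}\le\dPkpkp(x)+\epsilon_{k'}$. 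Hence the entire content of the lemma is Condition~$1$: for every $x\in K$ one must still have $\dPkpkp(x)\le\epsilon_{k'}$.

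Next I would reduce Condition~$1$ to a \emph{survival} statement. If the $k'$ nearest neighbors of $x$ in $P$ all lie in $P'$, then because $P'\subseteq P$ these same points remain the $k'$ nearest neighbors of $x$ in $P'$, so $\dPkpkp(x)=\dPkkp(x)\le\epsilon_{k'}$ by Condition~$1$ for $P$ at scale $k'$. Thus it suffices to prove that every one of the $k'$ nearest neighbors of $x$ in $P$ is retained by the resampling step, i.e.\ lies in some ball $B(q,C\,\dPkkk(q))$ with $q\in\intermedP$.

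The survival argument is where the constants must be made to fit, and this is the step I expect to be the main obstacle. First, locate a retained center: by Lemma~\ref{lem:globalDensity} applied at scale $2k$ --- whose proof uses only Condition~$1$, which a weak uniform sample does satisfy --- there is $q\in\intermedP$ with $\dX{x}{q}\le 5\epsilon_{2k}$. Second, control the reach of the $k'$ nearest neighbors $p_1,\dots,p_{k'}$ of $x$ in $P$: since $k'\le k$ is \emph{at most half} of $2k$, Lemma~\ref{lem:iNNupperBound} at scale $2k$ gives, for each $i\le k'$,
$$\dX{x}{p_i}\le\sqrt{\tfrac{2k}{2k-i+1}}\,\dPkkk(x)\le\sqrt2\,\epsilon_{2k},$$
using $\dPkkk(x)\le\epsilon_{2k}$ from Condition~$1$ at scale $2k$. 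The triangle inequality then gives $\dX{p_i}{q}\le(5+\sqrt2)\epsilon_{2k}$. Finally, bound the ball radius from below: Condition~$3$ of the weak uniform sample at scale $2k$ with $c=2$ gives $\dPkkk(q)\ge\epsilon_{2k}/2$, so $C\,\dPkkk(q)\ge(10+2\sqrt2)\tfrac{\epsilon_{2k}}{2}=(5+\sqrt2)\epsilon_{2k}$. Combining, $\dX{p_i}{q}\le(5+\sqrt2)\epsilon_{2k}\le C\,\dPkkk(q)$, so each $p_i\in B(q,C\,\dPkkk(q))\cap P\subseteq P'$, which establishes survival and hence Condition~$1$.

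The delicate part is precisely the matching $5+\sqrt2=(10+2\sqrt2)/2$: the resampling constant $C=10+2\sqrt2$ is tuned to be the smallest value for which the ball around $q$ is guaranteed to recapture \emph{all} of $x$'s $k'$ nearest neighbors. The role of the halving --- decluttering at $2k$ while preserving the condition only at $k'\le k$ --- is exactly to cap the reach of those neighbors at $\sqrt2\,\epsilon_{2k}$ through Lemma~\ref{lem:iNNupperBound}; without the factor-of-two slack the bound would degrade to roughly $\sqrt{k'}\,\epsilon_{2k}$ and the recapture would fail. Since Conditions~$2$ and~$3$ required no extra work, the argument is insensitive to whether Condition~$2$ is assumed, so the ``(weak)'' qualifier carries through automatically.
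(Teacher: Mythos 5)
Your proof is correct and follows essentially the same route as the paper's: dispose of Conditions 2 and 3 by the monotonicity of the $k'$-distance under deletion, then reduce Condition 1 to showing that all $k'$ nearest neighbors of $x\in K$ are recaptured by a ball $B(q,C\,d_{P,2k}(q))$, using the $5\epsilon_{2k}$ decluttering bound, Lemma~\ref{lem:iNNupperBound} for the $\sqrt{2}\,\epsilon_{2k}$ reach, and the uniformity lower bound $d_{P,2k}(q)\ge\epsilon_{2k}/2$ to match the constant $5+\sqrt{2}=C/2$. The only cosmetic difference is that you invoke Lemma~\ref{lem:globalDensity} (correctly noting it needs only Condition 1) where the paper re-derives the same $5\epsilon_{2k}$ bound inline via the intermediate point $p$.
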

\begin{proof}
We show that if $P$ is a \uensmp{\epsilon_{k'}}{c}
of $K$, then $P'$ will also be a \uensmp{\epsilon_{k'}}{c}
of $K$. The similar version for weak uniformity follows from the same argument. 

First, it is easy to see that as $P' \subset P$, the second and third sampling conditions of Def. \ref{def:unisample} hold for $P'$ as well. What remains is to show that Condition 1 also holds. 

Take an arbitrary point $x\in K$.
We know that $\dPkkk(x)\leq\epsilon_{2k}$ as $P$ is a \wuensmp{\epsilon_{2k}}{2} of $K$. Hence there exists 
$p\in P$ such that $\dX{p}{x}\leq \dPkkk(x)\leq \epsilon_{2k}$ and $\dPkkk(p)\leq 2\epsilon_{2k}$.
Writing $Q$ the result of the decluttering step, $\exists q\in Q$ such that $\dX{p}{q}\leq 2\dPkkk(p)\leq 4\epsilon_{2k}$.
Moreover, $\dPkkk(q)\geq\frac{\epsilon_{2k}}{2}$ due to the uniformity condition for $P$.

Using Lemma~\ref{lem:iNNupperBound}, for $k' \le k$, the $k'$ nearest neighbors of $x$, which are chosen from $P$, $NN_{k'}(x)$ satisfies:
$$
NN_{k'}(x)\subset B(x,\sqrt{2}\epsilon_{2k})
\subset B(p,(1+\sqrt{2})\epsilon_{2k})
\subset B(q,(5+\sqrt{2})\epsilon_{2k})
\subset B(q,(10+2\sqrt{2})\dPkkk(q))
$$
Hence $NN_{k'}(x)\subset P'$ and $\dPkpkp(x)=\dPkkp(x)\leq\epsilon_k$.
This proves the lemma. 
\end{proof}

\begin{lemma}\label{lem:resamplingOutliers}
Let $P$ be a \uensmp{\epsilon_k}{2} of $K$.
One iteration of decluttering and resampling with parameter $k$ 
and resampling constant $C=10+2\sqrt 2$ 
provides a set $P'\subseteq P$ such that $\delta_H(P',K)\leq 8C\epsilon_k+7\epsilon_k$.
\end{lemma}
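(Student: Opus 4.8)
The Hausdorff distance $\delta_H(P',K)$ is controlled by bounding the two one-sided distances separately, and the two directions turn out to be of very different difficulty. Throughout, write $Q=\textsf{Declutter}(P,k)$ for the decluttered set, so that the resampled set is $P'=\bigcup_{q\in Q}B(q,C\dPkk(q))\cap P$ with $C=10+2\sqrt2$. Note that although the hypothesis is a \uensmp{\epsilon_k}{2}, the estimate below uses only that $P$ is an \ensmp{\epsilon_k}; the lower-bound uniformity (Condition~3) is not needed for this Hausdorff bound.

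The easy direction is to show that every point of $K$ has a nearby point in $P'$. The plan is first to observe $Q\subseteq P'$: each $q\in Q$ lies in its own resampling ball $B(q,C\dPkk(q))$ and belongs to $P$, so it survives into $P'$. Then I would invoke Lemma~\ref{lem:globalDensity}, which gives, for every $x\in K$, some $q\in Q\subseteq P'$ with $\dX{x}{q}\le 5\epsilon_k$. This already beats the claimed bound.

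The harder direction is to show that every resampled point stays close to $K$; this is where the resampling constant $C$ enters and produces the dominant $8C\epsilon_k$ term. Take any $p\in P'$, so $p\in B(q,C\dPkk(q))$ for some $q\in Q$. By the triangle inequality $\dX{p}{K}\le \dX{p}{q}+\dX{q}{K}\le C\dPkk(q)+\dX{q}{K}$, so two quantities must be bounded: the distance from the centre $q$ to $K$, and the $k$-distance $\dPkk(q)$. For the first, Lemma~\ref{lemma:QtoK} gives $\dX{q}{K}\le 7\epsilon_k$ for any $q\in Q$. For the second, I would let $\bar q$ be a nearest point of $q$ in $K$, so $\dX{q}{\bar q}\le 7\epsilon_k$, and then combine Condition~1 of Definition~\ref{def:sampling} ($\dPkk(\bar q)\le\epsilon_k$) with the $1$-Lipschitz property of $\dPkk$ (Claim~\ref{claim:dPkkLip}) to get $\dPkk(q)\le \dPkk(\bar q)+\dX{\bar q}{q}\le 8\epsilon_k$.

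Putting these together yields $\dX{p}{K}\le 8C\epsilon_k+7\epsilon_k$ for every $p\in P'$, which matches the claimed constant exactly; combined with the easy direction this gives $\delta_H(P',K)\le 8C\epsilon_k+7\epsilon_k$. I do not expect a genuine obstacle, since both Lemmas~\ref{lem:globalDensity} and~\ref{lemma:QtoK} apply verbatim. The only point demanding care is the chain of estimates for $\dPkk(q)$: one must route through a nearest point $\bar q\in K$ and use Lipschitzness rather than trying to bound $\dPkk(q)$ directly, because $q$ itself need not lie on $K$ (it lies within $7\epsilon_k$ of it), and it is precisely this routing that forces the factor $8$ multiplying $C$.
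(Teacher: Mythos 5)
Your proposal is correct and follows essentially the same route as the paper: the paper cites Theorem~\ref{th:sparsifyingGuarantees} (which is just Lemmas~\ref{lem:globalDensity} and~\ref{lemma:QtoK} combined) to get $Q\subset P'$ with $\delta_H(Q,K)\le 7\epsilon_k$, and then bounds $\dX{p}{K}\le C\dPkk(q)+\dX{q}{K}$ with $\dPkk(q)\le\epsilon_k+\dX{q}{K}\le 8\epsilon_k$ via exactly the detour through $\bar q$ and the $1$-Lipschitz property that you describe. Your side remark that Condition~3 of the uniform sampling hypothesis is not actually used here is also accurate.
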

\begin{proof}
Let $Q$ denote the output after the decluttering step. Using Theorem~\ref{th:sparsifyingGuarantees} we know that $\delta_H(Q,K)\leq 7\epsilon_k$.
Note that $Q\subset P'$. Thus, we only need to show that for any $p\in P'$, $\dX{p}{K}\leq 8C\epsilon_k+7\epsilon_k$.
Indeed, by the way the algorithm removes points, for any $p \in P'$, there exists $q\in Q$ such that $p\in B(q,C\dPkk(q))$. 
It then follows that 
$$\dX{p}{K}\leq C\dPkk(q) +\dX{q}{K}\leq C(\epsilon_k+\dX{q}{K})+7\epsilon_k\leq 8C\epsilon_k+7\epsilon_k. $$
\end{proof}
 
\begin{lemma}\label{lem:conservationGuaranty}
Given a point $y\in P_i$, there exists $p\in P_0$ such that $\dX{y}{p}\leq \kappa d_{P_i,2^i}(y)$, where $\kappa=\frac{18+17\sqrt{2}}{4}$. 
\end{lemma}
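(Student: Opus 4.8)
The plan is to prove the statement by following a single \emph{representative} of $y$ downward through the nested sets $P_i \supseteq P_{i-1} \supseteq \cdots \supseteq P_0$ produced by the main loop, and to reduce the global bound to a controlled sum of one-step displacements. Concretely, I would construct a sequence $y = q_i, q_{i-1}, \ldots, q_0$ with $q_\ell \in P_\ell$ as follows: at level $\ell$ the loop runs {\sf Declutter}$(P_\ell,2^\ell)$ and then the resampling step; if $q_\ell$ survives into $P_{\ell-1}$ I set $q_{\ell-1}=q_\ell$, and otherwise I let $q_{\ell-1}$ be the kept point that caused $q_\ell$ to be discarded. Since $\dX{y}{q_0} \le \sum_{\ell=1}^{i} \dX{q_\ell}{q_{\ell-1}}$ by the triangle inequality, it suffices to bound each one-step displacement and then sum the bounds with $q_0 \in P_0$ as the desired $p$.

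For a single step I would use two ingredients. First, the removal rule of {\sf Declutter}, exactly as in the proof of Lemma~\ref{lem:globalDensity}: if $q_\ell$ is discarded, the responsible kept point $q_{\ell-1}$ satisfies $\dX{q_\ell}{q_{\ell-1}} \le 2\,d_{P_\ell,2^\ell}(q_\ell)$ and, having been processed earlier in the sorted order, also $d_{P_\ell,2^\ell}(q_{\ell-1}) \le d_{P_\ell,2^\ell}(q_\ell)$; moreover every kept point lies in its own resampling ball, so $q_{\ell-1}\in P_{\ell-1}$. Second, to transfer the estimate to the finer scale I would invoke Lemma~\ref{lem:iNNupperBound}: for a kept point $q$, its $2^{\ell-1}$ nearest neighbors in $P_\ell$ all lie within $\sqrt{\tfrac{2^\ell}{2^{\ell-1}+1}}\,d_{P_\ell,2^\ell}(q) < \sqrt{2}\,d_{P_\ell,2^\ell}(q)$, hence inside the resampling ball $B(q,(10+2\sqrt2)\,d_{P_\ell,2^\ell}(q))$ and are therefore all retained in $P_{\ell-1}$. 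This is the same mechanism used in Lemma~\ref{lem:resamplingGuarantees}, and it lets me compare $d_{P_{\ell-1},2^{\ell-1}}(q_{\ell-1})$ with $d_{P_\ell,2^\ell}(q_\ell)$ across one iteration.

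Combining these, I would aim to establish a one-step recursion of the form $d_{P_{\ell-1},2^{\ell-1}}(q_{\ell-1}) \le \rho\,d_{P_\ell,2^\ell}(q_\ell)$ with a contraction factor $\rho$ governed by the $\sqrt2$ of Lemma~\ref{lem:iNNupperBound}, coupled with the displacement bound $\dX{q_\ell}{q_{\ell-1}} \le 2\,d_{P_\ell,2^\ell}(q_\ell)$. Feeding such a recursion into the telescoped sum converts $\sum_\ell \dX{q_\ell}{q_{\ell-1}}$ into a geometric series in $\rho$ times $d_{P_i,2^i}(y)$, and the stated value $\kappa=\frac{18+17\sqrt2}{4}$ should then emerge as the initial jump plus the sum of this series once the constants are carried through.

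The step I expect to be the main obstacle is precisely this last one: forcing the accumulated displacement to collapse to a \emph{constant} multiple of $d_{P_i,2^i}(y)$ rather than something growing with the number of iterations $i=O(\log_2|P|)$. The easy bounds give only monotonicity, namely $d_{P_{\ell-1},2^{\ell-1}}(q_{\ell-1}) \le d_{P_\ell,2^\ell}(q_\ell)$, and plugging this in yields $\sum_\ell 2\,d_{P_\ell,2^\ell}(q_\ell) = O(i)\cdot d_{P_i,2^i}(y)$, which is too weak once $i$ is moderately large. The delicate point is therefore to extract a genuine geometric decay of $d_{P_\ell,2^\ell}(q_\ell)$ along the chain — rather than mere non-increase — by exploiting both the neighbor-preservation afforded by the large resampling radius $10+2\sqrt2$ and the fact that each forced jump moves the representative to a strictly denser (smaller $k$-distance) kept point, and then to track the exact constants through the geometric summation so that they assemble into $\frac{18+17\sqrt2}{4}$.
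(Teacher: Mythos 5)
Your skeleton --- follow a representative of $y$ down through $P_i \supseteq \cdots \supseteq P_0$, bound the telescoped sum of jumps, and look for geometric decay of the scale $d_{P_\ell,2^\ell}(q_\ell)$ --- is exactly the shape of the paper's induction, and you correctly diagnose that mere monotonicity of the $k$-distance along the chain yields only an $O(i)\cdot d_{P_i,2^i}(y)$ bound. But you stop at naming the obstacle without supplying the idea that removes it, and the contraction is not where you are looking for it. It does not come from the declutter ordering (``each forced jump moves the representative to a strictly denser kept point'' is precisely the non-increase you already dismissed as insufficient); it comes from the \emph{failure of resampling}. If $y\in P_\ell$ is discarded at level $\ell$, then besides $\dX{y}{q}\le 2\,d_{P_\ell,2^\ell}(y)$ for the responsible kept point $q$, the fact that $y$ was \emph{not} pulled back into $P_{\ell-1}$ forces $\dX{y}{q} > C\, d_{P_\ell,2^\ell}(q)$ with $C=10+2\sqrt2$, whence $d_{P_\ell,2^\ell}(q) < \tfrac{2}{C}\, d_{P_\ell,2^\ell}(y) = \tfrac{1}{5+\sqrt2}\, d_{P_\ell,2^\ell}(y)$. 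That is the genuine geometric decay, with ratio $1/(5+\sqrt2)$; without invoking the negation of the resampling test your recursion has no contraction factor and the argument does not close.

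Second, your rule ``if $q_\ell$ survives set $q_{\ell-1}=q_\ell$'' silently assumes survival preserves the scale, i.e.\ $d_{P_{\ell-1},2^{\ell-1}}(q_\ell)\le d_{P_\ell,2^\ell}(q_\ell)$. This is guaranteed only when $q_\ell$ is a declutter output (its $2^{\ell-1}$ nearest neighbors then lie in its own resampling ball by Lemma~\ref{lem:iNNupperBound}); a point that survives merely by lying in someone else's resampling ball can lose neighbors and have its $k$-distance \emph{increase}. The paper treats this as a separate case: if $d_{P_{\ell-1},2^{\ell-1}}(y) > d_{P_\ell,2^\ell}(y)$, then some $2^{\ell-1}$-nearest neighbor $z$ of $y$ in $P_\ell$ was discarded, $\dX{y}{z}\le\sqrt2\, d_{P_\ell,2^\ell}(y)$, and the chain is rerouted through $z$ via the discarded-point case. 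This detour is where the factor $(1+\sqrt2)$ and the additive $\sqrt2$ enter, and it dictates the choice of $\kappa=\frac{18+17\sqrt2}{4}$, large enough that $\bigl(\frac{\kappa}{5+\sqrt2}+2\bigr)(1+\sqrt2)+\sqrt2\le\kappa$. As written, your proposal is an outline of the right strategy that is missing both the contraction mechanism and this case.
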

\begin{proof}
We show this lemma by induction on $i$.
First for $i=0$ the claim holds trivially.
Assuming that the result holds for all $j<i$ and taking $y\in P_i$, we distinguish three cases.

\textbf{Case 1:} $y\in P_{i-1}$ and $d_{P_{i-1},2^{i-1}}(y)\leq d_{P_i,2^i}(y)$.\\
Applying the recurrence hypothesis for $j=i-1$ gives the result immediately.

\textbf{Case 2:} $y\notin P_{i-1}$.~~~
It means that $y$ has been removed by decluttering and not been put back by resampling.
These together imply that there exists $q\in Q_i \subseteq P_{i-1}$ such that $\dX{y}{q}\leq 2 d_{P_i,2^i}(y)$ and $\dX{y}{q}>C d_{P_i,2^i}(q)$ with $C = 10+2\sqrt{2}$.
From the proof of Lemma~\ref{lem:resamplingGuarantees}, we know that the $2^{i-1}$ nearest neighbors of $q$ in $P_i$ are resampled and included in $P_{i-1}$. 
Therefore, $d_{P_{i-1},2^{i-1}}(q)=d_{P_i,2^{i-1}}(q)\leq d_{P_i,2^i}(q)$.
Moreover, since $q\in P_{i-1}$, the inductive hypothesis implies that there exists $p\in P_0$ such that $\dX{p}{q}\leq \kappa d_{P_{i-1},2^{i-1}}(q) \leq \kappa d_{P_{i},2^{i}}(q)$. 
Putting everything together, we get that there exists $p\in P_0$ such that
$$\dX{p}{y}\leq\dX{p}{q}+\dX{q}{y} 
\leq\kappa d_{P_i,2^i}(q)+2d_{P_i,2^i}(y)
\leq\left(\frac{\kappa}{5+\sqrt{2}} +2\right)d_{P_i,2^i}(y)
\leq\kappa d_{P_i,2^i}(y).$$
The derivation above also uses the relation that $d_{P_i,2^i}(q) < \frac{1}{C} \dX{y}{q} \le \frac{2}{C} d_{P_i,2^i}(y)$. 

\textbf{Case 3:} $y\in P_{i-1}$ and $d_{P_{i-1},2^{i-1}}(y)> d_{P_i,2^i}(y)$.\\
The second part implies that at least one of the $2^{i-1}$ nearest neighbors of $y$ in $P_i$ does not belong to $P_{i-1}$.
Let $z$ be such a point.
Note that $\dX{y}{z}\leq\sqrt{2} d_{P_i,2^i}(y)$ by Lemma \ref{lem:iNNupperBound}.
For point $z$, we can apply the second case and 
therefore, there exists $p\in P_0$ such that 
\begin{align*}\denselist
\dX{p}{y}& \leq \dX{p}{z}+\dX{z}{y} 
\leq \left(\frac{\kappa}{5+\sqrt{2}} +2\right)d_{P_i,2^i}(z) +\sqrt{2}d_{P_i,2^i}(y)\\
&\leq \left(\frac{\kappa}{5+\sqrt{2}} +2\right)\left(d_{P_i,2^i}(y)+\dX{z}{y}\right) + \sqrt{2}d_{P_i,2^i}(y) \\
&\leq \left(\left(\frac{\kappa}{5+\sqrt{2}} +2\right)(1+\sqrt{2})+\sqrt{2}\right)d_{P_i,2^i}(y)
\leq \kappa d_{P_i,2^i}(y)
\end{align*}\end{proof}

%
\myparagraph{Putting everything together.}  
A repeated application of Lemma~\ref{lem:resamplingGuarantees} (with
weak uniformity) guarantees 
that $P_{i_0+1}$ is a weak \uensmp{\epsilon_{2^{i_0+1}}}{2} of $K$.
One more application (with uniformity) provides that $P_{i_0}$ is a 
\uensmp{\epsilon_{2^{i_0}}}{2} of $K$.
Thus, Lemma~\ref{lem:resamplingOutliers} implies that $d_H(P_{i_0-1},K)\leq  (87+16\sqrt{2})\epsilon_{2^{i_0}}$.
Notice that $P_0\subset P_{i_0-1}$ and thus for any $p\in P_0$, $\dX{p}{K}\leq  (87+16\sqrt{2})\epsilon_{2^{i_0}}$.

To show the other direction, consider any point $x\in K$. 
Since $P_{i_0}$ is a \uensmp{\epsilon_{2^{i_0}}}{2} of $K$, there exists $y\in P_{i_0}$ such that $\dX{x}{y}\leq \epsilon_{2^{i_0}}$ and $d_{P_{i_0},2^{i_0}}(y)\leq2\epsilon_{2^{i_0}}$.
Applying Lemma~\ref{lem:conservationGuaranty}, there exists $p\in P_0$ such that $\dX{y}{p}\leq \frac{18+17\sqrt{2}}{2}\epsilon_{2^{i_0}}$.
Hence $\dX{x}{p}\leq \left(\frac{18+17\sqrt{2}}{2}+1\right)\epsilon_{2^{i_0}}\leq(87+16\sqrt{2})\epsilon_{2^{i_0}}$. 
The theorem then follows. 
\hspace*{\fill}{$\blacksquare$}

\section{Preliminary experimental results}
\label{sec:exp}

We now present some preliminary experimental results for the two denoising algorithms developed in this paper. See Appendix E 
 for more results. 

In Figure \ref{fig:3dmanifoldone}, we show different stages of the {\sf ParfreeDeclutter} algorithm on an input with \emph{adaptively} sampled points. Even though for the parameter-free algorithm, theoretical guarantees are only provided for uniform samples, we note that it performs well on this adaptive case as well.  

\begin{figure}[!ht]
\centering
\includegraphics[height=.4\textwidth]{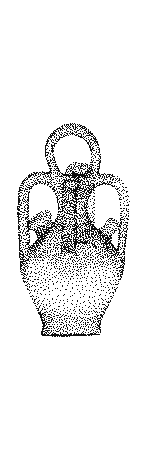}
\includegraphics[height=.4\textwidth]{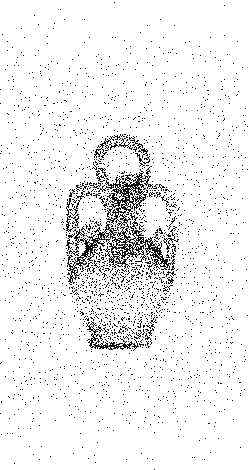}
\includegraphics[height=.4\textwidth]{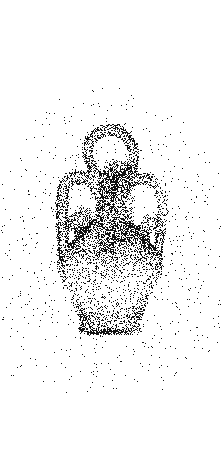}
\includegraphics[height=.4\textwidth]{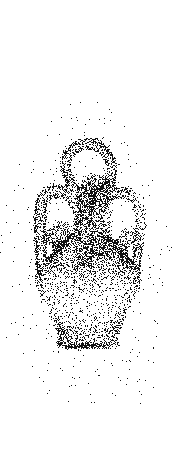}
\includegraphics[height=.4\textwidth]{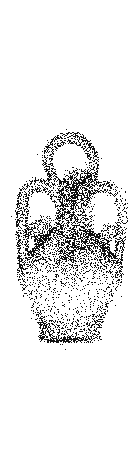}
\vspace*{-0.1in}
\caption{Experiment on a two dimensional manifold in three dimensions.
From left to right, the ground truth, the noisy adaptively sampled input, output of two intermediate steps of Algorithm {\sf ParfreeDeclutter}, and the final result.
}
\label{fig:3dmanifoldone}
\end{figure}

A second example is given in Figure \ref{fig:gps}. 
Here, the input data is obtained from a set of noisy GPS trajectories in the city of Berlin. In particular, given a set of trajectories (each modeled as polygonal curves), we first convert it to a density field by KDE (kernel density estimation). We then take the input as the set of grid points in 2D where every point is associated with a mass (density). Figure \ref{fig:gps} (a) shows the heat-map of the density field where light color indicates high density and blue indicates low density. 
In (b) and (c), we show the output of our {\sf Declutter} algorithm (the {\sf ParfreeDeclutter} algorithm does not provide good results as the input is highly non-uniform) for $k$ = 40 and $k=75$ respectively. 
In (d), we show the set of 40$\%$ points with the highest density values. 
The sampling of the road network is highly non-uniform. In particular, in the middle portion, even points off the roads have very high density due to noisy input trajectories. Hence a simple thresholding cannot remove these points and the output in (d) fills the space between roads in the middle portion; however more aggressive thresholding will cause loss of important roads. 
Our {\sf Declutter} algorithm can capture the main road structures without collapsing nearby roads in the middle portion though it also sparsifies the data. 

\begin{figure}
\begin{tabular}{cccc}
        \includegraphics[width=3cm]{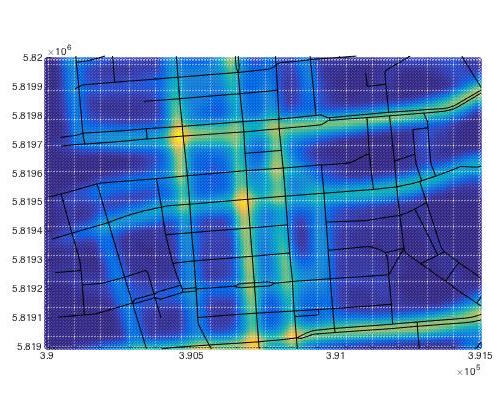} &
        \includegraphics[width=3cm]{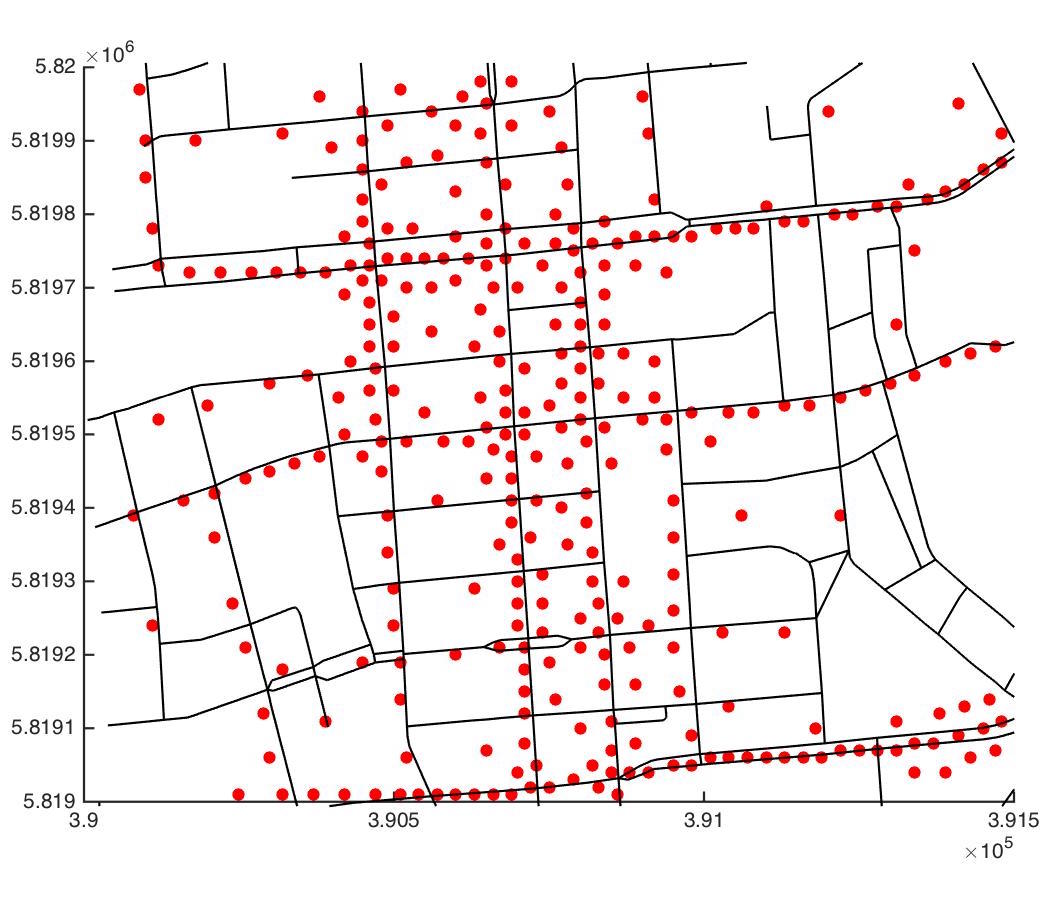} &
        \includegraphics[width=3cm]{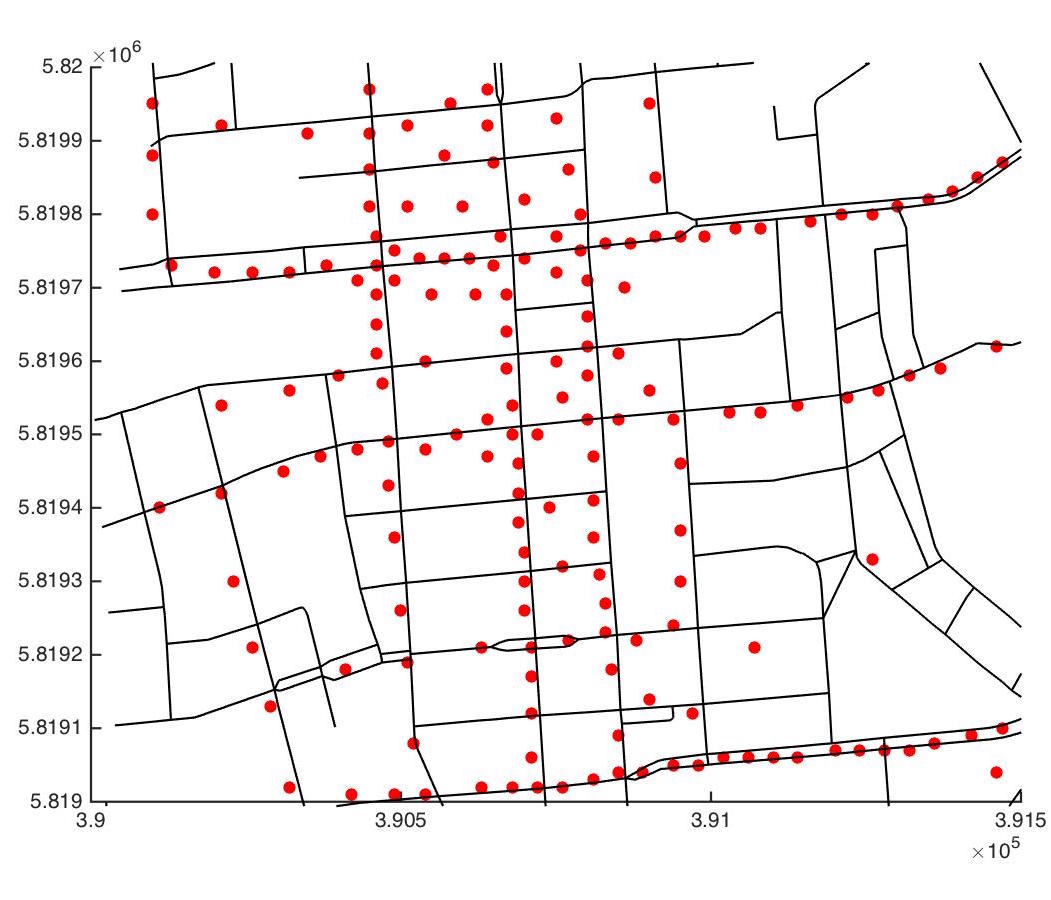} &
        \includegraphics[width=3cm]{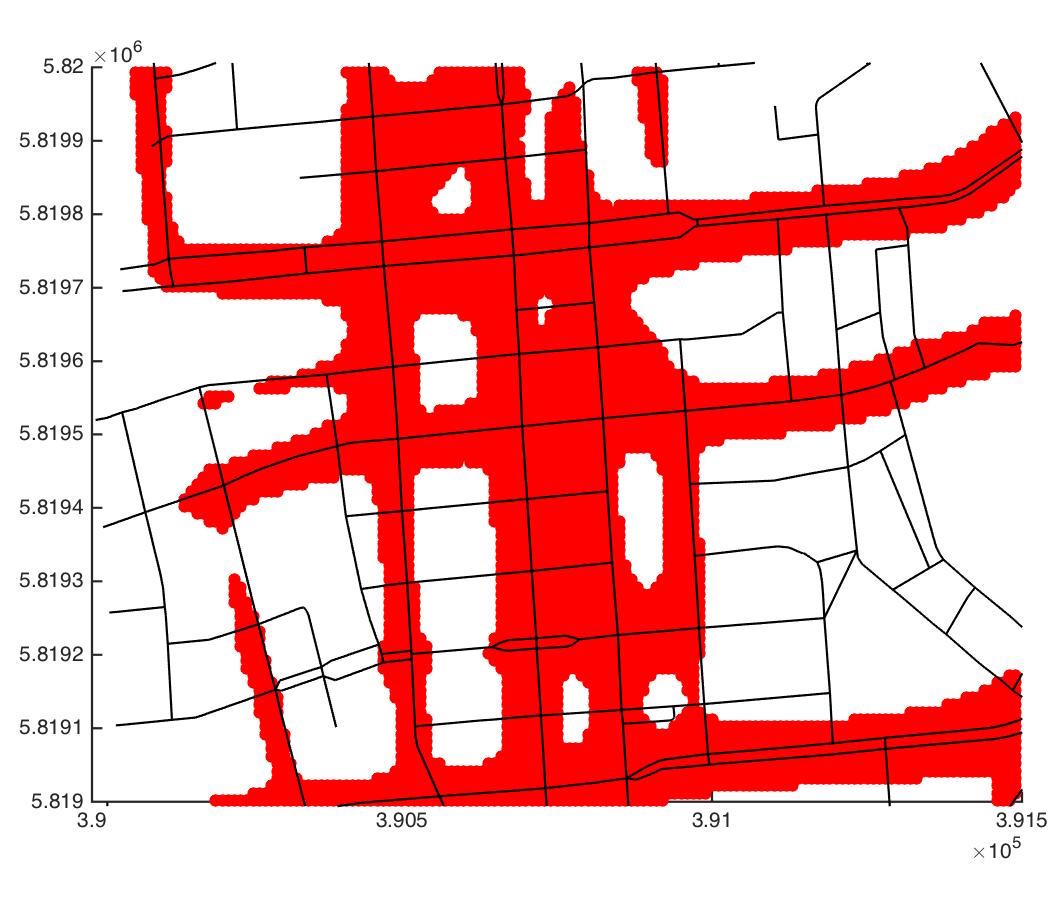} \\
(a) & (b) & (c) & (d)
\end{tabular}
\vspace*{-0.15in}
    \caption{(a) The heat-map of a density field generated from GPS traces. There are around 15k (weighted) grid points serving as an input point set. The output of Algorithm {\sf Declutter} when (b) $k=40$ and (c) $k=75$, (d) thresholding of 40\% points with the highest density.}
    \label{fig:gps}
\vspace*{-0.15in}
\end{figure}

In another experiment, we apply the denoising algorithm as a pre-processing for high-dimensional data classification. Here we use MNIST data sets, which is a database of handwritten digits from '0' to '9'. 
Table \ref{tab:17cls} shows the experiment on digit 1 and digit 7. We take a random collection of 1352 images of digit '1' and 1279 images of digit '7' correctly labeled as a training set, and take 10816 images of digit 1 and digit 7 as a testing set. Each of the image is $28 \times 28$ pixels and thus can be viewed as a vector in $\mathbb{R}^{784}$. We use the $L_1$ metric to measure distance between such image-vectors. We use a linear SVM to classify the 10816 testing images. The classification error rate for the testing set is $0.6564\%$  shown in the second row of Table \ref{tab:17cls}. 

Next, we artificially add two types of noises to input data: the {\it swapping-noise} and the {\it background-noise}. 
The swapping-noise means that we randomly mislabel some images of `1' as '7', and some images of `7' as '1'. As shown in the third row of Table \ref{tab:17cls}, the classification error increases to about $4.096\%$ after such mislabeling in the training set. 

Next, we apply our {\sf ParfreeDeclutter} algorithm to this training set with added swapping-noise (to the set of images with label '1' and the set with label '7' separately) to first clean up the training set. As we can see in Row-6 of Table \ref{tab:17cls}, we removed most images with a mislabeled `1' {(which means the image is '7' but it is labeled as '1')}. A discussion on why mislabeled `7's are not removed is given in Appendix E. We then use the denoised dataset as the new training set, and improved the classification error to $2.45\%$. 

The second type of noise is the {\it background noise}, where we replace the black backgrounds of a random subset of images in the training set (250 `1's and 250 `7's) with some other grey-scaled images. Under such noise, the classification error increases to $1.146\%$. Again, we perform our {\sf ParfreeDeclutter} algorithm to denoise the training sets, and use the denoised data sets as the new training set. The classification error is then improved to $0.7488\%$. 
More results on the MNIST data sets are reported in Appendix E. 

%
\begin{center}
\begin{table}
\centering
\resizebox{\columnwidth}{!}{\hspace*{\fill}

    \begin{tabular}{|c |c | c | c | c | c | p{1.5cm} |}
    \hline
    \rownumber&\multicolumn{5}{c|}{}&Error(\%) \\ \hline
   \rownumber& Original & \multicolumn{2}{c|}{\# Digit 1\quad 1352} &   \multicolumn{2}{c|}{\# Digit 7\quad 1279}  &0.6564 \\ \hline 
 \multicolumn{7}{c}{}
   \\ \hline
      \rownumber&Swap. Noise& \multicolumn{2}{c|}{\# Mislabelled 1\quad 270} & \multicolumn{2}{c|}{\# Mislabelled 7\quad 266} &4.0957 \\ \hline

   \rownumber&&\multicolumn{2}{c|}{Digit 1}&\multicolumn{2}{c|}{Digit 7}& \\ \hline
    \rownumber&&\# Removed&\# True Noise&\# Removed&\# True Noise &\\ \hline
    \rownumber& L1 Denoising&314&264&17&1&2.4500\\  \hline

  \end{tabular}
  }
         \bigbreak
        \resizebox{\columnwidth}{!}{
        \begin{tabular}{|c| c | c | c | c | c | p{1.5cm} |}
        \hline
    \rownumber&Back. Noise& \multicolumn{2}{c|}{\# Noisy 1\quad 250} &\multicolumn{2}{c|}{\# Noisy 7\quad 250}  &1.1464 \\ \hline
    \rownumber& &\multicolumn{2}{c|}{Digit 1}&\multicolumn{2}{c|}{Digit 7}& \\ \hline
    \rownumber&&\# Removed&\# True Noise&\# Removed&\# True Noise &\\ \hline
     \rownumber&L1 Denoising&294&250&277&250&0.7488\\ \hline
    \end{tabular}
    }
    \caption{Results of denoising on digit 1 and digit 7 from the MNIST. }
    \label{tab:17cls}
  \end{table} 
\vspace*{-0.2in}
\end{center}

\section{Discussions}
\label{sec:conclusion}

Parameter selection is a notorious problem for many algorithms in practice. 
Our high level goal is to understand the roles of parameters in algorithms for denoising, how to reduce their use and what theoretical guarantees do they entail. 
While this paper presented some results towards this direction, many interesting questions ensue. 
For example, how can we further relax our sampling conditions, making them allow more general inputs, and how to connect them with other classical noise models? 

We also note that while the output of {\sf ParfreeDeclutter}
is guaranteed to be close to the ground truth w.r.t. the Hausdorff distance, this Hausdorff distance itself is not estimated. 
Estimating this distance appears to be difficult. 
We could estimate it if we knew the correct scale, i.e. $i_0$, 
to remove the ambiguity. 
Interestingly, even with the uniformity condition, 
it is not clear how to estimate this distance in a parameter free manner. 

We do not provide guarantees for the parameter-free algorithm in an adaptive setting though the algorithm behaved well empirically for the adaptive case too. 
A partial result is presented in Appendix B 
,
but the need for a small $\epsilon_k$ in the conditions defeat the attempts to obtain a complete result. 

The problem of parameter-free denoising under more general sampling conditions remains
open. 
It may be possible to obtain results by replacing uniformity with other assumptions, for example topological assumptions: say, if the ground truth is a simply connected manifold without boundaries, can this help to denoise and eventually reconstruct the manifold?

\subparagraph*{Acknowledgments}

We thank Ken Clarkson for pointing out the result in~\cite{chan2006spanners}. 
This work is in part supported by National Science Foundation via grants CCF-1618247, CCF-1526513, CCF-1318595 and IIS-1550757. 

\bibliography{parameter_free_full}
\newpage
\appendix

\section{Missing details from section \ref{sec:decluttering}}
\label{appendix:sec:decluttering}
\paragraph*{Proof of Lemma \ref{lem:adaptiveDensity}.}


Let $x$ be a point of $K$.
Then there exists $i$ such that $\dX{p_i}{x}\leq \dPk(x)\leq \epsilon_k f(x)$.
If $p_i$ belongs to $\outputP$, then setting $q=p_i$ proves the lemma.
Otherwise, because of the way that the algorithm eliminates points, there must exist $j<i$ such that $p_j\in \outputP_{i-1}\subseteq \outputP$ and
$$\dX{p_i}{p_j}\leq 2\dPkk(p_i)\leq 2(\dPk(x)+\dX{p_i}{x})\leq 4\epsilon_k f(x),$$
the second inequality follows from the 1-Lipschitz property of $\dPkk$ function and the sampling Condition 1. Then
$$\dX{x}{p_j}\leq\dX{x}{p_i}+\dX{p_i}{p_j}\leq 5\epsilon_kf(x).$$ 

\paragraph*{Proof of Lemma \ref{lem:closeness}.}
Consider any $p_i\in P$ and let $\bar p_i$ be one of its nearest
points in $K$.
It is sufficient to show that if
$\dX{p_i}{\bar p_i}>7\epsilon_kf(\bar p_i)$, then $p_i\notin \outputP$ . 

By Condition 2 of Def. \ref{def:adasample}, $\dPkk(p_i)\geq \dX{p_i}{\bar p_i}-\epsilon_kf(\bar p_i)>6\epsilon_kf(\bar p_i)$.
By Lemma~\ref{lem:adaptiveDensity}, there exists $q\in \outputP$ such that $\dX{\bar p_i}{q}\leq 5\epsilon_kf(\bar p_i)$. Thus,
$$
\dPkk(q)\leq \dPkk(\bar{p}_i)+\dX{\bar p_i}{q} \leq 6\epsilon_kf(\bar p_i).
$$
Therefore, $\dPkk(p_i)> 6\epsilon_kf(\bar p_i)\geq \dPkk(q)$ implying that $q\in Q_{i-1}$.
Combining triangle inequality and Condition 2 of Def. \ref{def:adasample}, we have 
$$\dX{p_i}{q}\leq \dX{p_i}{\bar p_i}+\dX{\bar p_i}{q}
\leq \dPkk(p_i)+\epsilon_kf(\bar p_i)+5\epsilon_kf(\bar p_i)
< 2\dPkk(p_i). $$
Therefore, $q\in Q_{i-1}\cap B(p_i,2\dPkk(p_i))$, meaning that $p_i\notin \outputP$. 

Hence, we have a point of $Q_{i-1}$ inside the ball of center $p_i$ and radius $2\dPkk(p_i)$, which guarantees that $p_i$ is not selected.
The lemma then follows.

\section{Towards parameter-free denoising for adaptive case}
\label{appendix:sec:parafree}

Unfortunately, our parameter-free denoising algorithm does not fully work in the adaptive setting.
We can still prove  that one iteration of the loop works. 
However, the value chosen for the resampling
constant $C$ has to be sufficiently large with respect to $\epsilon_k$.
This condition is not satisfied
when $k$ is large as $\epsilon_k$ in that case is very large.

\begin{theorem}
Let $P$ be a point set that is both a \luensmp{\epsilon_{2k}}{2} and a \luensmp{\epsilon_{k}}{2} of $K$. 
Applying one step of the {\sf ParfreeDeclutter} algorithm with parameter $2k$ gives a point set $P'$ which is a \luensmp{\epsilon_{2k}}{2} of $K$ when $\epsilon_{2k}$ is sufficiently small and the resampling constant $C$ is sufficiently large.
\end{theorem}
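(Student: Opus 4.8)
The plan is to follow the blueprint of Lemma~\ref{lem:resamplingGuarantees}, adapting every estimate by the local feature size. Write $Q$ for the output of {\sf Declutter}$(P,2k)$ and $P'=\bigcup_{q\in Q}B(q,C\dPkkk(q))\cap P$ for the resampled set, so that $P'\subseteq P$. Because $P'\subseteq P$ forces the $2k$-distance computed in $P'$ to dominate the one computed in $P$ pointwise (removing points can only push the $i$-th nearest neighbor farther), Conditions~2 and~3 of Def.~\ref{def:adasample} for the target \luensmp{\epsilon_{2k}}{2} are inherited immediately from $P$: enlarging the $2k$-distance only helps the upper bound $\dX{y}{K}\le d_{P',2k}(y)+\epsilon_{2k}f(\bar y)$ in Condition~2 and the lower bound $d_{P',2k}(p)\ge\frac{\epsilon_{2k}}{2}f(\bar p)$ in Condition~3. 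Hence the whole content reduces to Condition~1, namely $d_{P',2k}(x)\le\epsilon_{2k}f(x)$ for every $x\in K$, which I would establish by showing that \emph{all} $2k$ nearest neighbors of $x$ in $P$ survive the resampling, so that $d_{P',2k}(x)=\dPkkk(x)\le\epsilon_{2k}f(x)$.

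To do this I would fix $x\in K$ and reproduce the non-adaptive chain. First pick the nearest neighbor $p$ of $x$ in $P$, so $\dX{x}{p}\le\dPkkk(x)\le\epsilon_{2k}f(x)$, and by the $1$-Lipschitz property (Claim~\ref{claim:dPkkLip}) get $\dPkkk(p)\le 2\epsilon_{2k}f(x)$. The way {\sf Declutter} discards points then produces $q\in Q$ with $\dX{p}{q}\le 2\dPkkk(p)\le 4\epsilon_{2k}f(x)$, whence $\dX{x}{q}\le 5\epsilon_{2k}f(x)$, while the uniformity Condition~3 for $P$ supplies the lower bound $\dPkkk(q)\ge\frac{\epsilon_{2k}}{2}f(\bar q)$. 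The last ingredient is Lemma~\ref{lem:iNNupperBound}, which bounds the farthest of the $2k$ neighbors by $\dX{x}{p_{2k}}\le\sqrt{2k}\,\dPkkk(x)\le\sqrt{2k}\,\epsilon_{2k}f(x)$; consequently each of these neighbors lies within distance $(5+\sqrt{2k})\epsilon_{2k}f(x)$ of $q$. It then remains to certify the inclusion $(5+\sqrt{2k})\epsilon_{2k}f(x)\le C\dPkkk(q)$, which would place all $2k$ neighbors inside $B(q,C\dPkkk(q))$ and hence in $P'$.

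Both hypotheses surface precisely at this certification, and explaining them is the heart of the argument. The first difficulty is that the radius to be covered is scaled by $f(x)$ whereas the available radius $C\dPkkk(q)\ge\frac{C}{2}\epsilon_{2k}f(\bar q)$ is scaled by $f(\bar q)$, and these feature sizes differ; I would reconcile them through the $1$-Lipschitz property of $f$ together with $\dX{x}{\bar q}\le\dX{x}{q}+\dX{q}{\bar q}\le 2\dX{x}{q}\le 10\epsilon_{2k}f(x)$ (using that $\bar q$ is a closest point of $q$ in $K$ and $x\in K$), which yields $f(\bar q)\ge(1-10\epsilon_{2k})f(x)$ and collapses unless $\epsilon_{2k}<\tfrac1{10}$ --- this is exactly where smallness of $\epsilon_{2k}$ is forced. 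The second, and the genuine obstacle, is the factor $\sqrt{2k}$ from Lemma~\ref{lem:iNNupperBound}: the inclusion reduces to $C\ge\frac{2(5+\sqrt{2k})}{1-10\epsilon_{2k}}$, so the resampling constant must grow like $\sqrt{k}$ in order to reach the outermost of the $2k$ neighbors (note that for $k=1$ and $\epsilon_{2k}\to 0$ this recovers exactly $C=10+2\sqrt2$, the algorithm's value). For a single iteration this is harmless --- one simply takes $C$ large enough --- which proves the theorem, but it is fatal to iterating the loop parameter-free: as $i$ grows, $k=2^i$ grows, the required $C$ grows like $2^{i/2}$, and simultaneously $\epsilon_{2^i}$ stops being small, so no fixed $C$ can work across all scales. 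I expect the feature-size bookkeeping --- keeping straight at each step whether one compares to $f(x)$, $f(\bar q)$, or $f(\bar p)$ --- to be the main routine hurdle, while the conceptual obstacle is the unavoidable $\sqrt{k}$ blow-up in the neighbor radius.
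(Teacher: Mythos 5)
Your skeleton coincides with the paper's: reduce everything to Condition~1 (Conditions~2 and~3 pass to any subset of $P$), produce $q\in Q\subseteq P'$ with $\dX{x}{q}\le 5\epsilon_{2k}f(x)$ and $\dPkkk(q)\le 2\epsilon_{2k}f(x)$, lower-bound $\dPkkk(q)$ via the uniformity condition after transferring $f(\bar q)$ to $f(x)$ by $1$-Lipschitzness of $f$, and finish with a ball inclusion. The divergence --- and it is the step that carries all the content --- is which nearest neighbors you ask to survive the resampling. You demand all $2k$ of them, invoke Lemma~\ref{lem:iNNupperBound} at $i=2k$, and pay the factor $\sqrt{2k}$; this is what forces your $C\gtrsim 2\sqrt{2k}$ and your verdict that the blow-up is ``unavoidable''. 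The paper asks only that the first $k$ nearest neighbors of $x$ survive: applying Lemma~\ref{lem:iNNupperBound} with $i\le k$ against the $2k$-distance gives $\sqrt{2k/(2k-i+1)}\le\sqrt{2}$, so the ball to be captured is $B(x,\sqrt{2}\,\epsilon_{2k}f(x))$ and the requirement on the resampling constant becomes $C\ge \frac{2(5+\sqrt{2})(1+\epsilon_{2k})}{1-7\epsilon_{2k}}$, independent of $k$ and reducing to the algorithm's $10+2\sqrt{2}$ as $\epsilon_{2k}\to 0$. This is exactly the mechanism of the non-adaptive Lemma~\ref{lem:resamplingGuarantees}, whose conclusion is deliberately certified at the scales $k'\le k$ rather than at $2k$: the next iteration of the loop runs {\sf Declutter} with parameter $k$, so the scale-$k$ condition on $P'$ is what is actually needed, and this is also why the hypothesis of the present theorem carries the \luensmp{\epsilon_k}{2} assumption that your argument never uses. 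The paper's proof accordingly establishes $d_{P',k}(x)=d_{P,k}(x)\le\epsilon_k f(x)$, not $d_{P',2k}(x)\le\epsilon_{2k}f(x)$, and the conclusion of the statement should be read at that scale.

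Consequently your closing structural diagnosis is off target: there is no $\sqrt{k}$ obstruction to iterating. The obstruction the paper actually identifies in the adaptive case is that the $k$-independent threshold $\frac{2(5+\sqrt{2})(1+\epsilon_{2k})}{1-7\epsilon_{2k}}$ still requires $\epsilon_{2k}<1/7$, and in the early iterations (large $k$) the admissible $\epsilon_{2k}$ is large, so a fixed $C$ cannot certify those rounds. Your feature-size bookkeeping is otherwise sound --- the bound $f(\bar q)\ge(1-10\epsilon_{2k})f(x)$ via $\dX{q}{\bar q}\le\dX{q}{x}$ is even a little more direct than the paper's route through Condition~2 --- and your reduction to Condition~1 matches the paper; the gap is the choice of the wrong target scale, which turns a theorem with an absolute resampling constant into one whose constant degenerates with $k$.
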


\begin{proof}
As in the global conditions case, only the first condition has to be checked.
Let $x\in K$ then, following the proof of Lemma~\ref{lem:adaptiveDensity}, there exists $q\in P'$ such that $\dX{x}{q}\leq5\epsilon_{2k}$ and $\dPkkk(q)\leq 2\epsilon_{2k}$.
The feature size $f$ is 1-Lipschitz and thus:
\begin{align*}
f(x)&\leq f(\bar q)+\dX{\bar q}{x}\\
&\leq f(\bar q)+\dX{q}{\bar q}+\dX{q}{x}\\
&\leq f(\bar q)+\dPkkk(q)+\epsilon_{2k} f(\bar q)+5\epsilon_{2k} f(x)
\end{align*}
Hence $$f(\bar q)\geq \frac{1-7\epsilon_{2k}}{1+\epsilon_{2k}} f(x).$$
Therefore $\dPkkk(q)\geq \frac{1-7\epsilon_{2k}}{1+\epsilon_{2k}}\frac{\epsilon_{2k}}{2} f(x)$.
The claimed result is obtained if the constant $C$ satisfies $C\geq\frac{2(5+\sqrt{2})(1+\epsilon_{2k})}{1-7\epsilon_{2k}}$ as $B(x,\sqrt{2}\epsilon_{2k}f(x))\subset B(q,C\dPkkk(q))$.
\end{proof}

\section{Application to topological data analysis}\label{sec:tda}

In this section, we provide an example of using our decluttering algorithm for topology inference. 
We quickly introduce notations for some notions of algebraic topology and refer the reader to~\cite{ctaiEH,atH,eatM} for the definitions and basic properties. 
Our approaches mostly use standard arguments from the literature of topology inference; e.g, \cite{MEscalar,tpbresCO,sdlutimDDW}. 

Given a topological space $X$, we denote $H_i(X)$ its $i$-dimensional
homology group with coefficients in a field.
As all our results are independent of $i$, we will write $H_*(X)$.
We consider the persistent homology of filtrations obtained as sub-level sets of distance functions.
Given a compact set $K$, we denote the distance function to $K$ by $d_K$.
We moreover assume that the ambient space is triangulable which ensures that these functions are tame and the persistence diagram $\Dgm{d_K^{-1}}$ is well defined.
We use $d_B$ for the bottleneck distance between two persistence diagrams.
We recall the main theorem from~\cite{spdCEH} which implies:

\begin{proposition}
Let $A$ and $B$ be two triangulable compact sets in a metric space. Then,
$$d_B(\Dgm{d_A^{-1}},\Dgm{d_B^{-1}})\leq d_H(A,B).$$
\end{proposition}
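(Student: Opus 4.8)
The plan is to realize this proposition as a direct consequence of the stability theorem for persistence diagrams from \cite{spdCEH}, which bounds the bottleneck distance between the diagrams of two tame real-valued functions on a common triangulable domain by their $L_\infty$-distance. Since $A$ and $B$ are triangulable compact sets sitting in a triangulable ambient space, the distance functions $d_A$ and $d_B$ are tame, so both $\Dgm{d_A^{-1}}$ and $\Dgm{d_B^{-1}}$ are well defined and the stability theorem applies to the pair $(d_A, d_B)$ directly, as they live on the same domain.

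The only genuine work is therefore to translate the Hausdorff distance between the sets into an $L_\infty$-bound between their distance functions, i.e. to show $\norm{d_A - d_B}_\infty \le d_H(A,B)$. First I would fix an arbitrary point $x$ in the metric space and set $\delta = d_H(A,B)$. Let $a \in A$ be a point realizing $d_A(x) = \dX{x}{a}$ (such a point exists by compactness of $A$). By the definition of the Hausdorff distance there is some $b \in B$ with $\dX{a}{b} \le \delta$, so the triangle inequality gives $d_B(x) \le \dX{x}{b} \le \dX{x}{a} + \dX{a}{b} \le d_A(x) + \delta$. Swapping the roles of $A$ and $B$ yields the reverse inequality, and hence $|d_A(x) - d_B(x)| \le \delta$ for every $x$, which is exactly $\norm{d_A - d_B}_\infty \le d_H(A,B)$.

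Combining the two ingredients then finishes the argument:
$$ d_B(\Dgm{d_A^{-1}}, \Dgm{d_B^{-1}}) \le \norm{d_A - d_B}_\infty \le d_H(A,B). $$
I do not anticipate a substantive obstacle here, since this is essentially a corollary: tameness and well-definedness of the diagrams are already granted by the triangulability hypothesis, and the remaining content is the elementary $1$-Lipschitz-type comparison of distance functions. The only point requiring a little care is the existence of the nearest point $a \in A$ realizing $d_A(x)$, which is where compactness of $A$ is used; alternatively one can run the same estimate along an infimizing sequence and pass to the limit, thereby avoiding any explicit appeal to compactness.
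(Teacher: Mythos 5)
Your argument is correct and is exactly the derivation the paper intends: the paper simply cites the stability theorem of \cite{spdCEH} and notes that it ``implies'' the proposition, with the implicit step being the standard bound $\norm{d_A - d_B}_\infty \le d_H(A,B)$ that you spell out. Your filled-in details (the triangle-inequality comparison of distance functions and the remark on tameness from triangulability) are the right ones and introduce no gap.
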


This result trivially guarantees that the result of our decluttering algorithm allows us to approximate the persistence diagram of the ground truth.

\begin{corollary}
Given a point set $P$ which is an \ensmp{\epsilon_k} of a compact set $K$, the {\sf Declutter} algorithm returns a set $\outputP$ such that
$$d_B(\Dgm{d_K^{-1}},\Dgm{d_{\outputP}^{-1}})\leq 7\epsilon_k.$$
\end{corollary}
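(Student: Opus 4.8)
The plan is to combine the stability result for persistence diagrams (the Proposition just cited from \cite{spdCEH}) with the Hausdorff bound already established in Theorem \ref{th:sparsifyingGuarantees}. Since the corollary is essentially a direct specialization, the proof is short and the only real content is invoking the correct prior results in the correct order.

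First I would observe that the input hypothesis is exactly the hypothesis of Theorem \ref{th:sparsifyingGuarantees}: $P$ is an \ensmp{\epsilon_k} of the compact set $K$. Applying that theorem immediately yields a subset $\outputP \subseteq P$ output by {\sf Declutter} satisfying $\delta_H(K, \outputP) \le 7\epsilon_k$. Next I would check the applicability of the stability Proposition: it requires that both $K$ and $\outputP$ be triangulable compact sets in a metric space. The set $K$ is triangulable by the standing assumption in this section (the ambient space is triangulable, ensuring the distance functions are tame and the diagrams well defined), and $\outputP$ is a finite point set — hence trivially a compact, triangulable set. Therefore the Proposition applies to the pair $(A, B) = (K, \outputP)$, giving
$$
d_B\bigl(\Dgm{d_K^{-1}}, \Dgm{d_{\outputP}^{-1}}\bigr) \le d_H(K, \outputP).
$$
Finally I would chain the two inequalities: $d_H(K,\outputP) = \delta_H(K,\outputP) \le 7\epsilon_k$, which delivers the claimed bound $d_B(\Dgm{d_K^{-1}}, \Dgm{d_{\outputP}^{-1}}) \le 7\epsilon_k$.

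There is essentially no hard step here; the work was already done in establishing the Hausdorff guarantee of {\sf Declutter}. The only point requiring a moment of care is the triangulability/tameness bookkeeping needed so that both persistence diagrams $\Dgm{d_K^{-1}}$ and $\Dgm{d_{\outputP}^{-1}}$ are well defined and the bottleneck stability theorem of \cite{spdCEH} genuinely applies — but this is immediate once one notes $\outputP$ is finite and the ambient space is assumed triangulable. If anything, the subtlety worth flagging is merely notational consistency between $\delta_H$ (the Hausdorff distance used in Theorem \ref{th:sparsifyingGuarantees}) and $d_H$ (used in the Proposition), which denote the same quantity.
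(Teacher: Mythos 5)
Your proposal is correct and matches the paper's (implicit) argument exactly: the paper derives this corollary as a trivial consequence of the stability proposition from \cite{spdCEH} combined with the Hausdorff bound $\delta_H(K,\outputP)\le 7\epsilon_k$ from Theorem \ref{th:sparsifyingGuarantees}. Your additional remarks on triangulability and notation are sound bookkeeping but add nothing beyond what the paper already assumes.
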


The algorithm reduces the size of the set needed to compute an approximation diagram.
Previous approaches relying on the distance to a measure to handle noise ended up with a weighted set of size roughly $n^k$ or used multiplicative approximations which in turn implied a stability result at logarithmic scale for the Bottleneck distance~\cite{MEsparse,wkdGMM}.
The present result uses an unweighted distance to compute the persistence diagram and provides guarantees without the logarithmic scale using fewer points than before.

If one is interested in inferring homology instead of computing a persistence diagram, our previous results guarantee that the \v Cech complex $C_\alpha(\outputP)$ or the Rips complex $R_\alpha(\outputP)$ can be used.
Following~\cite{tpbresCO}, we use a nested pair of filtration to remove noise.
Given $A\subset B$, we consider the map $\phi$ induced at the homology level by the inclusion $A\hookrightarrow B$. We denote $H_*(A\hookrightarrow B)=\mathrm{Im}(\phi)$.
More precisely, denoting $K^\lambda=d_K^{-1}(\lambda)$ and $\wfs$ as
the weak feature size, we obtain:

\begin{proposition}\label{prop:CechGuarantees}
Let $P$ be an \ensmp{\epsilon_k} of a 
compact set $K\subset\R^d$ with $\epsilon_k<\frac{1}{28}\wfs(K)$.
Let $\outputP$ be the output of {\sf Declutter}($P$).
Then for all $\alpha$, $\alpha'\in[7\epsilon_k,\wfs(K)-7\epsilon_k]$ such that $\alpha'-\alpha>14\epsilon_k$ and for all $\lambda\in(0,\wfs(K))$, we have
$$H_*(K^\lambda)\cong H_*(C_\alpha(\outputP)\hookrightarrow C_{\alpha'}(\outputP))$$
\end{proposition}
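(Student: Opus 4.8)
The plan is to reduce the statement to a standard interleaving/``sandwich'' argument for topology inference from a Hausdorff-close sample, in the spirit of the nested-pair construction of \cite{tpbresCO}. Set $\eta=7\epsilon_k$ and, for a compact $A\subseteq\X$, write $A^\mu=\{x\in\X:\dX{x}{A}\le\mu\}$ for its $\mu$-offset. By Theorem~\ref{th:sparsifyingGuarantees} we have $\delta_H(K,\outputP)\le\eta$, and this Hausdorff bound immediately yields the offset interleaving $K^{\mu-\eta}\subseteq\outputP^{\mu}\subseteq K^{\mu+\eta}$ for every $\mu\ge\eta$. Because $\outputP\subseteq\R^d$, the balls $B(q,\mu)$ are convex, so the Nerve Lemma gives a homotopy equivalence $C_\mu(\outputP)\simeq\outputP^{\mu}$ that is \emph{functorial} in $\mu$; consequently $H_*(C_\alpha(\outputP)\hookrightarrow C_{\alpha'}(\outputP))\cong\mathrm{Im}\big(H_*(\outputP^{\alpha})\to H_*(\outputP^{\alpha'})\big)$, and it suffices to identify this image with $H_*(K^\lambda)$.

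Next I would use the theory of the weak feature size \cite{stcsesCCL}: $\dK$ has no critical value in $(0,\wfs(K))$, so for all $0<\mu\le\mu'<\wfs(K)$ the inclusion $K^{\mu}\hookrightarrow K^{\mu'}$ is a homotopy equivalence and induces an isomorphism on homology; in particular every such $H_*(K^\mu)$ is canonically identified with $H_*(K^\lambda)$. The hypotheses are exactly tailored to keep the relevant radii in this range: from $\alpha,\alpha'\in[\eta,\wfs(K)-\eta]$ with $\alpha<\alpha'$ one gets $0\le\alpha-\eta$, $\alpha+\eta<\wfs(K)$ and $0<\alpha'-\eta$, the gap $\alpha'-\alpha>2\eta=14\epsilon_k$ gives $\alpha+\eta\le\alpha'-\eta$, and the blanket bound $\epsilon_k<\tfrac1{28}\wfs(K)$ is precisely what makes the admissible window (of width $\wfs(K)-14\epsilon_k$) wide enough to contain a pair with this gap.

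The heart of the argument is then purely algebraic. I would consider the chain of inclusions
\[
K^{\alpha-\eta}\hookrightarrow\outputP^{\alpha}\hookrightarrow K^{\alpha+\eta}\hookrightarrow K^{\alpha'-\eta}\hookrightarrow\outputP^{\alpha'}\hookrightarrow K^{\alpha'+\eta},
\]
each inclusion being valid by the interleaving and the gap condition, apply $H_*$, and label the induced maps $u\to v\xrightarrow{a}w\xrightarrow{b}x\xrightarrow{c}y\to z$. The composites $u\to w$ and $x\to z$ are nothing but the offset inclusions $K^{\alpha-\eta}\hookrightarrow K^{\alpha+\eta}$ and $K^{\alpha'-\eta}\hookrightarrow K^{\alpha'+\eta}$, hence isomorphisms by the previous paragraph, as is $b=H_*(K^{\alpha+\eta}\hookrightarrow K^{\alpha'-\eta})$. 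Thus the isomorphism $u\to w$ forces $a$ to be surjective and the isomorphism $x\to z$ forces $c$ to be injective. Since the inclusion-induced map $v\to y$ factors as $c\circ b\circ a$, surjectivity of $a$, bijectivity of $b$, and injectivity of $c$ give $\mathrm{Im}(v\to y)\cong x=H_*(K^{\alpha'-\eta})\cong H_*(K^\lambda)$, which is the desired identification.

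I expect the genuine obstacle to be the \emph{functorial} (persistent) form of the Nerve Lemma invoked in the first step: one must verify that the homotopy equivalences $C_\mu(\outputP)\simeq\outputP^{\mu}$ can be chosen to commute with the inclusions, so that it is the \emph{image} of a homology map, and not merely an isolated homology group, that is preserved. The only other delicate point is the two borderline radii $\alpha-\eta$ and $\alpha'+\eta$, which may sit exactly at $0$ and at $\wfs(K)$; these I would dispatch with the standard care for such statements, using triangulability of $K$ to identify $H_*(K)$ with $H_*(K^\mu)$ for small $\mu>0$ and the usual half-open convention at the top end, since all strictly interior radii ($\alpha+\eta$, $\alpha'-\eta$) already lie safely inside $(0,\wfs(K))$.
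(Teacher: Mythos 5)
Your proof is correct and is essentially the paper's argument: the paper simply combines the Hausdorff bound $\delta_H(K,\outputP)\le 7\epsilon_k$ from Theorem~\ref{th:sparsifyingGuarantees} with a citation of Theorems 3.5--3.6 of \cite{tpbresCO}, whose content is exactly the offset-interleaving/persistent-nerve sandwich you reconstruct (including the two caveats you flag, which are handled there by the persistent nerve lemma and the convention of working with $H_*(K^\lambda)$ for $\lambda\in(0,\wfs(K))$). No substantive difference in route, only in how much of the cited machinery is unpacked.
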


\begin{proposition}
Let $P$ be an \ensmp{\epsilon_k} of a compact 
set $K\subset\R^d$ with $\epsilon_k<\frac{1}{35}\wfs(K)$.
Let $\outputP$ be the output of {\sf Declutter}($P$).
Then for all $\alpha\in[7\epsilon_k,\frac{1}{4}(\wfs(K)-7\epsilon_k)]$ and $\lambda\in(0,\wfs(K))$, we have
$$H_*(K^\lambda)\cong H_*(R_\alpha(\outputP)\hookrightarrow R_{4\alpha}(\outputP))$$
\end{proposition}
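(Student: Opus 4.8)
The plan is to reduce the Rips statement to the already-proved Čech statement (Proposition \ref{prop:CechGuarantees}) by exploiting the tight interleaving between Vietoris--Rips and Čech complexes, following the persistence-based reconstruction of \cite{tpbresCO}. Throughout I write $\eta=7\epsilon_k$, $w=\wfs(K)$, and $r=\dim H_*(K^\lambda)$, and I recall from Theorem \ref{th:sparsifyingGuarantees} that $\delta_H(\outputP,K)\le\eta$. Since the ambient space is $\R^d$, Euclidean balls are convex, so by the nerve lemma $C_s(\outputP)$ is homotopy equivalent to the offset $\outputP^s=d_{\outputP}^{-1}([0,s])$, compatibly with inclusions. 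Thus Proposition \ref{prop:CechGuarantees}, although phrased for Čech complexes, is really a statement about the offset filtration of $\outputP$, and it already encapsulates the distance-function stability of \cite{spdCEH} and the weak-feature-size input that we need: for Čech scales $s\le t$ lying in $[\eta,w-\eta]$ with $t-s>2\eta$, the inclusion-induced map $H_*(C_s(\outputP))\to H_*(C_t(\outputP))$ has image isomorphic to $H_*(K^\lambda)$, hence rank $r$.

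First I would record the interleaving $C_s(\outputP)\subseteq R_s(\outputP)\subseteq C_{2s}(\outputP)$, where $R_s$ joins points at distance $\le 2s$: the left inclusion is immediate, and the right one holds because any vertex of a simplex of $R_s$ serves as a common center for the radius-$2s$ balls. This produces the nested chain
$$C_\alpha\subseteq R_\alpha\subseteq C_{2\alpha}\subseteq R_{2\alpha}\subseteq C_{4\alpha}\subseteq R_{4\alpha}\subseteq C_{8\alpha},$$
all for $\outputP$, and all the induced maps commute because they are inclusions of subcomplexes of the full simplex on $\outputP$.

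Next I would sandwich the Rips map $H_*(R_\alpha)\to H_*(R_{4\alpha})$ between Čech maps of controlled rank. For the upper bound, the subchain $R_\alpha\subseteq C_{2\alpha}\subseteq C_{4\alpha}\subseteq R_{4\alpha}$ shows the Rips map factors through $H_*(C_{2\alpha})\to H_*(C_{4\alpha})$, whose rank is $r$ since $2\alpha,4\alpha\in[\eta,w-\eta]$ and the gap $2\alpha>2\eta$ both follow from $\alpha\in[\eta,\tfrac14(w-\eta)]$; hence $\mathrm{rank}\big(H_*(R_\alpha)\to H_*(R_{4\alpha})\big)\le r$. For the lower bound I would use the classical persistence rank inequality: for linear maps $A\xrightarrow{f}B\xrightarrow{g}C\xrightarrow{h}D$ one has $\mathrm{rank}(hgf)\ge \mathrm{rank}(gf)+\mathrm{rank}(hg)-\mathrm{rank}(g)$. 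Applying it with the Rips map in the middle position and a flanking pair of wide-gap Čech maps of rank $r$ that factor through it, and finally upgrading rank equality to an isomorphism of images by invoking commutativity of the ladder together with the canonical identifications $\mathrm{Im}(C_s\to C_t)\cong H_*(K^\lambda)$, would yield $\mathrm{Im}\big(H_*(R_\alpha)\to H_*(R_{4\alpha})\big)\cong H_*(K^\lambda)$.

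The main obstacle is the constant bookkeeping on the lower-bound side, and it is exactly what dictates the factor $4$ together with the hypothesis $\epsilon_k<\tfrac1{35}w$. Because the interleaving inflates scales by a factor $2$ in each direction, the naive way to certify that a rank-$r$ class survives all the way into $R_{4\alpha}$ wants to push up to $C_{8\alpha}$, whose scale $8\alpha$ can exceed $w-\eta$, at which point the class may already be dying in the offset filtration; keeping every auxiliary Čech scale inside the admissible window $[\eta,w-\eta]$ with gaps $>2\eta$ is the delicate part. It is precisely the requirement that the top scale $4\alpha$ stay $\le w-\eta$, i.e. $\alpha\le\tfrac14(w-\eta)$, together with the lower cutoff $\alpha\ge\eta$, that makes the interval nonempty only when $\eta\le\tfrac14(w-\eta)$, i.e. $35\epsilon_k\le w$, and this is why the Rips guarantee needs a strictly stronger sampling bound than its Čech counterpart. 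Should the flanking application of the rank inequality fail to remain in range at the lower end of the interval, I would instead invoke the Rips reconstruction theorem of \cite{tpbresCO} as a black box, whose hypotheses are verified by the offset interleaving and Hausdorff bound established above.
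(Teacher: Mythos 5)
Your fallback sentence is, in fact, the paper's entire proof: the paper obtains this proposition (and its \v Cech companion, Proposition \ref{prop:CechGuarantees}) as a direct consequence of \cite[Theorems 3.5 \& 3.6]{tpbresCO}, the hypotheses being supplied by the Hausdorff bound $\delta_H(\outputP,K)\le 7\epsilon_k$ of Theorem \ref{th:sparsifyingGuarantees}. Your primary route---interleaving Rips with \v Cech and sandwiching ranks---is the right idea in spirit (it is essentially how the cited Rips theorem is proved), but as executed it has a genuine gap, and it is exactly the one you flag without resolving. To lower-bound the rank of $H_*(R_\alpha)\to H_*(R_{4\alpha})$ you must exhibit a map of rank $\ge r$ that factors through it; with your convention ($C_s\subseteq R_s\subseteq C_{2s}$) any such inclusion-induced certificate is a \v Cech map $C_s\to C_t$ with $s\le\alpha$ and $t\ge 8\alpha$ (still $t\ge 4\sqrt{2}\,\alpha$ even using Jung's theorem). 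On the upper part of the claimed interval this forces $t>\wfs(K)-7\epsilon_k$, where stability gives no guarantee: offsets of $\outputP$ beyond the weak feature size can lose the homology of $K^\lambda$, so $\mathrm{rank}\bigl(H_*(C_s)\to H_*(C_t)\bigr)$ may be strictly less than $r$. The Frobenius rank inequality cannot rescue this, because every term it would need ($\mathrm{rank}(gf)$, $\mathrm{rank}(hg)$) involves exactly the same out-of-window scales; there is simply no rank-$r$ input available to feed it. (There is also a minor boundary defect at $\alpha=7\epsilon_k$: the pair $(2\alpha,4\alpha)$ then has gap exactly $14\epsilon_k$, while Proposition \ref{prop:CechGuarantees} requires a gap strictly larger.)

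Worse, under your stated convention the claim you are trying to prove is false at the top of the interval, so no bookkeeping can close the gap. Take $K$ the unit circle in $\R^2$ (so $\wfs(K)=1$), $\outputP$ a very dense sample of $K$, and $\alpha=\frac14(\wfs(K)-7\epsilon_k)$: then $R_{4\alpha}$ joins points at Euclidean distance up to $2-14\epsilon_k>\sqrt{3}$, and Vietoris--Rips complexes of (dense samples of) the circle at such scales are simply connected, so the image of $H_1$ is $0\neq H_1(K^\lambda)$. The proposition must therefore be read with the convention of \cite{tpbresCO}, where $R_\alpha$ consists of simplices of diameter at most $\alpha$; the interleaving becomes $C_{\alpha/2}\subseteq R_\alpha\subseteq C_\alpha$, the largest auxiliary \v Cech scale is only $4\alpha$, and the factor-two inflation moves to the bottom of the chain, so that a reduction to the \v Cech proposition instead needs attention on $\alpha\in[7\epsilon_k,14\epsilon_k)$, where $\alpha/2$ falls below the admissible window. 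In other words, which end of the interval is problematic is dictated by the Rips convention, and this is precisely the content hidden in the constants; the paper sidesteps all of it by invoking the black-box Rips reconstruction theorem of \cite{tpbresCO} directly, which is what you relegated to a contingency.
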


These two propositions are direct consequences of~\cite[Theorems 3.5 \& 3.6]{tpbresCO}.
To be used, both these results need the input of one or more parameters, $\alpha$ and $\alpha'$, corresponding to a choice of scale.
This cannot be avoided as it is equivalent to estimating the Hausdorff distance between a point set and an unknown compact set, problem discussed in the introduction.
However, by adding a uniformity hypothesis and knowing the uniformity constant $c$, the problem can be solved.
We use the fact that the minimum $\dPkk$ over the point set $P$ is bounded
from below.
Let us write $\kappa=\min_{p\in P} \dPkk(p)$.

\begin{lemma}
If $P$ is an \ensmp{\epsilon_k} of $K$ then $\kappa\leq 2\epsilon_k$.
\end{lemma}

\begin{proof}
Let $x\in K$, then there exists $p\in P$ such that $\dX{x}{p}\leq\dPkk(x)\leq\epsilon_k$.
Therefore $\kappa\leq\dPkk(p)\leq\dPkk(x)+\dX{x}{p}\leq 2\epsilon_k$.
\end{proof}

This trivial observation has the consequence that $c$ is greater than $\frac{1}{2}$ in any \uensmp{\epsilon_k}{c}.
We can compute $c\kappa$ and use it to define an $\alpha$ for using the previous propositions.
We formulate the conditions precisely in the following propositions.
Note that the upper bound for $\alpha$ is not necessarily known.
However, the conditions imply that the interval of correct values for $\alpha$ is non-empty.

\begin{proposition}
Let $P$ be a \uensmp{\epsilon_k}{c} of a compact 
set $K\subset\R^d$ with $c\epsilon_k<\frac{1}{56}\wfs(K)$.
Let $\outputP$ be the output of {\sf Declutter}($P$).
Then for all $\alpha$, $\alpha'\in[7c\kappa,\wfs(K)-7c\epsilon_k]$ such that $\alpha'-\alpha>14c\kappa$ and for all $\lambda\in(0,\wfs(K))$, we have
$$H_*(K^\lambda)\cong H_*(C_\alpha(\outputP)\hookrightarrow C_{\alpha'}(\outputP))$$
\end{proposition}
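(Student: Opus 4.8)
The plan is to reduce this statement to the already-proved inference result Proposition~\ref{prop:CechGuarantees}, whose only drawback for actual use is that its admissible scale window $[7\epsilon_k,\wfs(K)-7\epsilon_k]$ is written in terms of the unknown $\epsilon_k$. The purpose of the extra uniformity hypothesis is exactly to let us trade this unknown quantity for something computable from the data, namely $\kappa=\min_{p\in P}\dPkk(p)$ together with the prescribed constant $c$. Accordingly, the first step is to record a two-sided bracketing of $\epsilon_k$ by $\kappa$: the lemma above gives $\kappa\le 2\epsilon_k$, while Condition~$3$ of the \uensmp{\epsilon_k}{c} sampling gives $\dPkk(p)\ge \epsilon_k/c$ for all $p$, hence $\kappa\ge\epsilon_k/c$. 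Rearranging yields $\tfrac{\kappa}{2}\le\epsilon_k\le c\kappa$, and (as already observed) $c>\tfrac12$.

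With this bracketing in hand, the rest is a matter of checking that the hypotheses of Proposition~\ref{prop:CechGuarantees} are met for any $\alpha,\alpha'$ chosen as in the statement. First, the base requirement $\epsilon_k<\tfrac{1}{28}\wfs(K)$ follows because $c>\tfrac12$ gives $\epsilon_k\le 2c\epsilon_k$, and the hypothesis $c\epsilon_k<\tfrac{1}{56}\wfs(K)$ then yields $\epsilon_k\le 2c\epsilon_k<\tfrac{1}{28}\wfs(K)$. Next, the lower endpoint and the separation of the scale window are immediate from $\epsilon_k\le c\kappa$: from $\alpha\ge 7c\kappa$ we get $\alpha\ge 7\epsilon_k$, and from $\alpha'-\alpha>14c\kappa$ we get $\alpha'-\alpha>14\epsilon_k$. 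The upper endpoint is the one point needing genuine care: one must pass from $\alpha'\le\wfs(K)-7c\epsilon_k$ to $\alpha'\le\wfs(K)-7\epsilon_k$, and this is precisely where the magnitude of the uniformity constant enters the reduction. Once all four conditions hold, Proposition~\ref{prop:CechGuarantees} delivers $H_*(K^\lambda)\cong H_*(C_\alpha(\outputP)\hookrightarrow C_{\alpha'}(\outputP))$ for every $\lambda\in(0,\wfs(K))$, which is the desired conclusion.

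It then remains to confirm that the prescribed interval is non-empty, so the statement is not vacuous. A valid pair $7c\kappa\le\alpha<\alpha'\le\wfs(K)-7c\epsilon_k$ with $\alpha'-\alpha>14c\kappa$ exists precisely when $\wfs(K)>7c\epsilon_k+21c\kappa$; using $\kappa\le 2\epsilon_k$ this is implied by $\wfs(K)>49c\epsilon_k$, and the hypothesis $c\epsilon_k<\tfrac{1}{56}\wfs(K)$ gives $49c\epsilon_k<\tfrac{49}{56}\wfs(K)<\wfs(K)$, as needed. This is exactly why the hypothesis is stated with the sharpened constant $\tfrac{1}{56}$ rather than the $\tfrac{1}{28}$ of the $\epsilon_k$-based version.

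The main obstacle here is not geometric — there is no new topology to establish — but bookkeeping: everything collapses to arithmetic with the bracketing $\tfrac{\kappa}{2}\le\epsilon_k\le c\kappa$. The single delicate point is the faithful translation of the scale window from the unknown $\epsilon_k$ to the computable endpoints built from $c\kappa$ and $c\epsilon_k$; one has to ensure simultaneously that the translated window still lies inside the range demanded by Proposition~\ref{prop:CechGuarantees} and that it does not collapse. It is this double constraint on the endpoints, especially the upper one where the value of $c$ matters, that pins down the constants appearing in the hypothesis.
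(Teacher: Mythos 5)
Your proof is correct and follows essentially the same route as the paper's: reduce to Proposition~\ref{prop:CechGuarantees} via the two-sided bracketing $\epsilon_k/c\le\kappa\le 2\epsilon_k$ coming from Condition~3 and the auxiliary lemma, then verify the endpoints, the separation, and the non-emptiness of the scale window by arithmetic. The one step you flag but do not actually complete --- passing from $\alpha'\le\wfs(K)-7c\epsilon_k$ to $\alpha'\le\wfs(K)-7\epsilon_k$, which requires $c\ge 1$ even though the definition only forces $c>\tfrac12$ --- is equally untreated in the paper's own proof, so this is a shared implicit assumption rather than a defect specific to your argument.
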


\begin{proof}
Following Proposition~\ref{prop:CechGuarantees}, we need to choose $\alpha$ and $\alpha'$ inside the interval $[7\epsilon_k,\wfs(K)-7\epsilon_k]$.
Using the third hypothesis, we know that $7c\kappa\geq7c\epsilon_k$.
We need to show that $\alpha$ and $\alpha'$ exist, i.e. $21 c\kappa<\wfs(K)-7\epsilon_k$.
Recall that $c\geq 2$ , $\kappa\leq 2\epsilon_k$. Therefore,
$21c\kappa+7\epsilon_k\leq 56 c\epsilon_k<\wfs(K)$.
\end{proof}

\begin{proposition}
Let $P$ be a \uensmp{\epsilon_k}{c} of a compact 
set $K\subset\R^d$ with $c\epsilon_k<\frac{1}{70}\wfs(K)$.
Let $\outputP$ be the output of {\sf Declutter}($P$).
Then for all $\alpha\in[7c\kappa,\frac{1}{4}(\wfs(K)-7\epsilon_k)]$ and $\lambda\in(0,\wfs(K))$, we have
$$H_*(K^\lambda)\cong H_*(R_\alpha(Q)\hookrightarrow R_{4\alpha}(Q))$$
\end{proposition}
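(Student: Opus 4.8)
The plan is to reduce this statement to the second (Rips) proposition above, the one assuming only that $P$ is an \ensmp{\epsilon_k} of $K$ with $\epsilon_k<\frac{1}{35}\wfs(K)$, in exactly the way the preceding \v Cech proposition was reduced to Proposition~\ref{prop:CechGuarantees}. That proposition serves as a black box guaranteeing the isomorphism $H_*(K^\lambda)\cong H_*(R_\alpha(\outputP)\hookrightarrow R_{4\alpha}(\outputP))$ whenever $\alpha\in[7\epsilon_k,\frac14(\wfs(K)-7\epsilon_k)]$ and $\lambda\in(0,\wfs(K))$. The entire role of the uniformity hypothesis is that, while $\epsilon_k$ itself is inaccessible (estimating it is tantamount to estimating the Hausdorff distance to the unknown $K$), the quantity $\kappa=\min_{p\in P}\dPkk(p)$ is computable, and combined with the known constant $c$ it yields a computable admissible lower endpoint. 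The three inequalities I would draw on are $\kappa\leq 2\epsilon_k$ (the lemma above), $c\kappa\geq\epsilon_k$ (from Condition~3 of uniformity), and $c\geq\frac12$ (the remark following the lemma).

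First I would check that the hypotheses of the Rips black box hold. Since $c\geq\frac12$ we have $\epsilon_k\leq 2c\epsilon_k<\frac{2}{70}\wfs(K)=\frac{1}{35}\wfs(K)$, so the required density condition $\epsilon_k<\frac{1}{35}\wfs(K)$ is satisfied. Next I would verify that the prescribed interval $[7c\kappa,\frac14(\wfs(K)-7\epsilon_k)]$ is contained in the admissible interval $[7\epsilon_k,\frac14(\wfs(K)-7\epsilon_k)]$: the upper endpoints are identical, and the lower endpoint is fine because $7c\kappa\geq 7\epsilon_k$ follows from $c\kappa\geq\epsilon_k$. Consequently any $\alpha$ in the stated range is automatically $\geq 7\epsilon_k$, so invoking the black box with this $\alpha$ immediately yields the desired isomorphism.

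The only genuinely substantive step is showing that the prescribed interval is non-empty, and this is where the constant $\frac{1}{70}$ is forced. I would require $7c\kappa<\frac14(\wfs(K)-7\epsilon_k)$, i.e. $28c\kappa+7\epsilon_k<\wfs(K)$, and bound the left side using $\kappa\leq 2\epsilon_k$ together with $\epsilon_k\leq 2c\epsilon_k$, giving $28c\kappa+7\epsilon_k\leq 56c\epsilon_k+14c\epsilon_k=70c\epsilon_k$, which is strictly below $\wfs(K)$ precisely by the hypothesis $c\epsilon_k<\frac{1}{70}\wfs(K)$. I do not anticipate any real obstacle: the argument is structurally identical to the \v Cech case, and the only difference is bookkeeping. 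The Rips result interleaves $R_\alpha$ with $R_{4\alpha}$ at the multiplicative scale $4$, so the usable window is the single interval $[7\epsilon_k,\frac14(\wfs(K)-7\epsilon_k)]$ rather than an additively separated pair $(\alpha,\alpha')$; the resulting non-emptiness constraint $28c\kappa+7\epsilon_k<\wfs(K)$ produces the factor $70$ here in place of the $56$ appearing in the \v Cech proposition.
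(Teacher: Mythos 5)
Your proof is correct and takes essentially the same approach as the paper, which simply remarks that the argument mirrors the preceding \v Cech case: verify $\epsilon_k\leq 2c\epsilon_k<\frac{1}{35}\wfs(K)$, note $7c\kappa\geq 7\epsilon_k$ from Condition 3, and establish non-emptiness via $28c\kappa+7\epsilon_k\leq 56c\epsilon_k+14c\epsilon_k=70c\epsilon_k<\wfs(K)$. Your use of $c\geq\frac{1}{2}$ is the intended reading (the ``$c\geq 2$'' appearing in the paper's \v Cech proof is a typo), so the bookkeeping matches exactly.
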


The proof is similar to the one for the previous proposition.
Note that even if the theoretical bound can be larger, we can always pick $\alpha=7c\kappa$ in the second case and the proof works.
The sampling conditions on these results can be weakened by using the more general notion of $(\epsilon_k,r,c)$-sample of \cite{MEthesis}, assuming that $r$ is sufficiently large with respect to $\epsilon_k$.

\newcommand{\cmone}		{{c_\X}}
\newcommand{\cmtwo}		{{c_{Lip}}}

\section{Extensions for Declutter algorithm}
\label{appendix:relax}

It turns out that our {\sf Declutter} algorithm can be run with different choices for the $k$-distance $\dPkk(x)$ as introduced in Definition \ref{def:kdist} which still yields similar denosing guarantees. 

Specifically, assume that we now have a certain robust distance estimate $\dPkk(x)$ for each point $x\in \X$ such that the following properties are satisfied. 

\begin{description}
\item[Conditions for $\dPkk$]. \\
(A) For any $x \in \X$, $d_\X (x, P) \le \dPkk(x)$; and \\
(B) $\dPkk$ is $1$-Lipschitz, that is, for any $x, y \in \X$ we have $\dPkk(x) \le \dPkk(y) + d(x, y)$. 
\end{description}

\begin{description}
\item[Examples]. \\
(1) We can set $\dPkk$ to be the average distance to $k$ nearest neighbors in $P$; that is, for any $x\in \X$, define $\dPkk(x) = \frac{1}{k}\sum_{i=1}^k d(x, p_i(x))$ where $p_i(x)$ is the $i$th nearest neighbor of $x$ in $P$. 
We refer to this as the \emph{average $k$-distance}. It is easy to show that the average $k$-distance satisfies the two conditions above. \\
(2) We can set $\dPkk(x)$ to be the distance from $x$ to its $k$-th nearest neighbors in $P$; that is, $\dPkk(x) = d_\X (x, p_k(x))$. 
We refer to this distance as {\it $k$-th NN-distance}. 
\end{description}

We can then define the sampling condition (as in Definitions \ref{def:sampling} and \ref{def:unisample}) based on our choice of $\dPkk$ as before. Notice that, under different choices of $\dPkk$, $P$ will be an $\varepsilon_k$-noisy sample for different values of $\varepsilon_k$.
Following the same arguments as in Section \ref{sec:decluttering}, we can show that Theorems \ref{th:sparsifyingGuarantees} and \ref{thm:adaptiveHaus} still hold as long as $\dPkk$ satisfies the two conditions above. For clarity, we provide an explicit statement for the analog of Theorem \ref{th:sparsifyingGuarantees} below and omit the corresponding statement for Theorem \ref{thm:adaptiveHaus}. 

 \begin{theorem}\label{thm:generalDeclutter}
Given a \ensmp{\epsilon_k} $P$ of a compact 
set $K\subseteq\X$ under a choice of $\dPkk$ that satisfies the conditions (A) and (B) as stated above, Algorithm 
{\sf Declutter} returns a set $\outputP\subseteq P$ such that $$d_H(K,\outputP)\leq 7\epsilon_k.$$
\end{theorem}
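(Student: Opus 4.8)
The plan is to mirror the argument given for Theorem~\ref{th:sparsifyingGuarantees} exactly, observing that its proof (via Lemmas~\ref{lem:globalDensity} and~\ref{lemma:QtoK}) never used the specific root-mean-square form of $\dPkk$ from Definition~\ref{def:kdist}. Rather, it used only two facts: that $\dPkk$ is $1$-Lipschitz (Claim~\ref{claim:dPkkLip}, invoked in several triangle-inequality steps), and that the nearest neighbor of a point $x$ lies within $\dPkk(x)$ of $x$, i.e. $d_\X(x,P) \le \dPkk(x)$. These are precisely conditions (B) and (A) assumed in the theorem statement. So the strategy is to re-verify the two lemmas under the abstract hypotheses and conclude as before.

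First I would re-examine the proof of Lemma~\ref{lem:globalDensity}. The opening step picks $x \in K$, uses Condition~1 of Def.~\ref{def:sampling} to get $\dPkk(x) \le \epsilon_k$, and then asserts the nearest neighbor $p_i$ of $x$ satisfies $d_\X(p_i, x) \le \dPkk(x)$. Under the original definition this was automatic; here it is exactly Condition~(A), since $d_\X(x,P) = d_\X(x, p_i) \le \dPkk(x)$. The remainder of the proof is a triangle-inequality chain bounding $d_\X(x, p_j)$, where the single inequality $\dPkk(p_i) \le \dPkk(x) + d_\X(p_i, x)$ is used; this is Condition~(B). Hence Lemma~\ref{lem:globalDensity} holds verbatim, giving $d_\X(x, q) \le 5\epsilon_k$.

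Next I would re-examine Lemma~\ref{lemma:QtoK}. It uses Condition~2 of the sampling definition (unchanged), the conclusion of Lemma~\ref{lem:globalDensity} (just re-established), and two applications of the $1$-Lipschitz property, namely $\dPkk(q) \le \dPkk(\bar p_i) + d_\X(\bar p_i, q)$ and the bound $\dPkk(\bar p_i) \le \epsilon_k$ coming from Condition~1. Both survive under (A) and (B), so the lemma again holds verbatim, yielding $d_\X(q, x) \le 7\epsilon_k$ for every $q \in Q$. Combining the two re-verified lemmas gives $d_H(K, Q) \le 7\epsilon_k$, which is the claim.

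I do not expect a genuine obstacle here, since the content is that the proof is robust to the choice of $\dPkk$; the only care needed is bookkeeping. The one place warranting explicit attention is confirming that \emph{every} inequality originally justified by the concrete $k$-distance reduces to (A) or (B) and to no other special property—in particular that no step secretly relied on, say, $\dPkk$ being an average or on a reverse inequality $\dPkk(x) \le d_\X(x, P) + (\text{something})$. A careful line-by-line audit of the two proofs confirms this, and the same audit applied to Lemmas~\ref{lem:adaptiveDensity} and~\ref{lem:closeness} would establish the corresponding adaptive analog of Theorem~\ref{thm:adaptiveHaus}, which is why the paper omits it.
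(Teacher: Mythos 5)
Your proposal is correct and is exactly the argument the paper intends: the paper's own ``proof'' consists of the remark that the arguments of Section~\ref{sec:decluttering} go through verbatim once one observes that Lemmas~\ref{lem:globalDensity} and~\ref{lemma:QtoK} use only $d_\X(x,P)\le\dPkk(x)$ and the $1$-Lipschitz property, i.e.\ conditions (A) and (B). Your line-by-line audit simply makes explicit the bookkeeping the paper leaves implicit, and correctly locates the one step (the nearest-neighbor bound at the start of Lemma~\ref{lem:globalDensity}) where condition (A) replaces a property that was automatic for the root-mean-square $k$-distance.
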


We remark that in \cite{chan2006spanners}, Chan et al. proposed to use the $k$-th NN-distance to generate the so-called $\varepsilon$-density net, where $k = \varepsilon n$. The criterion to remove points from $P$ to generate $Q$ in their procedure is slightly different from our 
{\sf Declutter} algorithm. However, it is easy to show that the output of their procedure (which is a $k/n$-density net) satisfies the same guarantee as the output of the {\sf Declutter} algorithm does (Theorem \ref{thm:generalDeclutter}). 

\myparagraph{Further extensions.}
One can in fact further relax the conditions on $\dPkk(x)$ or even on the input metric space $(\X, d_\X)$ such that the triangle inequality for $d_\X$ only approximately holds. In particular, we assume that 
\begin{description}
\item[Relaxation-1]. $d_\X(x,y) \le \cmone [d_\X(x, w) + d_\X(w, y)]$, for any $x, y, w \in \X$, with $\cmone \ge 1$. That is, the input ambient space $(X, d_\X)$ is almost a metric space where the triangle inequality holds with a multiplicative factor. 
\item[Relaxation-2]. $\dPkk(x) \le \cmtwo[\dPkk(y) + d_\X(x, y)]$, for any $x, y \in \X$ with $\cmtwo \ge 1$. That is, the $1$-Lipschitz condition on $\dPkk$ is also relaxed to have a multiplicative factor. 
\end{description}
We then obtain the following analog of Theorem \ref{th:sparsifyingGuarantees}: 
\begin{theorem}\label{thm:relaxDeclutter}
Suppose $(\X, d_\X)$ is a space where $d_\X$ satisfies {\sf Relaxation-1} above. 
Let $\dPkk$ be a robust distance function w.r.t. $P$ that satisfies {\sf Condition (A)} and {\sf Relaxation-2} defined above. 
Given a point set $P$ which is an \ensmp{\epsilon_k} of a compact 
set $K\subseteq\X$ under the choice of $\dPkk$, Algorithm 
{\sf Declutter} returns a set $\outputP\subseteq P$ such that
$$d_H(K,Q)\leq m\epsilon_k$$
where $m=max\left\{\cmtwo+\cmone\cmtwo+4\cmone\cmtwo^2+1,\frac{2+\cmone^2+4\cmone^2\cmtwo}{2-\cmone}\right\}.$
\end{theorem}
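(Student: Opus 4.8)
The plan is to mirror the proofs of Lemmas~\ref{lem:globalDensity} and \ref{lemma:QtoK} exactly, but to propagate the two relaxation constants $\cmone$ and $\cmtwo$ through every triangle inequality and every invocation of the Lipschitz property. The structure of the argument does not change: I would first prove an analog of Lemma~\ref{lem:globalDensity} (density: every point of $K$ has an output point nearby) and then an analog of Lemma~\ref{lemma:QtoK} (closeness: every output point is near $K$), and the final bound $m\epsilon_k$ would be the maximum of the two constants arising from these two steps. The two branches in the definition of $m$ strongly suggest precisely this: the first term $\cmtwo+\cmone\cmtwo+4\cmone\cmtwo^2+1$ is the density constant, and the second term $\frac{2+\cmone^2+4\cmone^2\cmtwo}{2-\cmone}$ is the closeness constant.

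For the density step, I would take $x\in K$ and let $p_i$ be its nearest neighbor in $P$. By {\sf Condition (A)}, $d_\X(x,p_i)=d_\X(x,P)\le\dPkk(x)\le\epsilon_k$. If $p_i\in\outputP$ we are done; otherwise $p_i$ was discarded because some $p_j\in Q_{i-1}$ lies in $B(p_i,2\dPkk(p_i))$, so $d_\X(p_i,p_j)\le 2\dPkk(p_i)$. Now instead of the clean triangle inequality I would write $d_\X(x,p_j)\le\cmone\bigl(d_\X(x,p_i)+d_\X(p_i,p_j)\bigr)$, and bound $\dPkk(p_i)$ using {\sf Relaxation-2}: $\dPkk(p_i)\le\cmtwo\bigl(\dPkk(x)+d_\X(x,p_i)\bigr)$. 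Substituting $d_\X(x,p_i)\le\epsilon_k$ and $\dPkk(x)\le\epsilon_k$ and collecting terms should reproduce the first constant, giving $d_\X(x,q)\le(\cmtwo+\cmone\cmtwo+4\cmone\cmtwo^2+1)\epsilon_k$ for some $q\in\outputP$ (the $+1$ coming from the direct $d_\X(x,p_i)$ contribution and the $4\cmone\cmtwo^2$ from the $2\cmone\cdot2\cmtwo$ in the $\dPkk(p_i)$ expansion).

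For the closeness step I would argue contrapositively as in Lemma~\ref{lemma:QtoK}: assuming $d_\X(p_i,\bar p_i)$ is large, I show $p_i\notin\outputP$ by exhibiting an already-selected $q\in Q_{i-1}$ inside $B(p_i,2\dPkk(p_i))$. This requires (i) Condition~2 of Def.~\ref{def:sampling} to lower-bound $\dPkk(p_i)$, (ii) the density result just proved to find $q$ near $\bar p_i$, (iii) {\sf Relaxation-2} to show $\dPkk(q)\le\dPkk(p_i)$ so that $q$ precedes $p_i$ in the sort order, and (iv) the relaxed triangle inequality to verify $d_\X(p_i,q)<2\dPkk(p_i)$. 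The key obstacle, and the reason the second constant has the denominator $2-\cmone$, is step~(iv): with the multiplicative triangle inequality the chain $d_\X(p_i,q)\le\cmone(d_\X(p_i,\bar p_i)+d_\X(\bar p_i,q))$ now carries a factor $\cmone$ on the $d_\X(p_i,\bar p_i)$ term, which itself is comparable to $\dPkk(p_i)$; closing the self-referential inequality $d_\X(p_i,q)<2\dPkk(p_i)$ forces $\cmone<2$ and yields a division by $2-\cmone$. I expect this self-consistency closure to be the main technical pinch point, and I would set the threshold on $d_\X(p_i,\bar p_i)$ to be exactly the value that makes both the ordering condition and the ball-membership condition hold simultaneously, then read off $m$ as the larger of the two constants.
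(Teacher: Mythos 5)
Your overall plan is the right one and matches what the paper intends (the paper states Theorem~\ref{thm:relaxDeclutter} without an explicit proof, deferring to the arguments of Section~\ref{sec:decluttering}), and you correctly isolate the two technical pinch points in the closeness step: the ordering condition $\dPkk(q)\le \dPkk(p_i)$ and the self-referential ball-membership inequality that forces $\cmone<2$ and produces the $2-\cmone$ denominator. However, your accounting of where the two terms of $m$ come from is off, and the constant you claim for the density step does not survive the computation you describe. Carrying it out, the density step gives $\dX{x}{p_j}\le \cmone\bigl(\dX{x}{p_i}+2\dPkk(p_i)\bigr)\le \cmone\bigl(\epsilon_k+4\cmtwo\epsilon_k\bigr)=\cmone(1+4\cmtwo)\epsilon_k$, not $(\cmtwo+\cmone\cmtwo+4\cmone\cmtwo^2+1)\epsilon_k$; note that $2\cmone\cdot 2\cmtwo=4\cmone\cmtwo$, not $4\cmone\cmtwo^2$, and the ``direct contribution'' of $\dX{x}{p_i}$ also picks up a factor $\cmone$ rather than contributing the $+1$. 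The first term of $m$ actually arises \emph{inside the closeness step}, from the ordering condition: writing $D:=\cmone(1+4\cmtwo)$ for the density constant, Relaxation-2 gives $\dPkk(q)\le\cmtwo\bigl(\dPkk(\bar p_i)+\dX{\bar p_i}{q}\bigr)\le\cmtwo(1+D)\epsilon_k=(\cmtwo+\cmone\cmtwo+4\cmone\cmtwo^2)\epsilon_k$, and since Condition~2 only yields $\dPkk(p_i)>\dX{p_i}{\bar p_i}-\epsilon_k$, the threshold on $\dX{p_i}{\bar p_i}$ must be at least $(\cmtwo+\cmone\cmtwo+4\cmone\cmtwo^2+1)\epsilon_k$ --- the extra factor of $\cmtwo$ turning $4\cmone\cmtwo$ into $4\cmone\cmtwo^2$ is exactly what your guess misses. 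The ball-membership condition then gives the second requirement $m\ge 1+\frac{\cmone(1+D)}{2-\cmone}=\frac{2+\cmone^2+4\cmone^2\cmtwo}{2-\cmone}$. So both terms in the $\max$ are thresholds for the closeness step, and the density constant $D$ is dominated by the first of them because $\cmtwo\ge1$; the Hausdorff bound $m\epsilon_k$ then follows. None of this changes your strategy, but redo the constant-chasing along these lines before writing it up.
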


Finally, we remark that a different way to generalize the 1-Lipschitz condition for $\dPkk(x)$ is by asserting $\dPkk(x) - \dPkk(y) \le \cmtwo d_\X(x,y)$. We can use this to replace  {\sf Relaxation-2} and obtain a similar guarantee as in the above theorem. 
We can also further generalize {\sf Relaxation-2} by allowing an additive term as well. We omit the resulting bound on the output of the {\sf Declutter} algorithm.

\section{Experimental results}
\label{appendix:sec:exp}

In this section, we provide more details 
of our empirical results, an abridged version of which
already appeared in the main text. 
We start with the decluterring algorithm.
This algorithm needs the input of a parameter $k$.
This parameter has a direct influence on the result.
On one hand, if $k$ is too small, not all noisy points are removed from the sample.
On the other hand, if $k$ is too large, we remove too many points and end up with a very sparse sample that is unable to describe the underlying object precisely.

\paragraph*{Experiments for {\sf Declutter} algorithm.} 
Figure~\ref{fig:declutterfail} presents results of Algorithm {\sf Declutter}  for the so-called Botijo example.
In this case, no satisfying $k$ can be found.
A parameter $k$ that is sufficiently large to remove the noise creates an output set that is too sparse to describe the ground truth well.

\begin{figure}[!ht]
\centering
\includegraphics[width=.24\textwidth]{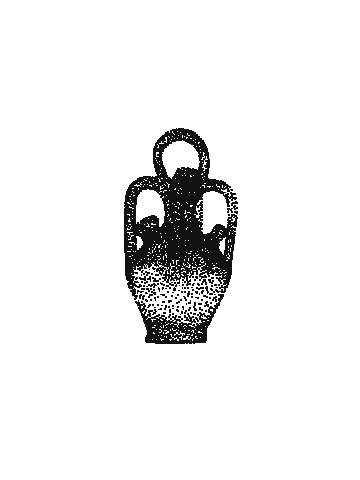}
\includegraphics[width=.24\textwidth]{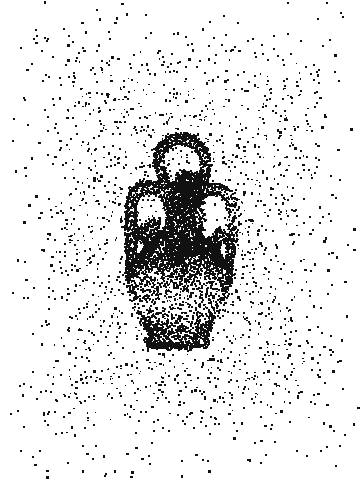}
\includegraphics[width=.24\textwidth]{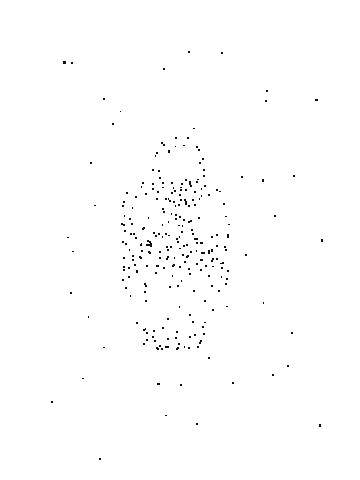}
\includegraphics[width=.24\textwidth]{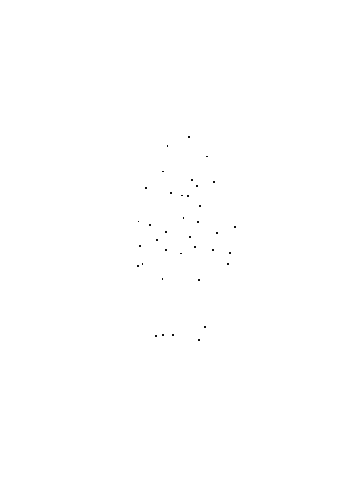}
\caption{From left to right, the ground truth, the noisy input and the output of Algorithm {\sf Declutter} for $k=81$ and $k=148$}\label{fig:declutterfail}
\end{figure}

We further illustrate the behavior of our algorithm by looking at the Hausdorff distance between the output and the ground truth, and at the cardinality of the output, in the function of $k$ (Figure~\ref{fig:decluttergraph}).
Note that the Hausdorff distance drops suddenly when we remove the last of the outliers.
However, it is already too late to represent the ground truth well
as only a handful of points are kept at this stage.
While sparsity is often a desired property, here it becomes a hindrance as we are no longer able to describe the underlying set.

\begin{figure}[!ht]
\centering
\includegraphics[height=.35\textwidth]{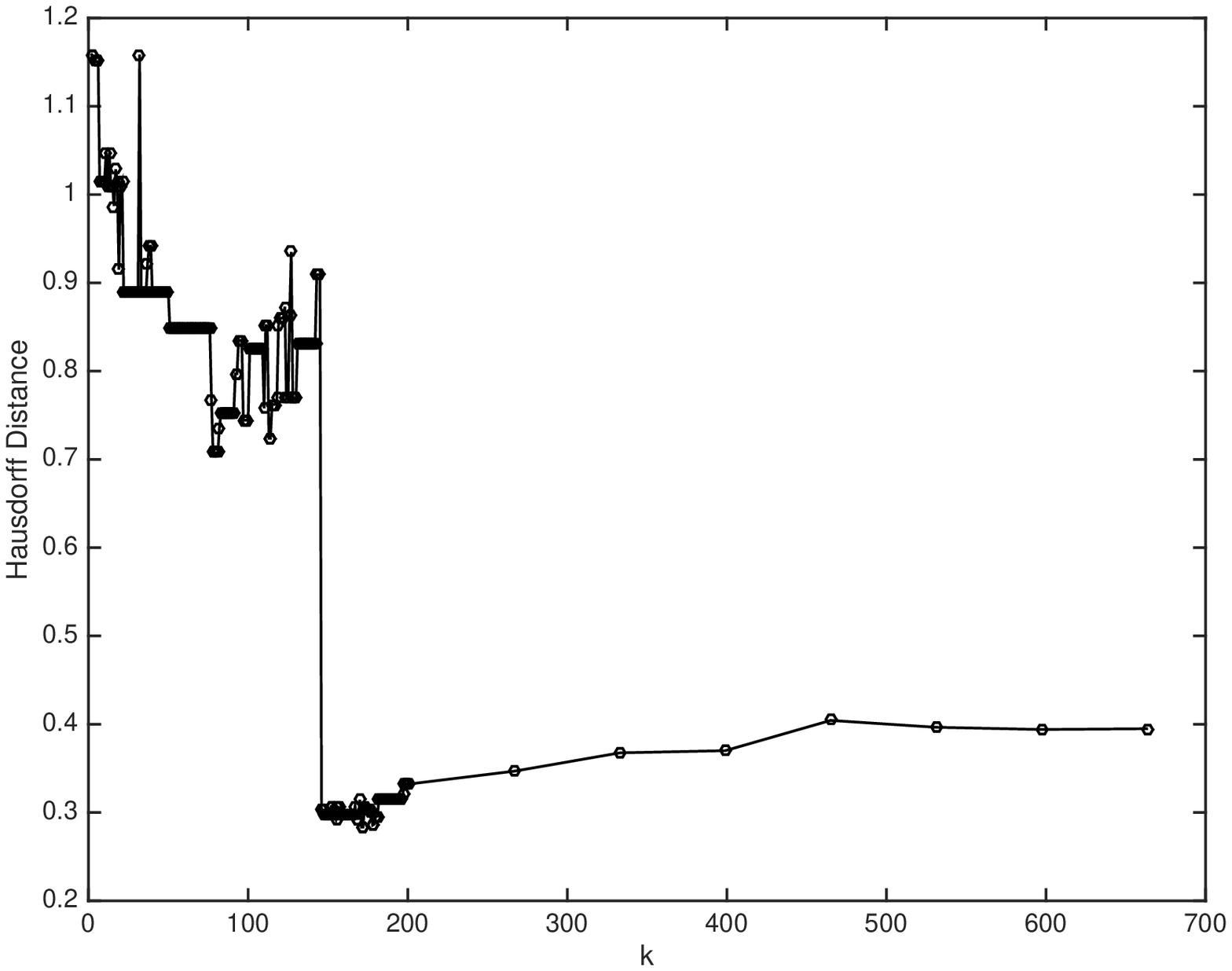}
\includegraphics[height=.35\textwidth]{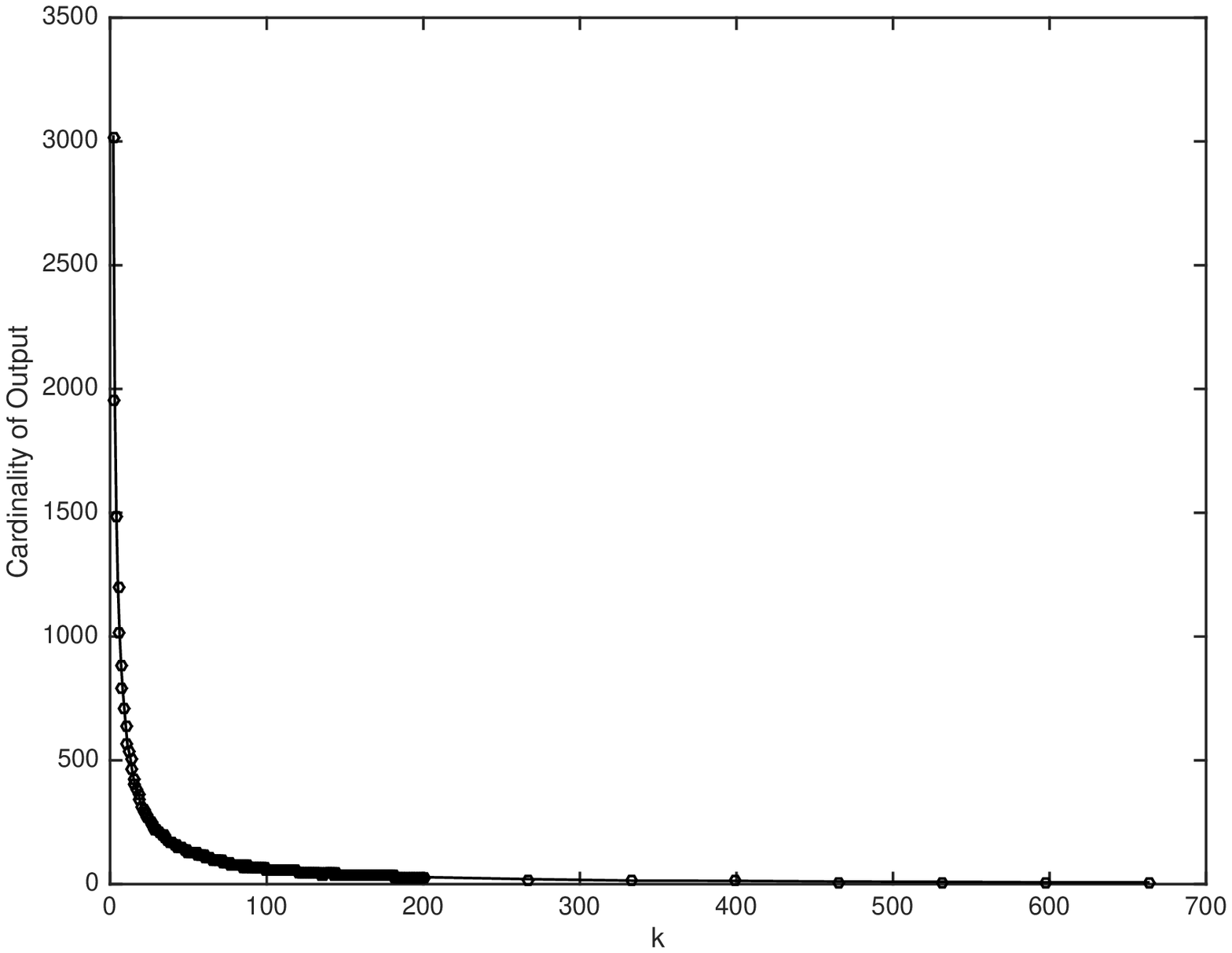}
\caption{Hausdorff distance between the ground truth and the output of the declutter algorithm, and cardinality of this output in the function of $k$.}
\label{fig:decluttergraph}
\end{figure}

The introduction of the resample step allows us to solve this sparsity problem.
If we were able to choose the right parameter $k$, we could simply sparsify and then resample to get a good output.
One can hope that the huge drop in the left graph could be used to choose the parameter.
However, the knowledge of the ground truth is needed to compute it, and estimating the Hausdorff distance between a set and the ground truth is impossible without some additional assumptions like the uniformity we use.

A second example is given in Figure \ref{fig:gps-two}. 
Here, the input data is obtained from a set of noisy GPS trajectories in the city of Berlin. In particular, given a set of trajectories (each modeled as polygonal curves), we first convert it to a density field by KDE (kernel density estimation). We then take the input as the set of grid points in 2D where every point is associated with a mass (density). Figure \ref{fig:gps-two} (a) shows the heat-map of the density field where light color indicates high density and blue indicates low density. 
In (b) and (c), we show the outputs of  {\sf Declutter} algorithm (the {\sf ParfreeDeclutter} algorithm would not provide good results as the input is highly non-uniform) for $k$ = 40 and $k=75$ respectively. 
In (d), we show the set of 40$\%$ points with the highest density values. 
The sampling of the road network is highly non-uniform. In particular, in the middle portion, even points off the roads have very high density (due to noisy input trajectories) as well. Hence a simple thresholding cannot remove these points and the output in (d) fills the space between roads in the middle portion. If we increase the threshold further, that is, reduce the number of points we want to keep in the thresholding, we will lose structures of some main roads. Our {\sf Declutter} algorithm can capture the main road structures without collapsing nearby roads in the middle portion, although it also sparsifies the data. 

\begin{figure}
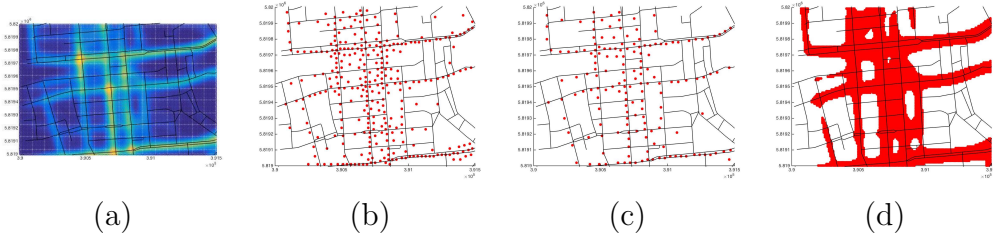

\begin{tabular}{cccc}
        \includegraphics[width=3cm]{15k_gt} &
        \includegraphics[width=3cm]{k40_316} &
        \includegraphics[width=3cm]{k75_168} &
        \includegraphics[width=3cm]{15k_04thresh} \\
(a) & (b) & (c) & (d)
\end{tabular}
    \caption{(a)The heat-map of a density field generated from GPS traces. There are around 15k (weighted) grid points, serving as an input point set. The output of Algorithm {\sf Declutter} when (b) $k=40$ and (c) $k=75$, (d) Thresholding of 40\% points with the highest density.}
    \label{fig:gps-two}
\end{figure}

\paragraph*{Experiments on {\sf ParfreeDeclutter} algorithm.} 
We will now illustrate our parameter-free denoising algorithm on several examples. 
Recall that the theoretical guarantee of the output of our parameter-free algorithm (i.e, {\sf ParfreeDeclutter} algorithm) so far is only provided for samples satisfying some uniformity conditions. 

We start with some curves in the plane.
Figure~\ref{fig:1d} shows the results on two different inputs.
In both cases, the curves have self-intersections.
The noisy inputs are again obtained by moving every input point according to a Gaussian distribution and adding some white background noise.
The details of the noise models can be found in Table~\ref{tab:noise} 
and the details on the size of the various point sets are given in Table~\ref{tab:cardinality}.

The first steps of the algorithm remove the outliers lying further away from the ground truth.
As the value of the parameter $k$ decreases, we remove nearby outliers.
The result is a set of points located around the curves, in a tubular neighborhood of width that depends on the standard deviation of the Gaussian noise.
Small sharp features are lost due to the blurring created by the Gaussian noise but the Hausdorff distance between the final output and the ground truth is as good as one can hope for when using a method oblivious of the ground truth.

\begin{figure}[!ht]
\centering
\includegraphics[height=.24\textwidth]{groundtruth}
\includegraphics[height=.24\textwidth]{noise}
\includegraphics[height=.24\textwidth]{resamplek128}
\includegraphics[height=.24\textwidth]{resamplek64}
\includegraphics[height=.24\textwidth]{resamplek1}
\includegraphics[height=.24\textwidth]{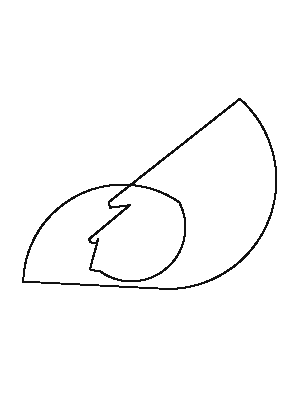}
\includegraphics[height=.24\textwidth]{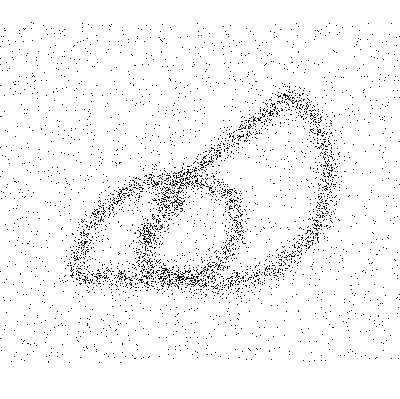}
\includegraphics[height=.24\textwidth]{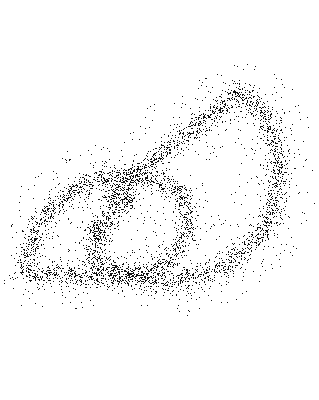}
\includegraphics[height=.24\textwidth]{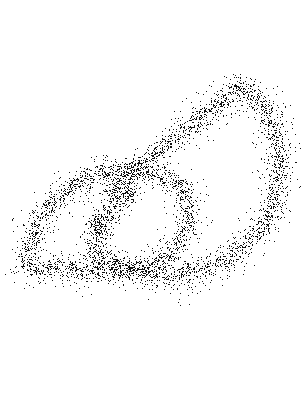}
\includegraphics[height=.24\textwidth]{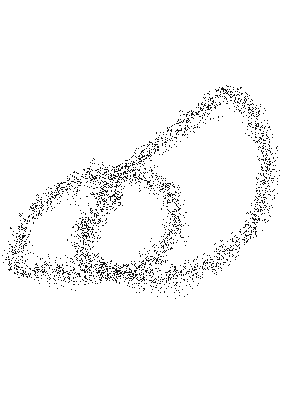}
\caption{Results of Algorithm {\sf ParfreeDeclutter} on two samples of one dimensional compact sets.
From left to right, the ground truth, the noisy input, two intermediate steps of the algorithm, and the final result.
}
\label{fig:1d}
\end{figure}

 Figure~\ref{fig:1dnotwork} gives an example of an adaptive sample where Algorithm {\sf ParfreeDeclutter} doesn't work.
 The ground truth is a heptagon with its vertices being connected to the center. 
Algorithm {\sf ParfreeDeclutter} doesn't work in this case because the sample is highly non-uniform and the ambient noise is very dense (63.33\% is ambient noise).
 The center part of the graph is significantly denser than other parts as the center vertex has a larger degree. 
 So the sparser parts (other edges of the star and heptagon) are regarded as noises by Algorithm {\sf ParfreeDeclutter} and thus removed.

 \begin{figure}[!ht]
 \centering
 \includegraphics[width=.24\textwidth]{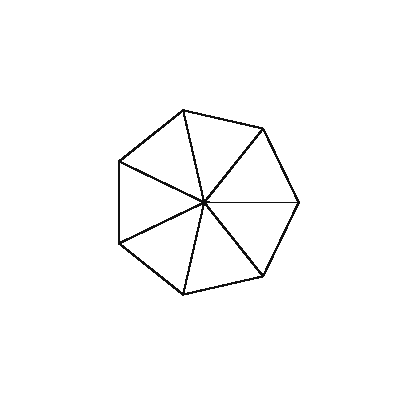}
 \includegraphics[width=.24\textwidth]{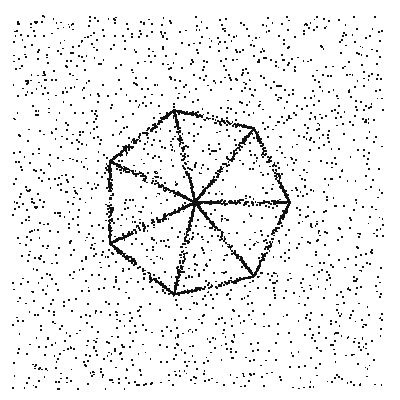}
 \includegraphics[width=.24\textwidth]{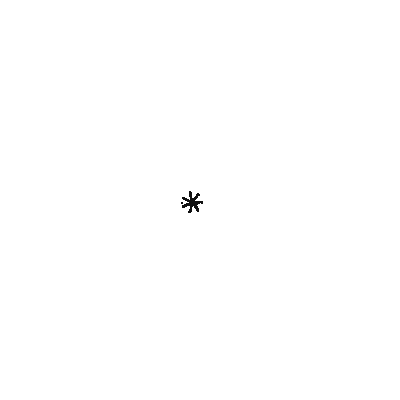}
 \includegraphics[width=.24\textwidth]{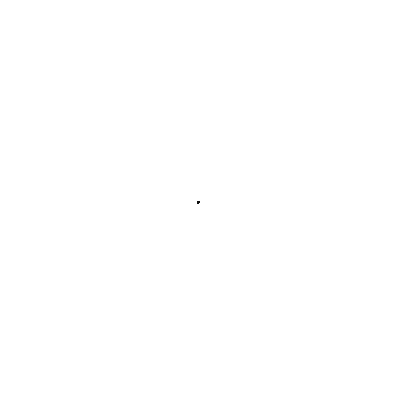}
 \caption{ A case where Algorithm {\sf ParfreeDeclutter} doesn't work.
 From left to right, the ground truth, the noisy input, an intermediate step of the algorithm and the final result.
 }
 \label{fig:1dnotwork}
 \end{figure}

Figure~\ref{fig:3dmanifold} (Figure \ref{fig:3dmanifoldone} in the main text) presents results obtained on an adaptive sample of a 2-manifold in 3D.
We consider again the so-called Botijo example with an adaptive sampling.
Contrary to the previous curves that were sampled uniformly, the density of this point set depends on the local feature size.
We also generate the noisy input the same way, adding a Gaussian noise at each point that has a standard deviation proportional to the local feature size.
Despite the absence of theoretical guarantees for the adaptive setting, Algorithm {\sf ParfreeDeclutter} removes the outliers while maintaining the points close to the ground truth.

\begin{figure}[!ht]
\centering
\includegraphics[height=.4\textwidth]{declutter2groundtruth01}
\includegraphics[height=.4\textwidth]{declutter2noise02}
\includegraphics[height=.4\textwidth]{botijo1024}
\includegraphics[height=.4\textwidth]{botijo256}
\includegraphics[height=.4\textwidth]{botijo2}
\caption{Experiment on a two dimensional manifold.
From left to right, the ground truth, the noisy input, two intermediate steps of Algorithm {\sf ParfreeDeclutter} and the final result.
}
\label{fig:3dmanifold}
\end{figure}

\begin{table}[!ht]
\centering
\begin{tabular}{ c c c c c}
  \hline
  \hline
  Figure & Standard deviation of Gaussian & Size of ambient noise (percentage)  \\ \hline
  Figure~\ref{fig:1d} first row& 0.05 & 2000 (37.99\%)  \\
  Figure~\ref{fig:1d} second row& 0.05 & 2000 (45.43\%)  \\
  Figure~\ref{fig:3dmanifold} & 0.1 & 2000 (28.90\%)\\
  \hline
  \hline
\end{tabular}
\caption{Parameter of the noise model for Figure~\ref{fig:1d} and Figure~\ref{fig:3dmanifold}}
\label{tab:noise}
\end{table}

\begin{table}[!ht]
\centering
\begin{tabular}{ c c c c c c}
  \hline
  \hline
  Figure & Sample & Ground truth & Noise input & Intermediate steps & Final result  \\ \hline
  Figure~\ref{fig:1d} first row& uniform & 5264 & 7264 &6026 \hspace{.5cm} 5875& 5480 \\
  Figure~\ref{fig:1d} second row& uniform & 4402 & 6402 & 5197 \hspace{.5cm} 4992 & 4475 \\
  Figure~\ref{fig:3dmanifold}& adaptive & 6921 & 8921 & 7815 \hspace{.5cm} 7337 & 6983 \\
  \hline
  \hline
\end{tabular}
\caption{Cardinality of each dataset in Figure~\ref{fig:1d} and Figure~\ref{fig:3dmanifold}}
\label{tab:cardinality}
\end{table}

Finally, our last example is on a high dimensional data set.
We use subsets of the MINIST database.
This database contains handwritten digits.
We take all "1" digits (1000 images) and add some images from other digits to constitute the noise.
Every image is a $28\times28$ matrix and is considered as a point in dimension $784$.
We then use the $L_2$ metric between the images. 
Table~\ref{tab:highdimension} contains our experiment result.
Our algorithm partially removes the noisy points as well as a few good points.
If we add some random points in our space, we no longer encounter this problem,
which means if we add points with every pixel a random number, then we can remove all noises without removing any good points.

\begin{table}[!ht]
\centering
\begin{tabular}{ c c c c}
  \hline
  \hline
  Ground truth & Noise & Images removed after sampling & Digit 1 removed  \\ \hline
  1000 digit 1 & 200 digit 7 & 85 & 5 \\
  1000 digit 1 & 200 digit 8 & 94 & 5 \\
  1000 digit 1 & 200 digit 0-9 except 1 & 126 & 9 \\
  \hline
  \hline
\end{tabular}
\caption{Experiment on high-dimension datasets. The third and forth columns show number of corresponding images. }
\label{tab:highdimension}
\end{table}

\paragraph*{Denoising for data classification.} 
Finally, we apply our denoising algorithm as a pre-processing for high-dimensional data classification. Specifically, here we use MNIST data sets, which is a database of handwritten digits from '0' to '9'. 
Table \ref{tab:17cls-two} shows the experiment on digit 1 and digit 7. We take a random collection of 1352 images of digit '1' and 1279 images of digit '7' correctly labeled as training set, and take 10816 number of images   of digit 1 and digit 7  as testing set. Each of the image is $28 \times 28$ and thus can be viewed as a vector in $\mathbb{R}^{784}$. We use the $L_1$ metric to measure distance between such image-vectors. We use a linear SVM to classify the 10816 testing images. The classification error rate for the testing set is $0.6564\%$,  shown in the second row of Table \ref{tab:17cls-two}. 

Next, we artificially add two types of noises to input data: the {\it swapping-noise} and the {\it background-noise}. 

The swapping-noise means that we randomly mislabel some images of `1' as '7', and some images of `7' as '1'. As shown in the third row of Table \ref{tab:17cls-two}, the classification error increases to about $4.096\%$ after such mislabeling in the training set. 

Next, we apply our {\sf ParfreeDeclutter} algorithm to the training set (to the images with label '1' and those with label '7' separately) to first clean up the training set. As we can see in Row-6 of Table \ref{tab:17cls-two}, we removed most images with a mislabeled `1' which means the image is '7' but it is labeled as '1'. We then use the denoised dataset as the training set, and improved the classification error to $2.45\%$. 

While our denoising algorithm improved the classification accuracy, we note that it does not remove many mislabeled `7's from the set of images of digit `7'. The reason is that the images of `1' are significantly more clustered than those of digit `7'. Hence the set of images of `1's labelled as `7' themselves actually form a cluster; those points actually have even smaller k-distance than the images of `7' as shown in Figure \ref{fig:digits1and7}, and thus are considered to be signal by our denoising algorithm. 
\begin{figure}[H]
\centering
\includegraphics[height=0.3\textwidth]{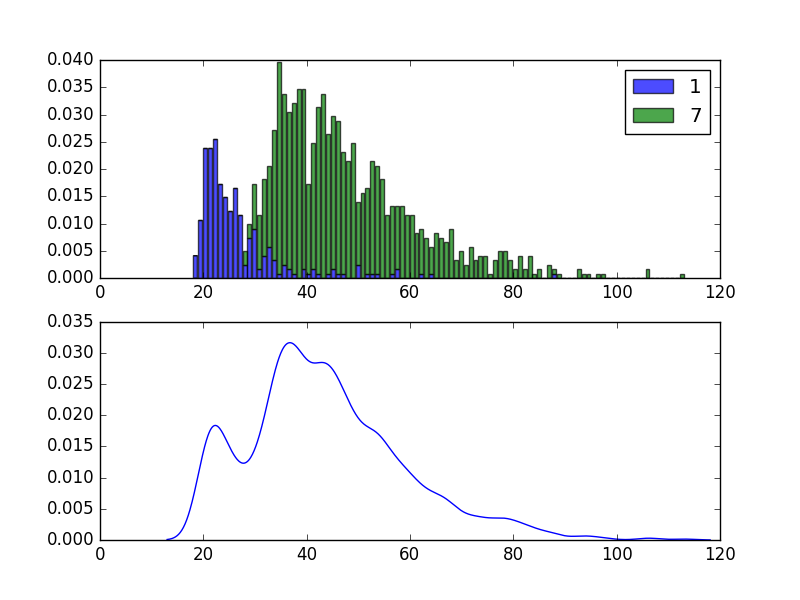}
\includegraphics[height=0.3\textwidth]{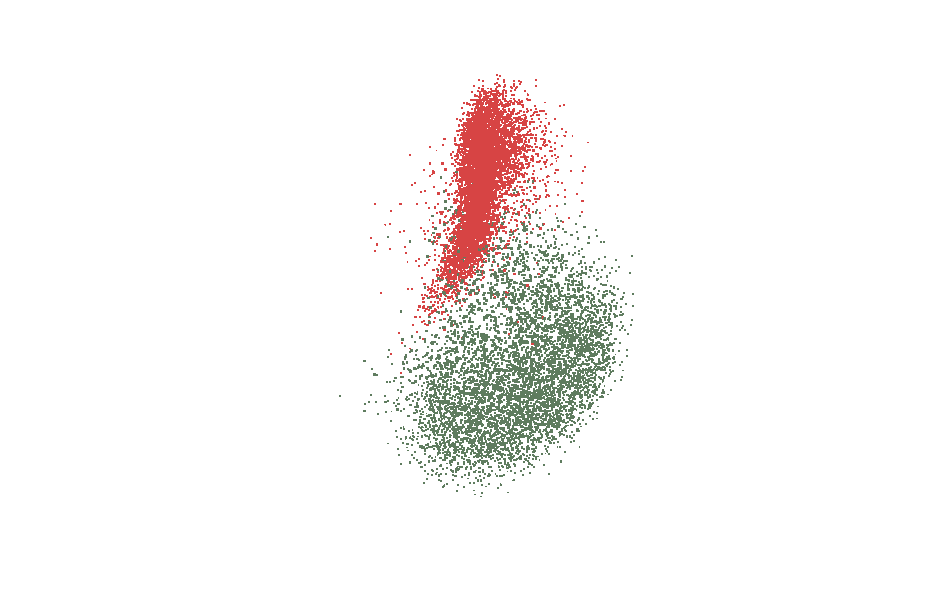}
\caption{Left: $k$-distance density distribution of digit 7s with digit 1s as the noise when $k=32$: Note, the points corresponding to images of digit `1' actually have a smaller $k$-distance than those of digit `7'. Right: Graph after being reduced to 3 dimension by PCA, the red one is for 1s, the green one is for 7s.}
\label{fig:digits1and7}
\end{figure}

The second type of noise is the {\it background noise}, where we replace the black backgrounds of a random subset of images in the training set (250 `1's and 250 `7's) with some other grey-scaled images. Under such noise, the classification error increases to $1.146\%$. Again, we perform our {\sf ParfreeDeclutter} algorithm to the training set separately, and use the denoised data sets as the new training set. The classification error is then improved to $0.7488\%$.

%
 \setcounter{magicrownumbers}{0} 
\begin{center}
\begin{table}[H]
\centering
\resizebox{\columnwidth}{!}{\hspace*{\fill}

    \begin{tabular}{|c |c | c | c | c | c | p{1.5cm} |}
    \hline
    \rownumber&\multicolumn{5}{c|}{}&Error(\%) \\ \hline
   \rownumber& Original & \multicolumn{2}{c|}{\# Digit 1\quad 1352} &   \multicolumn{2}{c|}{\# Digit 7\quad 1279}  &0.6564 \\ \hline 
 \multicolumn{7}{c}{}
   \\ \hline
      \rownumber&Swap. Noise& \multicolumn{2}{c|}{\# Mislabelled 1\quad 270} & \multicolumn{2}{c|}{\# Mislabelled 7\quad 266} &4.0957 \\ \hline

   \rownumber&&\multicolumn{2}{c|}{Digit 1}&\multicolumn{2}{c|}{Digit 7}& \\ \hline
    \rownumber&&\# Removed&\# True Noise&\# Removed&\# True Noise &\\ \hline
    \rownumber& L1 Denoising&314&264&17&1&2.4500\\  \hline

  \end{tabular}
  }
         \bigbreak
        \resizebox{\columnwidth}{!}{
        \begin{tabular}{|c| c | c | c | c | c | p{1.5cm} |}
        \hline
    \rownumber&Back. Noise& \multicolumn{2}{c|}{\# Noisy 1\quad 250} &\multicolumn{2}{c|}{\# Noisy 7\quad 250}  &1.1464 \\ \hline
    \rownumber& &\multicolumn{2}{c|}{Digit 1}&\multicolumn{2}{c|}{Digit 7}& \\ \hline
    \rownumber&&\# Removed&\# True Noise&\# Removed&\# True Noise &\\ \hline
     \rownumber&L1 Denoising&294&250&277&250&0.7488\\ \hline
    \end{tabular}
    }
    \caption{Results of denoising on digit 1 and digit 7 from the MNIST. }
    \label{tab:17cls-two}
  \end{table} 
\end{center}

%
%
%
%
%
%

\end{document}